\algnewcommand\algorithmicinput{\textbf{INPUT: }}
\algnewcommand\Input{\item[\algorithmicinput]}
\algnewcommand\algorithmicoutput{\textbf{OUTPUT: }}
\algnewcommand\Output{\item[\algorithmicoutput]}
\DeclareMathOperator*{\argmax}{arg\,max}
\DeclareMathOperator*{\argmin}{arg\,min}
\begin{document}

%
\title{Fast Beam Alignment via Pure Exploration in Multi-armed Bandits}
%
%
%
\author{Yi Wei, Zixin Zhong, and Vincent Y. F. Tan, {\em Senior Member, IEEE}
\thanks{Y.~Wei is with the College of Information Science and Electronic Engineering, Zhejiang University and the Zhejiang Provincial Key Laboratory of Information Processing, Communication and Networking (IPCAN), Hangzhou 310027, China
(email: 21731133@zju.edu.cn). Z.~Zhong is with the Department of Computing Science, University of Alberta  (email:  zzhong10@ualberta.ca). V.~Y.~F.~Tan is with Department of Mathematics and the Department of Electrical and Computer Engineering, National University of Singapore (email:  vtan@nus.edu.sg).}
\thanks{This work is funded by a Singapore Ministry of Education Tier 1 grant (A-0009042-01-00),  the Fundamental Research Funds for the Central Universities 226-2022-00195, and the Defense Industrial Technology Development Program under Grant JCKY2020210B021. }
\thanks{This paper was presented in part at the 2022 International Symposium on Information Theory~\cite{wei2022}.}

}


\IEEEpeerreviewmaketitle
\maketitle

\newcommand\sFor[2]{ \For{#1}#2\EndFor} 
\newcommand\sIf[2]{ \If{#1}#2\EndIf}
\newtheorem{theorem}{\bf Theorem}
\newtheorem{lemma}{\bf Lemma}
\newtheorem{property}{\bf Property}

\begin{abstract}
The beam alignment (BA) problem consists in accurately aligning the transmitter and receiver beams to establish a reliable communication link in wireless communication systems.
 Existing BA methods search the entire beam space to identify the optimal transmit-receive beam pair. This incurs a significant latency when the number of  antennas is large. In this work, we develop a bandit-based fast BA algorithm to reduce BA latency for millimeter-wave (mmWave) communications. Our algorithm is named {\em Two-Phase Heteroscedastic Track-and-Stop} (2PHT\&S). We first formulate the BA problem as a pure exploration problem in multi-armed bandits in which the objective is  to minimize the required number of time steps given a certain fixed confidence level. By taking advantage of the correlation structure among beams that the information from nearby beams is similar and the heteroscedastic property that the variance of the reward of an arm (beam) is related to its mean, the proposed algorithm groups all beams into several beam sets such that the optimal beam set is first selected and the optimal beam is identified in this set after that. {Theoretical analysis and simulation results on synthetic and semi-practical   channel data  demonstrate the clear superiority of the proposed algorithm vis-\`a-vis other baseline competitors.}
\end{abstract}

\begin{IEEEkeywords}
Beam alignment, beam selection, mmWave communication, multi-armed bandits.
\end{IEEEkeywords}

\section{Introduction}
In millimeter-wave (mmWave) communications, the beams at both the transmitter and receiver are narrow directional.
{ The beam alignment (BA) problem consists in  ensuring  that the transmitter and receiver beams are accurately aligned to establish a reliable communication link \cite{8048526,8458146}.  An optimal transmitter (receiver) beam or a pair of transmitter and receiver beam is required to be selected to maximize the overall capacity,  throughput, SNR or diversity. }
{To achieve this goal, a na\"ive exhaustive search method scans all
the beam space and hence causes significant BA latency.
{ Specifically,  there are two fundamental challenges to implement BA:
(1) the  amount of additional channel state information
(CSI) corresponding to each beam (pair) is large; (2) the frequency of the channel measurement\cite{7947209,7990158} is high.}
These challenges become even more difficult to overcome when a large number of antennas are employed at both of the transmitter and receiver.
{ Moreover, the change of antenna orientation, the effect of transmitter/reciever mobility, and the dynamic nature of the wireless channel will result in previously found optimal beam pairs to be
suboptimal over time, which further exacerbates the latency problem \cite{7959169,8368998}.}}

Therefore, the design of  fast  BA algorithms is of paramount importance, and has stimulated intense research interest.
By utilizing the sparsity  of the mmWave channel, { Marzi {\em et al.} \cite{7390019}} incorporated compressed sensing techniques in the BA problem to reduce the beam measurement complexity.
{ Wang {\em et al.} \cite{5262295}} developed a fast-discovery multi-resolution beam search method, which first probes the wide beam  and continues to narrow beams
until the optimal arm is identified. However,  the beam resolution needs to be adjusted at each step.
Besides, various forms of side information, e.g., location information \cite{8852637}, out-of-band measurements \cite{8114345}, and dedicated short-range communication \cite{7786130}, have also been used to reduce the required number of probing beams.

Due to its inherent ability to balance between exploration and exploitation,
 multi-armed bandit (MAB) theory has been recently utilized in wireless communication field, such as channel access in cognitive network \cite{4723352}, channel allocation \cite{5535151} and  adaptive modulation/coding \cite{5518773}, etc., and   harnessed to address the BA problems; see \cite{6574205,8723104,8472783,8486279,8842625,8662770,9013578333}.
In a single-user MIMO system, the work \cite{6574205}   employed the canonical upper confidence bound (UCB) bandit algorithm in beam selection, where the instantaneous full CSI  is not required.
The work \cite{8723104} applied linear Thompson sampling (LTS) to address the beam selection problem.
{ For  mmWave vehicular systems, Sim {\em et al.}~\cite{8472783} developed an online learning algorithm
to address  the problem of beam selection with environment-awareness based on contextual bandit theory.} The proposed method explores different
beams over time while accounting for contextual information (i.e., vehicles' direction of arrival). Similarly, {Hashemi {\em et al.} \cite{8486279}}  proposed to  maximize the directivity gain (i.e., received energy) of the BA policy within a time period using the contextual information (i.e., the inherent correlation and unimodality properties), and formulate the BA problem as contextual bandits.
{ Wu {\em et al.} \cite{8842625} provided a method to quickly and accurately align beams in multi-path channels for a point-to-point mmwave
system, which
takes advantage of the correlation structure among beams such
that the information from nearby beams is extracted to identify
the optimal beam, instead of searching the entire beam space.}
In \cite{8662770}, { Va {\em et al.}} used a UCB-based framework to develop the online learning algorithm for beam pair selection
and refinement, where the algorithm first learns coarse beam directions in some predefined beam
codebook, and then fine-tunes the
identified directions to match the peak of the power angular spectrum at that position.
{ Hussain {\em et al.}}  \cite{9013578333} also designed a novel beam pair alignment scheme based on Bayesian multi-armed bandits, with the goal of maximizing the alignment probability and the data-communication
throughput.

In this paper, different from existing methods, we first formulate the BA problem as a {\em pure exploration} problem in the theory of MABs, {called {\em bandit BA problem}},
with the objective of minimizing the required time steps to find the optimal beam pair with a prescribed probability of success (confidence level).
{ Second, we derive a lower bound on the expected time complexity of {\em any} algorithm designed to solve the BA problem.} Third,
we present a bandit-based fast BA algorithm, which we name
{\em Two-Phase Heteroscedastic Track-and-Stop} (2PHT\&S). We do so
 by exploiting the correlation among beams and the heteroscedastic property that
the variance of the reward of an arm (beam) is linearly related to its mean. { Instead of searching over the entire
beam space, the proposed algorithm groups all beams into several beam sets
such that the optimal beam set is first selected and the optimal
beam is identified within this set during the second phase.
 We derive an upper bound on the time complexity of the proposed 2PHT\&S algorithm. Finally, extensive numerical results demonstrate that 2PHT\&S significantly reduces the required number of time steps to identify the optimal beam compared to existing algorithms in both simulated and semi-practical channel data scenarios.}

The remainder of this paper is organized as follows. The system model and problem
setup are presented in Section II. Section III proposes our bandit-based
fast BA algorithm and its time complexity upper bound. Simulation results are given in Section IV, and Section V concludes this paper.

\section{Problem Formulation}
In this section, we first describe the system model and propose our problem setup, after that we provide a generic  lower bound of the time complexity of the problem.
\subsection{System Model}
\begin{figure}[t]
\renewcommand{\captionfont}{\small}
\centering
\includegraphics[scale=.48]{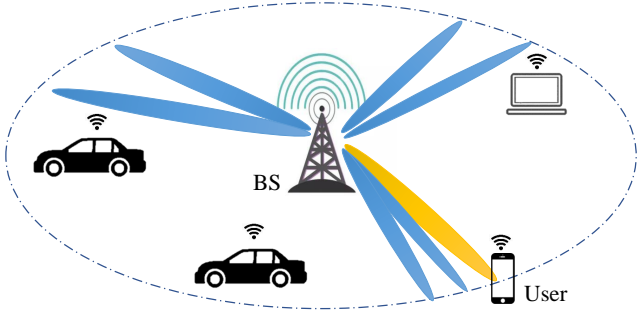}
\caption{A mmWave massive MISO system system.}
\label{system}
\normalsize
\end{figure}
As shown in Fig.~\ref{system}, we consider a massive mmWave MISO system, where a base station (BS) equipped with $N$ transmit antennas serves a single-antenna user. According to \cite{7967837}, each transmitting beam is selected from a predefined analog beamforming codebook $\mathcal{C}$ of size $K$, which can be defined as
\begin{equation}\label{codebook}
\mathcal{C}\triangleq \left\{\mathbf{f}_{k}=\mathbf{a}(-1+2 k / K) \mid k=0,1,2, \ldots, K-1\right\}
\end{equation}
with $\mathbf{a}(\cdot)$ denoting the array response vector. For a typical uniform linear array, it holds that
$
\mathbf{a}(x)=\frac{1}{\sqrt{N}}\big[1, e^{j \frac{2 \pi}{\lambda} d x}, e^{j \frac{2 \pi}{\lambda} 2 d x}, \ldots, e^{j \frac{2 \pi}{\lambda}(N-1) d x}\big]^{\mathrm{H}}\in \mathbb{C}^{N\times 1},
$
where $\lambda$ is the wavelength and $d$ is the antenna spacing.
{Note that if $K\le N$, the beams in the codebook $\mathcal{C}$ are linearly independent of one another.}
{ Based on channel measurement studies done in \cite{9689054,7857002}, the mmWave channel follows a multipath channel model, and the number of propagation paths is small. }

\begin{figure*}[t]
\renewcommand{\captionfont}{\small}
\centering
\includegraphics[scale=.6]{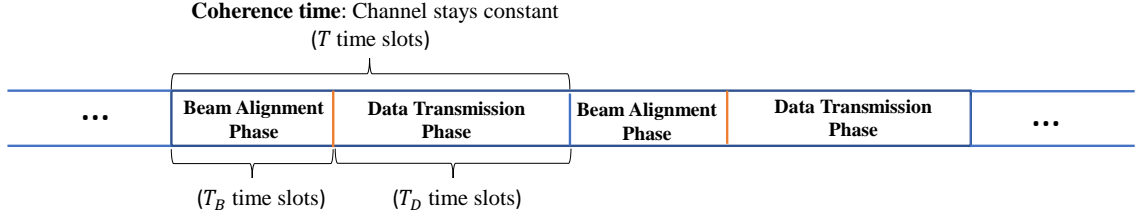}
\caption{Transmission scheme.}
\label{TP}
\normalsize
\end{figure*}
We consider a quasi-static channel, where the channel stays unchanged for a period covariance interval which consists of $T$ time slots.  { As shown in Fig. \ref{TP}, each covariance interval can be divided into two phases, the BA phase (which consists of $T_{\mathrm{B}}$ time slots) and the data transmission phase (which consists of $T_{\mathrm{D}} = T - T_{\mathrm{B}}$ time slots).}
With the selected beam $\mathbf{f}$, the effective achievable rate (EAR) of each covariance interval
\begin{equation}
R_{\text{eff}}:= \Big(1-\frac{T_{\mathrm{B}}}{T_{\mathrm{D}}}\Big)\log\Big(1 + \frac{ p|\mathbf{h}^{\mathrm{H}}\mathbf{f}|^2}{\sigma^2}\big)
\end{equation}
is widely accepted as a metric to measure the { throughput performance}, where $p$ and $\sigma^2$ represent the transmit power and noise variance respectively.
We  observe that the { throughput performance} increases  with the decreasing $T_{\mathrm{B}}/T_{\mathrm{D}}$ and increasing $|\mathbf{h}^{\mathrm{H}}\mathbf{f}^*|^2$. As such, the time allocated for selecting the optimal beam $\mathbf{f}^* = \argmax_{\mathbf{f}\in \mathcal{C}}|\mathbf{h}^{\mathrm{H}}\mathbf{f}|^2$ should be minimized for higher system throughput, which  is the motivation of this work.

\begin{figure}[t]
\renewcommand{\captionfont}{\small}
\centering
\includegraphics[scale=.5]{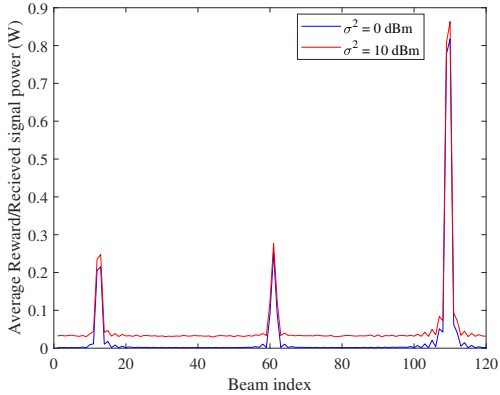}
\caption{ The average received power associated different beams over the codebook space in a three-path channel ($K = 120$, $N = 64$, $p = 26$ dBm and the power of line-of-sight (LoS) path is around 3 dB larger  than that of the non-line-of-sight (NLoS) path.)}
\label{AverRe}
\normalsize
\end{figure}
In the BA phase,
the BS selects one beam from the codebook $\mathcal{C}$ and transmits pilot signals  to the user with this beam. Without loss of generality, we set the pilot signal to be~$1$.
Then the received power can be expressed as
\begin{equation}\label{Reward}
R(\mathbf{f}_k) = |\sqrt{p}\mathbf{h}^{\mathrm{H}}\mathbf{f}_k + n |^2 = p|\mathbf{h}^{\mathrm{H}}\mathbf{f}_k|^2 + |n|^2 + 2\sqrt{p}\Re(\mathbf{h}^{\mathrm{H}}\mathbf{f}_kn^*),
\end{equation}
where $n^*$ represents the complex conjugate of the complex Gaussian noise  $n\sim \mathcal{CN}(0,\sigma^2)$.   {Note that the BS can observe the received power from the user through a feedback link, and this assumption is also adopted in other studies such as \cite{8842625,8662770}.}
{We can see that $R(\mathbf{f}_k)$ is a random variable which  is the sum of a Gamma random variable $|n|^2 \sim \Gamma(1, 1/\sigma^2)$ and a Gaussian random variable $p|\mathbf{h}^{\mathrm{H}}\mathbf{f}_k|^2 + 2\sqrt{p}\Re(\mathbf{h}^{\mathrm{H}}\mathbf{f}_kn^*) \sim \mathcal{N} ( p|\mathbf{h}^{\mathrm{H}}\mathbf{f}_k|^2, 2p |\mathbf{h}^{\mathrm{H}}\mathbf{f}_k|^2\sigma^2)$.
Fig. \ref{AverRe} shows the average received power associated with different beams over the codebook space when $K = 120$, $N = 64$, $p = 26$ dBm and the power of LoS path is around 3 dB larger than those of the NLoS paths. It can be observed  that all the received powers become larger as the increase of the noise variance, and the improved value is the difference between the noise variances.
Since the noise variance is much smaller than the transmit power, the variable $R(\mathbf{f}_k)$ is approximately a Gaussian random variable with mean $\mu_k = p|\mathbf{h}^{\mathrm{H}}\mathbf{f}_k|^2$ and variance $\sigma^2_k = 2p |\mathbf{h}^{\mathrm{H}}\mathbf{f}_k|^2\sigma^2 $, i.e.,
\begin{equation}\label{Reward2}
r_k = p|\mathbf{h}^{\mathrm{H}}\mathbf{f}_k|^2 +  2\sqrt{p}\Re(\mathbf{h}^{\mathrm{H}}\mathbf{f}_kn^*).
\end{equation}
The optimal beam is defined as the beam that has the largest value of $r_k$, i.e., ${k}^* = \argmax_k p|\mathbf{h}^{\mathrm{H}}\mathbf{f}_k|^2$.

 Let $J$ be the   \emph{correlation length} of arms, which is related to the number of beams $N$ and the size of the codebook $K$ as follows: $J = 2\lceil \frac{K}{N}\rceil-1$.} Then the received power using the beam $\mathbf{f}_k$ and the nearby beam $\mathbf{f}_i, |k-i|\le J/2$ are similar.  { A new $\frac{1}{J}$-lower resolution beam codebook $\mathcal{C}_{(J)}$ can be constructed by grouping the nearby beams in the codebook $\mathcal{C}$, i.e.,
\begin{equation}
\mathcal{C}_{(J)} \triangleq \Big\{\mathbf{b}_{g}= \frac{\sum_{k = J(g-1)+1}^{Jg}\mathbf{f}_k}{|\sum_{k = J(g-1)+1}^{Jg}\mathbf{f}_k|} : g=0,1, \ldots, G-1\Big\},
\end{equation}
where the beams $\mathbf{b}_g$'s are normalized (to unit $\ell_2$ norm) to mimic reality in a phased array. 
If we choose a beam $\mathbf{b}_g$ in the codebook $\mathcal{C}_{(J)}$, the received power can be expressed as
\begin{equation}
R_g = p|\mathbf{h}^{\mathrm{H}}\mathbf{b}_g|^2 +  2\sqrt{p}\Re(\mathbf{h}^{\mathrm{H}} \mathbf{b}_gn^*).
\end{equation}
Since the received powers associated with the nearby beams in the codebook $\mathcal{C}$ have means that are close to one another, $R_g$ has large mean if $r_k, k \in \{(g-1)J+1,\ldots,gJ  \}$ have large means.

\subsection{Problem Setup}{
To introduce the idea of bandit to address the BA problem, we consider a bandit model where the beam $\mathbf{f}_k$ is regarded as the base arm $k$, and the received power using this beam is regarded as the reward of the base arm $k$. The  means of the rewards when arms are pulled have the following property:
{ \begin{property}\label{Prop1}
Let ${\bm{\mu}} = (\mu_1,\ldots,\mu_K)$, and let $\mu_{(1)} \ge \mu_{(2)}\ge \mu_{(3)}\ge \ldots  \ge \mu_{(K)}$ be the sorted sequence of means.
Then we assume that these means have the following properties:
\begin{itemize}
 \item[1.] The means of the reward associated with arms $k$ and $i$, where $| i -k | \le J/2$, are close.
 \item[2.] {The rewards are sparse. In particular, there are only $LJ$ arms that have high rewards; the other $K-LJ$ arms have rewards that are close to zero.}
 \item[3.] The variance of the reward of an arm is related to its mean as follows:  ${\sigma}^2_k = 2{\mu}_k\sigma^2$.
 \end{itemize}
\end{property}}

{ As shown in Fig. \ref{AverRe}, Property 1.1 holds because the beam codebook is designed such that beams that are nearby have rewards that are close. Property 1.2 is due to the fact that there are only a few main paths in the considered channel model, which has been  corroborated by studies in  channel measurements  \cite{7109864}. Finally, Property 1.3 holds  because  the mean of reward associated with base arm $k$ follows a heteroscedastic Gaussian distribution $\mathcal{N}(\mu_k,2\mu_{k}\sigma^2)$ (see \eqref{Reward2}).} {Furthermore, in this work, we focus on a challenging   case with large noise, which satisfies that
\begin{equation}\label{sigma}
\sigma^2 > \max_{k \in [K]} \frac{(\mu_{(1)} - \mu_{(k)})^2}{4\mu_{(k)} - 2\mu_{(1)} -2\mu_{(k)}\ln\big(\frac{\mu_{(k)}}{\mu_{(1)}}\big) },
 \end{equation}
 and this condition is assumed to hold in our theoretical analyses. In the simulations, we also numerically check that this condition holds. }

{ Since a large number of base arms have mean rewards that are close to zero (Property 1.3), it is clearly a waste  of BA overhead to search for the optimal arm among all the base arms. To overcome this problem, we propose to group the base arms by utilizing Property 1.1 which says  that the  means of the reward associated with nearby arms are similar.
As such, we can  formulate the BA problem as a novel MAB problem, called the \em{bandit BA problem}.}}

In a \emph{bandit BA problem} with $K$ base arms, the base arm $k$ is associated with the beam $\mathbf{f}_k$.
We let $[K] = \{1,\ldots,K\}$ and { $\binom{[K]}{\le J}_{\mathrm{consec}} $} be the set of all non-empty consecutive tuples of length $\le J$ whose element belongs to the set $[K]$, e.g., if $J = 2$ and $ K = 6 $,
 \begin{align}
\binom{[6]}{\le 2}_{\mathrm{consec}}
 &= \Big\{\{1\},\{1,2\},\{2\},\{2,3\},\{3\}, \nonumber\\* &\qquad\{3,4\},\{4\},\{4,5\},\{5\},\{5,6\},\{6\}\Big\},
 \end{align}
 and we consider each tuple as a $(K,J)$-super arm. Moreover, the super arm associated with $\mathbf{b}_g\in \mathcal{C}_{(J)}$ is a subset of $\binom{[K]}{\le J}_{\mathrm{consec}}$, e.g., $\{\{1,2 \},\{3,4\},\{5,6 \} \} \subset \binom{[6]}{\le 2}_{\mathrm{consec}}
$.
{ At time step $t$, we choose an action (or a $(K,J)$-super arm) $A(t) \in \binom{[K]}{\le J}_{\mathrm{consec}} $, and observe the reward $R(t)$ which is related to the base arms included in $A(t)$, i.e.,
 \begin{equation}\label{Re1}
 R(t) = F\Big(\frac{\sum_{k\in {A(t)}}\mathbf{f}_k}{|\sum_{k\in {A(t)}}\mathbf{f}_k|}, p,\mathbf{h},n_t\Big),
 \end{equation}
where $ F(\mathbf{f},p, \mathbf{h},n)
 = p|\mathbf{h}^{\mathrm{H}}\mathbf{f}|^2 +  2\sqrt{p}\Re(\mathbf{h}^{\mathrm{H}}\mathbf{f}n^*)$, $n$ is a complex Gaussian random variable with mean zero and variance $\sigma^2$.
We can conclude that  for an arbitrary $(K,J)$-super arm $A \in \binom{[K]}{\le J}_{\mathrm{consec}}$, the empirical mean $R_A$ also follows a heteroscedastic Gaussian distribution, i.e., $R_A\sim \mathcal{N}(\mu_A, 2\mu_A\sigma^2)$, where $\mu_A = p|\mathbf{h}^{\mathrm{H}}\sum_{k\in {A}}\mathbf{f}_k|^2$.}

To identify the optimal base arm, an agent uses an algorithm
$\pi$ that decides which super arms to pull, the time $\tau^{\pi}$ to stop pulling,
and which base arm  $k^{\pi}$ to choose eventually.
An algorithm consists of a triple $\pi := \{ (\pi_t)_t,\tau^{\pi}, \psi^{\pi},J  \}$ in which:
\begin{itemize}
\item  a \underline{sampling rule} $\pi_t$  determining the $(K,J)$-super arm $A(t) \in\binom{[K]}{\le J }_{\mathrm{consec}}$ to pull at time step $t$ based on the observation history and the arm history $\{A(1),R(1),A(2),R(2),\ldots, A(t-1), R(t-1)\}$;
\item a \underline{stopping rule} leading to a stopping time $\tau^{\pi}$ satisfying $\mathbb{P}(\tau^{\pi}<\infty) = 1$;
\item a \underline{recommendation rule} $\psi^{\pi}$ that outputs a base arm $k^{\pi} \in [K]$.
\end{itemize}
We define the time complexity of $\pi$ as $\tau^{\pi}$. {In the fixed-confidence setting, a {\em maximum risk} $\delta\in (0,1)$ is fixed. We say an algorithm $\pi$ is {\em $(\delta,J)$-probably approximately correct} (PAC) if   $\mathbb{P}(k^{\pi} = k^*) \ge 1-\delta$.} The purpose is to design a $(\delta,J)$-PAC algorithm $\pi$ that minimizes the expected stopping time $\mathbb{E}[\tau^{\pi}]$.}

\subsection{Lower Bound on the Time Complexity}
Denote by $\nu$ a $K$-armed heteroscedastic Gaussian bandit instance, $\nu = \big\{\mathcal{N}(\mu^{\nu}_{1},2\mu^{\nu}_{1} \sigma^2),$ $ \ldots,\mathcal{N}(\mu^{\nu}_{K},2\mu^{\nu}_{K} \sigma^2) \big\}$.
Let $\mathcal{V}$ represent the set of $K$-armed heteroscedastic Gaussian bandit instances such that each bandit instance $\nu\in \mathcal{V}$ has a unique optimal arm:
for each $\nu\in\mathcal{V}$, there exists an arm $A^*(\nu)$ such that $\mu^{\nu}_{A^*(\nu)}>
\max\{\mu^{\nu}_{k}: k\neq A^*(\nu)  \}$. Define $\mathcal{W}_{K}=\big\{\mathbf{w} \in \mathbb{R}_{+}^{K}: \sum_{k=1}^K \vphantom{\Big[} w_{k} = 1\big\}$, $\operatorname{Alt}(\nu):=\left\{\mathrm{u} \in \mathcal{V}: A^{*}(\mathrm{u} ) \neq A^{*}(\nu)\right\}$, and $T_{k}(t)$ as the times the base arm $k$ is pulled until time step $t$.
Moreover, the KL divergence between two heteroscedastic Gaussian distributions, i.e., $\mathcal{N}(\mu_{i},2\mu_{i}\sigma^2)$ and $\mathcal{N}(\mu_{j},2\mu_{j}\sigma^2)$, can be calculated as
\begin{equation}
D_{\mathrm{HG}}(\mu_{i},\mu_{j})
= \frac{1}{2} \ln\Big(\frac{\mu_{j}}{\mu_{i}}\Big) + \frac{\mu_{i}}{2\mu_{j}} + \frac{(\mu_{j}-\mu_{i})^2}{4\mu_{j}\sigma^2} -\frac{1}{2}.
\end{equation}
According to Theorem 1 of \cite{Garivier2016}, we can obtain the general lower bound of this problem as follows.
\begin{theorem}\label{Theo1}
For any $(\delta,J)-$PAC algorithm
  where $\delta \in (0,1)$,  it holds that
\begin{equation}
  \mathbb{E}_{ \pi}[\tau] \geq c^{*}(\nu) \ln \left(\frac{1}{4\delta}\right),
  \end{equation}
  where
\begin{equation}\label{Theo2}
   c^{*}(\nu)^{-1}
     = \sup _{\mathbf{w} \in \mathcal{W}_{K}}\inf _{\mathrm{u} \in \operatorname{Alt}(\nu)}\Big(\sum_{k=1}^{K} w_{k}
    D_{\mathrm{HG}}(\mu^{\nu}_{k},\mu^{\mathrm{u}}_{k})
     \Big).
\end{equation}
  \end{theorem}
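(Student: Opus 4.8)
\emph{Proof strategy.} The plan is to obtain Theorem~\ref{Theo1} as a direct instantiation, to the heteroscedastic Gaussian bandit model $\nu$, of the generic fixed-confidence lower bound of Garivier and Kaufmann (Theorem~1 of~\cite{Garivier2016}). The work then reduces to three steps: (i) a change-of-measure inequality between $\nu$ and any alternative instance, specialised to the heteroscedastic Gaussian reward laws; (ii) a lower bound on the resulting binary relative entropy in terms of $\delta$; and (iii) a normalisation recasting the bound in the variational form~\eqref{Theo2}.

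For step~(i), I would fix an arbitrary $\mathrm u\in\operatorname{Alt}(\nu)$. Since $\mathrm u\in\mathcal V$ has a unique optimal arm $A^*(\mathrm u)\neq A^*(\nu)$, a $(\delta,J)$-PAC algorithm outputs the base arm $A^*(\nu)$ with probability at least $1-\delta$ under $\nu$ and with probability at most $\delta$ under $\mathrm u$. Writing $\mathcal E:=\{k^{\pi}=A^*(\nu)\}$ for this terminal event, and applying the data-processing inequality for Kullback--Leibler divergence to the history stopped at $\tau$ (Lemma~1 of~\cite{Garivier2016}), together with the fact that one pull of a $(K,J)$-super arm reveals a sample of a $\mathcal N(\,\cdot\,,2(\cdot)\sigma^2)$ law whose per-step contribution to the expected log-likelihood ratio is given by the divergence formula $D_{\mathrm{HG}}$ stated just before the theorem, one obtains
\begin{equation}
\sum_{k=1}^{K}\mathbb E_{\nu}[T_{k}(\tau)]\,D_{\mathrm{HG}}(\mu^{\nu}_{k},\mu^{\mathrm u}_{k})\;\ge\; d\bigl(\mathbb P_{\nu}(\mathcal E),\,\mathbb P_{\mathrm u}(\mathcal E)\bigr),
\end{equation}
where $d(x,y)=x\ln\frac{x}{y}+(1-x)\ln\frac{1-x}{1-y}$ is the binary relative entropy.

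For step~(ii), by monotonicity of $d$ and the two PAC guarantees above, the right-hand side is at least $d(1-\delta,\delta)\ge\ln\frac{1}{4\delta}$, the last inequality being the standard estimate (see~\cite{Garivier2016}). Since this holds for every $\mathrm u\in\operatorname{Alt}(\nu)$, taking the infimum over $\operatorname{Alt}(\nu)$ preserves the bound. For step~(iii), I would set $w_{k}:=\mathbb E_{\nu}[T_{k}(\tau)]/\mathbb E_{\nu}[\tau]$, so that $\mathbf w\in\mathcal W_{K}$, factor $\mathbb E_{\nu}[\tau]$ out of the sum, and bound the resulting normalised infimum from above by its supremum over $\mathbf w\in\mathcal W_{K}$, which is exactly $c^{*}(\nu)^{-1}$ by~\eqref{Theo2}; dividing gives $\mathbb E_{\nu}[\tau]\ge c^{*}(\nu)\ln\frac{1}{4\delta}$, as claimed.

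The step I expect to be the main obstacle is~(i), and specifically the accounting of information per time step: because actions are $(K,J)$-super arms rather than individual base arms, one must argue that the expected log-likelihood ratio of the histories under $\nu$ and $\mathrm u$ is (upper) bounded by $\sum_{k}\mathbb E_{\nu}[T_{k}(\tau)]\,D_{\mathrm{HG}}(\mu^{\nu}_{k},\mu^{\mathrm u}_{k})$ with $T_{k}(\tau)$ the relevant pull count of base arm $k$; an exact decomposition is not needed, since a valid upper bound suffices and keeps Theorem~\ref{Theo1} correct as the ``generic'' (and possibly loose) lower bound it is stated to be. A minor secondary point is regularity: Theorem~1 of~\cite{Garivier2016} requires finite and continuous KL divergences within the per-arm family, which for the heteroscedastic Gaussians $\mathcal N(\mu,2\mu\sigma^{2})$ holds as long as all means are strictly positive --- guaranteed here because $\mu_{k}=p|\mathbf h^{\mathrm H}\mathbf f_{k}|^{2}>0$ almost surely --- so that $D_{\mathrm{HG}}(\mu^{\nu}_{k},\mu^{\mathrm u}_{k})$ and hence $c^{*}(\nu)$ are well defined.
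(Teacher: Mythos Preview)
Your proposal is correct and follows exactly the approach the paper takes: the paper does not give an independent proof but simply states that Theorem~\ref{Theo1} is obtained ``according to Theorem~1 of~\cite{Garivier2016}'', and you have spelled out the standard change-of-measure argument behind that result, specialised to the heteroscedastic Gaussian family via $D_{\mathrm{HG}}$. Your flagged concern about accounting for $(K,J)$-super-arm pulls in step~(i) is legitimate, but the paper does not address it either---it treats Theorem~\ref{Theo1} as the base-arm (singleton-action) lower bound inherited verbatim from Garivier--Kaufmann, so your proof is at least as complete as the paper's.
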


\section{Our Algorithm: 2PHT\&S}
{In this section, we first state the performance baseline of this work, i.e., the original Track-and-Stop (T\&S) algorithm~\cite{Garivier2016}, then we elaborate the design of the proposed 2PHT\&S. We then upper bound its time complexity in Theorem   \ref{theo2}.
\subsection{A Baseline Algorithm}
{ Note that the original Track-and-Stop (T\&S) \cite{Garivier2016} is the state-of-the-art  best arm identification algorithm with fixed confidence for exponential bandits, which can be modified  to solve the \emph{bandit BA problem} with low sample complexity, resulting in a {\em fast} BA solution. }

 Let $\nu$ represent a $K$-armed Gaussian bandit instance, $\nu = \big\{\mathcal{N}\big(\mu^{\nu}_{1},(\sigma^\nu_1)^2\big), \mathcal{N}\big( \mu^{\nu}_{2},(\sigma^\nu_2)^2\big), \ldots,$ $\mathcal{N}\big(\mu^{\nu}_{K},(\sigma^{\nu}_K)^2\big)  \big\}$.
Given the fixed  $\delta$,
   the time complexity of  T\&S, $\mathbb{E}\left[\tau^{\text{T\&S}}\right]$, satisfies  
\begin{equation}
\lim_{\delta \rightarrow 0} \frac{\mathbb{E}\left[\tau^{\text{T\&S}}\right]}{\log (1 / \delta)} = T^{*}(\nu),
\end{equation}
where
\begin{align}
&T^{*}(\nu)^{-1}\nonumber\\
&:=\!\sup _{\mathbf{w} \in \mathcal{W}_{K}} \!\inf _{\mathrm{u}\in \operatorname{Alt}(\nu)}\!\Big(\sum_{k=1}^{K} w_{k} D\left(\mathcal{N}\big(\mu^{\nu}_{k},(\sigma^\nu_k)^2\big), \mathcal{N}\big(\mu^{\mathrm{u}}_{k},(\sigma^{\mathrm{u}}_k)^2\big)\right)\Big).
\end{align}
and the KL-divergence $D\left(\mathcal{N}\big(\mu^{\nu}_{k},(\sigma^\nu_k)^2\big), \mathcal{N}\big(\mu^{\mathrm{u}}_{k},(\sigma^{\mathrm{u}}_k)^2\big)\right) =  \ln \big(\frac{\sigma^{\mathrm{u}}_k}{\sigma^{\nu}_ k}\big)+\frac{(\sigma^{\nu}_k)^{2}+\left(\mu^{\nu}_{k}-\mu^{\mathrm{u}}_k\right)^{2}}{2 (\sigma^{\mathrm{u}}_k)^{2}}-\frac{1}{2}$.

Furthermore, when we apply  T\&S to the  $K$-armed heteroscedastic Gaussian bandit instance $\nu \in \mathcal{V}$ under consideration, we have 
\begin{equation}\label{tl}
\limsup_{\delta \rightarrow 0} \frac{\mathbb{E}\left[\tau^{\text{T\&S}}\right]}{\log (1 / \delta)} \le T^*_{\mathrm{u}}(\nu),
\end{equation}
where
\begin{equation}
T^*_{\mathrm{u}}(\nu)
= \frac{8 \mu^{\nu}_{A^*(\nu)}\sigma^{2}}{\Delta_{\nu,\min }^{2}}+\sum_{k \neq A^*(\nu)} \frac{8 \mu^{\nu}_{A^*(\nu)}\sigma^{2}}{\Delta_{\nu,k}^{2}},
\end{equation}
and  the gaps are $\Delta_{\nu,k} := \mu^{\nu}_{A^*(\nu)} - \mu^{\nu}_{k}, k \neq A^*(\nu)$ and 
$\Delta_{\nu,\min } := \min_{k \neq A^*(\nu)} \mu^{\nu}_{A^*(\nu)} - \mu^{\nu}_{k}$.}
\subsection{2PHT\&S}
{

{ For our \emph{bandit BA problem},  we propose the  2PHT\&S  algorithm inspired by the original T\&S algorithm \cite{Garivier2016} to achieve a smaller time complexity by leveraging the structure of the BA problem. The main idea of the proposed algorithm is to exploit the prior knowledge which have not been considered by  existing algorithms, i.e., Property \ref{Prop1} and the fact that neighboring beams can be grouped and the corresponding ``grouped'' reward can be observed in one time step.}

\begin{figure*}[h]
\renewcommand{\captionfont}{\small}
\centering
\includegraphics[scale=.45]{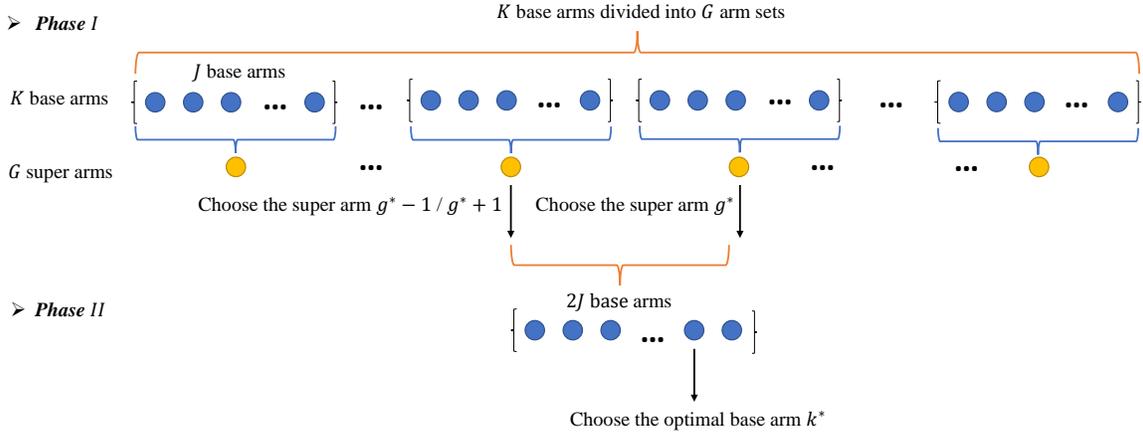}
\caption{Illustration of the proposed algorithms.}
\label{Algo}
\normalsize
\end{figure*}
As shown in  Fig.~\ref{Algo},  2PHT\&S consists of two phases. We describe {them} in the following.
\begin{itemize}
\item In Phase I, we first group $K$ base arms into $G$ arm sets and accordingly construct a set of super arms as
 \begin{equation}\label{eq12}
 \mathcal{G} = \{\mathcal{S}_g =  \{(g-1)J+1,\ldots, gJ\} | g\in [G] \}.
 \end{equation}
 According to \eqref{Re1}, the reward associated with super arm $\mathcal{S}_g$ can be written as $
 R_{g}(t) = F(\sum_{k\in {S_g}}\mathbf{f}_k, p,\mathbf{h},n_t)
$.  Second, we search for the optimal super arm, $g^* = \argmax_{\vphantom{\underline{g}}g\in[G]}\mathbb{E}[R_g(t)]$, with a high probability  of at least  $ 1-\delta_1$, where  $\delta_1$ denotes the { maximum allowable risk} in Phase I.
\item {In Phase II, we construct a base arm set (including two super arms) and select the optimal base arm from this base arm set.
     Considering the case shown in Fig. \ref{Case}, where the optimal base arm lies in the right edge of the super arm $\mathcal{S}_g$ and the second optimal base arm lies in the left edge of the super arm $\mathcal{S}_{g+2}$, the mean of the reward of the super arm $\mathcal{S}_{g+1}$ may be higher than that of the super arm $\mathcal{S}_{g}$.
     This observation results in a   higher difficulty of identifying the optimal super arm.
Therefore, to improve the selection accuracy,  we construct a base arm set, i.e., the combination of the optimal super arm and either one of its (left or right) neighbors that has the larger mean. Next,
we select the optimal base arm in this base arm set with a certain fixed confidence  probability of at least $1-\delta_2$.}
\end{itemize}
\begin{figure}[t]
\renewcommand{\captionfont}{\small}
\centering
\includegraphics[scale=.50]{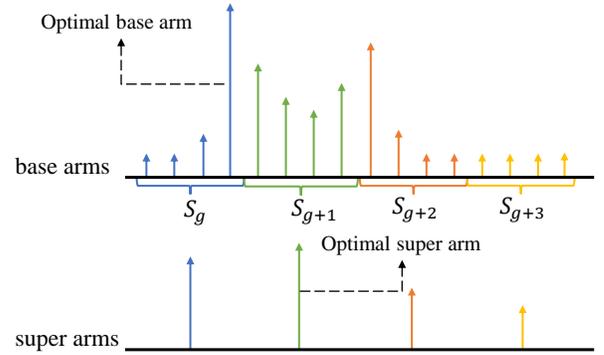}
\caption{Illustration of a case that the optimal base arm does not lie in the optimal super arm.}
\label{Case}
\normalsize
\end{figure}
Given the {fixed maximum risk}  $\delta$,   risks $\delta_1$ and $\delta_2$ should satisfy the constraint $\delta = \delta_1 + \delta_2$.\footnote{ Note that  since the two phases are independent, the probability of selecting the optimal base arm can be expressed as $(1-\delta_1)(1-\delta_2) = 1-\delta_1 -\delta_2 + \delta_1\delta_2$. Then, the error probability is $\delta_1+\delta_2 - \delta_1\delta_2$, which is less than $\delta_1+\delta_2$ since the two phases are independent. For simplicity, for the given fixed confidence $\delta$, we set $\delta_1 + \delta_2 = \delta$.} The proposed 2PHT\&S is detailed in Algorithm \ref{alg1}, where $\hat{\bm{\mu}}^{\mathrm{s}} = [\hat{\mu}^{\mathrm{s}}_1,\ldots, \hat{\mu}^{\mathrm{s}}_G]$ and  $\hat{\bm{\mu}}^{\mathrm{b}} = [\hat{\mu}^{\mathrm{b}}_{\mathcal{S}_f(1)},\ldots, \hat{\mu}^{\mathrm{b}}_{\mathcal{S}_f(2J)}]$ represent the empirical means of  super arms and base arms included in $\mathcal{S}_f$, respectively. In both of Phase I and Phase II of 2PHT\&S, an improved T\&S algorithm exploiting the heteroscedastic property, which is referred to as {\em Heteroscedastic Track-and-Stop} (HT\&S), is leveraged to search for the  optimal super arm and the optimal base arm, respectively.

\begin{algorithm}[t]
\setlength{\belowcaptionskip}{-0.0cm}
\setlength{\abovecaptionskip}{-0.0cm}
\caption{ 2PHT\&S} 
\label{alg1}
\hspace*{0.02in} {\bf Input:} {Maximum risk $\delta$} and a set of base arms $k\in \mathcal{K}$.\\
\hspace*{0.02in} {\bf Output:} Optimal base arm $k^*$, required number of time steps~$\tau$.
\begin{algorithmic}[1]
\State Choose $\delta_1$ and $\delta_2$ which satisfy  $ \delta_1 + \delta_2 = \delta$,

\noindent{\textbf{\emph{\# Phase I: search for the optimal super arm.}}}
\State Group the arms $k\in \mathcal{K}$ into $G$ arm sets and formulate the super arms $\mathcal{G} $ according to \eqref{eq12},
\State $[\tau_1,\hat{\bm{\mu}}^{\mathrm{s}}] = \text{HT\&S}(\delta_1,\mathcal{G})$ (defined in Algorithm \ref{alg2}),

\noindent{\textbf{\emph{\# Phase II: search for the optimal base arm.}}}
\State Select $g^*=\operatorname{argmax}_{g \in {\mathcal{G}}}\hat{{\mu}}^{\mathrm{s}}_{g}$,
 \If{$\hat{{\mu}}^{\mathrm{s}}_{g^*+1}(t)\ge \hat{{\mu}}^{\mathrm{s}}_{g^*-1}(t)$}
    \State $\mathcal{S}_f =  \mathcal{S}_{g^*}\cup \mathcal{S}_{g^*+1}$,
 \Else
   \State $\mathcal{S}_f =  \mathcal{S}_{g^*} \cup \mathcal{S}_{g^*-1}$,
 \EndIf
\State $[\tau_{2},\hat{\bm{\mu}}^{\mathrm{b}}] =\text{HT\&S}\big(\delta_2,\mathcal{S}_f\big)$ (defined in Algorithm \ref{alg2}),
\State \Return $k^* = \operatorname{argmax}_{k \in {\mathcal{S}}_f}\hat{{\mu}}^{\mathrm{b}}_{k}$ and $\tau = \tau_{1} + \tau_{2}$.
\end{algorithmic}
\end{algorithm}

 \begin{algorithm}[h]
\setlength{\belowcaptionskip}{-0.5cm}
\setlength{\abovecaptionskip}{-0.5cm}
\caption{HT\&S}
\label{alg2}
\hspace*{0.02in} {\bf Input:} Maximum risk $\delta$, super arm set $\mathcal{I}$.\\
\hspace*{0.02in} {\bf Output:} The required number of time steps $\tau$, the empirical mean of the reward $\hat{\bm{\mu}}^\nu$.
\begin{algorithmic}[1]
\For{$i \in \mathcal{I}$}
\State {Pull the super arm $i$, observe the reward, then update the $T_{i}(0)$ and $\hat{\mu}^\nu_{i}(0)$ according to \eqref{UP1}.}
\EndFor
\State Initialize $\hat{\mathbf{w}}^*(0)$ according to \eqref{upW} using $\hat{\bm{\mu}}^\nu(0)$, $t=1$.
\While{$Z(t)\le \beta(t,\delta,\alpha)$} (The definition of $Z(t)$ is given in \eqref{Z2}) \label{A2S}
    \If{$\operatorname{argmin}_{i \in {\mathcal{I}}} T_{i}(t-1) \leq (\sqrt{t} - \frac{I}{2} )^+$} \label{A2T1}
       \State $A(t) = \operatorname{argmin}_{i \in {\mathcal{I}}} T_{i}(t-1)$,
    \Else
       \State $A(t)=\operatorname{argmax}_{i \in {\mathcal{I}}}\left(t \hat{w}_{i}^{*}(t-1)-T_{i}(t-1)\right)$,
    \EndIf \label{A2T2}
    \State {Observe the reward $R(t)$, update $\hat{\mu}^\nu_{i}(t)$ and $T_{i}(t)$ according to \eqref{UP1}, update $Z(t)$ according to \eqref{Z2}, update $\hat{w}_{i}^{*}(t)$ according to \eqref{upW},}
    \State  $t = t+1$,
\EndWhile
\State $\tau_{\delta} = t$, $\hat{\bm{\mu}}^\nu = [\hat{\mu}^\nu_{i}(t),\ldots,\hat{\mu}^\nu_{I}(t)]$,
\State \Return $\tau = I+\tau_{\delta}$, $\hat{\bm{\mu}}^\nu$. \label{tau1}
\end{algorithmic}
\end{algorithm}
We now introduce the HT\&S algorithm. Define a set of $I \in \mathbb{N}$ super arms $\mathcal{I} = \{   \mathcal{S}_1,\ldots,\mathcal{S}_I\} $ where  the reward $R_{i}(t)$ of  super arm $\mathcal{S}_i$ is related to the beams $k\in \mathcal{S}_i$ as follows:
$
 R_{i}(t) = F(\sum_{k\in {S_i}}\mathbf{f}_k, p,\mathbf{h},n_t)$.
 Given $\mathcal{I}$ and the fixed $\delta$, HT\&S  outputs all the empirical means, $\hat{\bm{\mu}}^\nu = [\hat{\mu}^\nu_1,\ldots, \hat{\mu}^\nu_I]$, and the required number of time steps $\tau$. Let $A(t)$ represent the index of the super arm pulled in time step $t$, i.e.,
in time step $t$  the super arm $\mathcal{S}_{A(t)}$ is selected by  the sampling rule. We observe the following reward associated with the super arm  $ \mathcal{S}_{A(t)}$  is
\begin{equation}
R(t) = F\bigg(\sum_{k\in \mathcal{S}_{A(t)}}\mathbf{f}_{k}, p,\mathbf{h},n_t\bigg).
 \end{equation}
   Accordingly, we update   the number of times each arm is selected  up to time step $t$ and their empirical mean as
\begin{align}
 T_{i}(t) &=\sum_{a=1}^{t} \mathbbm{1}\left\{A(a) =i\right\}\quad\mbox{and}\\
 \hat{\mu}_{i}(t)& = \frac{1}{T_{i}(t)}\sum_{a=1}^{t} R(a)\mathbbm{1}\left\{A(a) = i\right\}.\label{UP1}
 \end{align}
The search is terminated when the stopping rule (delineated  in \eqref{STR1} below) is satisfied.
Define the heteroscedastic Gaussian bandit instance at time step $t$ as $\hat{\nu}_t = \big\{\mathcal{N}\big(\hat{\mu}^\nu_1(t),2\hat{\mu}^\nu_{1}(t) \sigma^2\big),\ldots,\mathcal{N}\big(\hat{\mu}^\nu_{I}(t),2\hat{\mu}^\nu_{I}(t) \sigma^2\big) \big\}$. The stopping rule and sampling rule, which exploit the heteroscedastic property, form the core of  HT\&S:

1) Stopping rule:  We choose the threshold as $\beta(t,{\delta},\alpha) = \ln(\alpha t/{\delta})$, and the stopping rule as
\begin{equation}\label{STR1}
\tau_{\delta} =\min\big\{t\in \mathbb{N}: Z(t)\ge \beta(t,{\delta},\alpha) \big\},
\end{equation}
where $Z(t)$ is defined in~\eqref{Z2} at the top of the next page,  and
\begin{figure*}
\begin{equation}\label{Z2}
\begin{aligned}
Z(t) &= \min_{i\neq A^*(\hat{\nu}_t)}\inf_{u\in \text{Alt}(\hat{\nu}_t)}
\Big\{T_{A^*(\hat{\nu}_t)}(t)D_{\text{HG}}(\hat{\mu}_{A^*(\hat{\nu}_t)}^{{\nu}}(t), \mu_{A^*(\hat{\nu}_t)}^u)
+  T_i(t)D_{\text{HG}}(\hat{\mu}_{i}^{\nu}(t), \mu_{i}^u)  \Big\}\\
& \overset{(a)}{=} \min_{i\neq A^*(\hat{\nu}_t) }\Big\{{T_{{A^*(\hat{\nu}_t)}}(t)}
      D_{\mathrm{HG}}\big(\hat{\mu}_{A^*(\hat{\nu}_t)}^{{\nu}}(t),q_i(t)\big)
      + {T_{i}(t)}D_{\mathrm{HG}}\big(\hat{\mu}_{i}^{{\nu}}(t),q_i(t)\big)
      \Big\}\\
     & = {T_{{A^*(\hat{\nu}_t)}}(t)}
      D_{\mathrm{HG}}\big(\hat{\mu}_{A^*(\hat{\nu}_t)}^{\nu}(t),q_{A'(\hat{\nu}_t)}(t)\big)
      + {T_{A'(\hat{\nu}_t)}(t)}D_{\mathrm{HG}}\big(\hat{\mu}_{A'(\hat{\nu}_t)}^{{\nu}}(t),
      q_{A'(\hat{\nu}_t)}(t)\big),
\end{aligned}
\end{equation}
\hrulefill
\end{figure*}
\begin{align}
q_{i}(t) &= \frac{{T_{A^*(\hat{\nu}_t)}(t)}  +  {T_{i}(t)}}{{T_{A^*(\hat{\nu}_t)}(t)} \hat{\mu}^\nu_{i}(t) + {T_{i}(t)} \hat{\mu}^\nu_{A^*(\hat{\nu}_t)}(t)}\hat{\mu}^\nu_{A^*(\hat{\nu}_t)}(t)\hat{\mu}^\nu_{i}(t), \\
A'(\hat{\nu}_t) &=  \argmax_{i\in [I]}\{\hat{\mu}_{i}^\nu(t): i\neq A^*(\hat{\nu}_t)  \}.\label{qfun}
\end{align}
Please refer to \eqref{TP32}--\eqref{TP33} in Appendix C for more details of the derivation of $(a)$ in \eqref{Z2}.

\begin{lemma}\label{lemma4}
Let $\nu$ be a heteroscedastic Gaussian bandit instance. Let ${\delta} \in (0,1)$ and $\alpha >1$. Using the above stopping rule in \eqref{STR1} with the threshold $\beta(t,{\delta},\alpha) = \ln(\frac{\alpha t}{{\delta}})$ ensures that $\mathbb{P}_{\mathbf{s}}\big(\tau_{{\delta}} < \infty, {A}_{\tau_{{\delta}}} \neq A^*(\nu)\big)\le {\delta}$.
\end{lemma}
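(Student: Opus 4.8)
The plan is to follow the standard $\delta$-correctness argument for Track-and-Stop, specialized to the heteroscedastic Gaussian family. First I would decompose the error event: on $\{\tau_{\delta}<\infty,\ A_{\tau_{\delta}}\neq A^*(\nu)\}$ there is a time $t=\tau_{\delta}$ at which the recommended arm $A_{\tau_{\delta}}$, which is the empirical best $A^*(\hat\nu_t)$, differs from the true best arm $A^*(\nu)$, while the stopping statistic satisfies $Z(t)\ge\beta(t,\delta,\alpha)=\ln(\alpha t/\delta)$. The key observation is that $A^*(\nu)\neq A^*(\hat\nu_t)$ places the true instance $\nu$ in $\operatorname{Alt}(\hat\nu_t)$, so that $\nu$ is feasible in the $\min$--$\inf$ defining $Z(t)$ in \eqref{Z2} once one takes the challenger $i=A^*(\nu)$ (feasibility uses $\mu^{\nu}_{A^*(\nu)}>\mu^{\nu}_{A^*(\hat\nu_t)}$). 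Evaluating that expression at $u=\nu$ yields the pointwise bound $Z(\tau_{\delta})\le T_{A^*(\hat\nu_{\tau_\delta})}(\tau_{\delta})\,D_{\mathrm{HG}}(\hat\mu_{A^*(\hat\nu_{\tau_\delta})}(\tau_{\delta}),\mu_{A^*(\hat\nu_{\tau_\delta})})+T_{A^*(\nu)}(\tau_{\delta})\,D_{\mathrm{HG}}(\hat\mu_{A^*(\nu)}(\tau_{\delta}),\mu_{A^*(\nu)})$, i.e.\ at a false recommendation the stopping statistic is dominated by a two-arm empirical-KL (generalized likelihood ratio) of the data against the \emph{true} means.

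Combining this with $Z(\tau_{\delta})\ge\ln(\alpha\tau_{\delta}/\delta)$ and a union bound over the identity $j\neq A^*(\nu)$ of the empirically best arm, I would obtain
\[ \mathbb{P}_{\mathbf{s}}\big(\tau_{\delta}<\infty,\ A_{\tau_{\delta}}\neq A^*(\nu)\big)\ \le\ \sum_{j\neq A^*(\nu)}\mathbb{P}_{\mathbf{s}}\big(\exists\,t\ge1:\ T_j(t)D_{\mathrm{HG}}(\hat\mu_j(t),\mu_j)+T_{A^*(\nu)}(t)D_{\mathrm{HG}}(\hat\mu_{A^*(\nu)}(t),\mu_{A^*(\nu)})\ \ge\ \ln(\alpha t/\delta)\big). \]
Each summand is a uniform-in-time deviation probability for the empirical-KL process of a pair of arms. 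To control it, I would use that $\{\mathcal{N}(\mu,2\mu\sigma^2):\mu>0\}$ is a regular one-parameter exponential family whose canonical KL divergence is exactly $D_{\mathrm{HG}}$ and whose MLE is $\hat\mu_i(t)$; then $\exp(\sum_{a\le t}\ell_a)$, with $\ell_a$ the per-sample log-likelihood-ratio increments, is a nonnegative martingale, and the method-of-mixtures / peeling argument underlying Theorem~1 and the stopping analysis of \cite{Garivier2016} delivers the per-pair tail bound needed for the sum over the (at most $I-1$) challengers to be at most $\delta$ for every $\alpha>1$.

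The step I expect to be the main obstacle is precisely this concentration bound: producing a uniform-in-$t$ tail inequality for the two-coordinate empirical-KL statistic with a threshold that grows only logarithmically in $t$. I would discharge it by checking that the heteroscedastic Gaussian family satisfies the regularity hypotheses under which the deviation lemma of \cite{Garivier2016} (and the self-normalized martingale bounds it rests on) is established --- finiteness and steepness of the log-partition function, identification of the family's Cram\'er rate function with $D_{\mathrm{HG}}$, and boundedness of the moments that enter the mixture integral, where the large-noise condition \eqref{sigma} is what keeps the relevant constants controlled --- after which the argument transcribes essentially verbatim. A secondary, routine point to settle en route is the identity $(a)$ in \eqref{Z2}, that the infimum over $\operatorname{Alt}(\hat\nu_t)$ is attained at a two-coordinate perturbation with the explicit weighted mean $q_i(t)$; this is a convexity/Lagrange-multiplier computation (the content of Appendix C) and can simply be invoked.
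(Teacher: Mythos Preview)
Your initial reduction --- on the error event $\{\tau_\delta<\infty,\ A_{\tau_\delta}\neq A^*(\nu)\}$ there is a time $t$ at which the true instance $\nu\in\operatorname{Alt}(\hat\nu_t)$, hence $Z(t)$ is dominated by the two-arm empirical KL $T_{A^*(\hat\nu_t)}(t)D_{\mathrm{HG}}(\hat\mu_{A^*(\hat\nu_t)}(t),\mu_{A^*(\hat\nu_t)})+T_{A^*(\nu)}(t)D_{\mathrm{HG}}(\hat\mu_{A^*(\nu)}(t),\mu_{A^*(\nu)})$ --- is exactly what the paper does. From here, however, the routes diverge. You take a union over the $I-1$ possible identities of the empirical best arm and aim to bound each resulting \emph{pair} statistic uniformly in $t$. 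The paper instead avoids the union over arms entirely: it upper-bounds the two-arm sum by the full sum $\sum_{i=1}^I T_i(t)D_{\mathrm{HG}}(\hat\mu_i(t),\mu_i)$ and then applies a single, purpose-built concentration result (Lemma~\ref{lemma7}) for that $I$-arm quantity, proved via a geometric peeling over the vector of pull counts (Lemma~\ref{lemma8}) and a Chernoff-type bound (Lemma~\ref{Lemma9}). The final calibration --- summing over $t$ and choosing $\alpha$ so that $\sum_t e^{I+1}(\beta(t,\delta,\alpha)^2\ln t/I)^I e^{-\beta(t,\delta,\alpha)}\le\alpha$ --- is also explicit in the paper and absent from your sketch.

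The substantive gap in your plan is the concentration step itself. You assert that $\{\mathcal{N}(\mu,2\mu\sigma^2):\mu>0\}$ is a one-parameter exponential family ``whose MLE is $\hat\mu_i(t)$'', and then invoke the self-normalized/method-of-mixtures machinery of \cite{Garivier2016} verbatim. The family \emph{is} one-parameter exponential, but its sufficient statistic is $X^2$, not $X$: expanding the density shows the $\mu$-dependent part of the exponent is $-x^2/(4\mu\sigma^2)$, while the coefficient of $x$ is $1/(2\sigma^2)$ and does not depend on $\mu$. Consequently the MLE solves $\hat\mu^2+2\hat\mu\sigma^2=\overline{X^2}$ and is \emph{not} the sample mean, so $T_i(t)D_{\mathrm{HG}}(\hat\mu_i(t),\mu_i)$ with $\hat\mu_i(t)$ the empirical mean is \emph{not} the generalized log-likelihood ratio for this family, and the deviation lemma of \cite{Garivier2016} (which is stated for families parametrized by the mean of their sufficient statistic) does not control it off the shelf. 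This is precisely why the paper does not cite that lemma but instead develops Lemma~\ref{lemma7} (and Lemmas~\ref{lemma8}--\ref{Lemma9}) from scratch for the heteroscedastic statistic. Your proposal would go through if you replaced the ``check regularity and transcribe'' step by an argument of that kind; as written, the invocation is not justified.
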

\begin{proof}
Please refer to Appendix \ref{pr2}.
\end{proof}

2)  Sampling rule: The sampling rule can be summarized by $Q(t)$ defined in \eqref{eqn:Qt} on the top of the next page,
\begin{figure*}
 \begin{equation}
 Q(t) = \left\{\begin{array}{ll}
 \operatorname{argmin}_{i} T_{i}(t-1), & \text{ if  } \min_{i } T_{i}(t-1) \leq \sqrt{t}, \\
 \operatorname{argmax}_{i} t\hat{w}_{i}^*(t-1) - T_{i}(t-1), &\text{ otherwise}.
 \end{array}\right. \label{eqn:Qt}
 \end{equation}
 \hrulefill
 \end{figure*}
 where $\hat{\mathbf{w}}^*(t)  \in \mathcal{W}_I  $ is the set of parameters to be updated at time step $t$.
 Define
\begin{align}
& f_{Y,i,t}(x_i)  \nonumber\\ 
&:= \frac{1}{2}\ln\Big(\frac{\hat{\mu}^\nu_{i} (t)+ x_i \hat{\mu}^\nu_{A^*(\nu)}(t)}{(1 +  x_{i})\hat{\mu}^\nu_{i}(t)}\Big)
+ \frac{x_i}{2}\ln\Big(\frac{\hat{\mu}^\nu_{i}(t) + x_i \hat{\mu}^\nu_{A^*(\nu)}(t)}{(1 +  x_{i})\hat{\mu}^\nu_{A^*(\nu)}(t)}\Big)\nonumber\\ 
&\qquad+  \frac{ x_i (\hat{\mu}^\nu_{A^*(\nu)}(t)+\hat{\mu}^\nu_{i}(t))(\hat{\mu}^\nu_{A^*(\nu)}(t) - \hat{\mu}^\nu_{i}(t))^2 }{4\sigma^2 ( \hat{\mu}^\nu_{i}(t) + x_i \hat{\mu}^\nu_{A^*(\nu)}(t))},
\end{align} and $f_{X,i,t}(y)  \triangleq f^{-1}_{Y,i,t}(x_i)$.
{The value of  $\hat{w}^*_i(t)$ is set to be
 \begin{equation}\label{upW}
\hat{w}^*_{i}(t) =  \frac{f_{X,i,t}(y^*(t))}{\sum_{a=1}^I f_{X,a,t}(y^*(t)) },
\end{equation}
where  $y^*(t)$ is the unique solution of the equation
$\sum_{i\neq A^*(\bm{\mu})} {y f'_{X,i,t}(y) } -  f_{X,i,t}(y) = 1$. }
{Note that the choice of $\hat{\mathbf{w}}^*(t)$ is determined according to the optimization problem in~\eqref{Theo2}; see the technical details in Appendix \ref{PL6}.}\\

It is worth noting  that the stopping rule and sampling rule of  HT\&S are different from those of original T\&S, and both of them are derived from the lower bound given in Theorem~\ref{Theo1} based on the heteroscedastic property.

{
\subsection{Time Complexity Analysis of 2PHT\&S }
\begin{theorem}\label{theo2}
Let \begin{align}
    \mathrm{s}& = \big\{\mathcal{N}(\mu^{\mathrm{s}}_{1},2\mu^{\mathrm{s}}_{1} \sigma^2), \ldots,\mathcal{N}(\mu^{\mathrm{s}}_{G},2\mu^{\mathrm{s}}_{G} \sigma^2) \big\} \quad \mbox{and}  \\
    \mathrm{b} &= \big\{\mathcal{N}(\mu^{\mathrm{b}}_{\mathcal{S}_f(1)},2\mu^{\mathrm{b}}_{\mathcal{S}_f(1)} \sigma^2), \ldots,\mathcal{N}(\mu^{\mathrm{b}}_{\mathcal{S}_f(2J)},2\mu^{\mathrm{b}}_{\mathcal{S}_f(2J)} \sigma^2) \big\}
\end{align} be heteroscedastic Gaussian bandit instances studied in Phases~I and~II of 2PHT\&S, where $\mu^{\mathrm{s}}_{g} = p|\mathbf{h}^{\mathrm{H}}(\sum_{k\in \mathcal{S}_g}\mathbf{f}_k)|^2  $.
Using the proposed  stopping rule  and the sampling rule, we have
\begin{equation}\label{Lim}
\limsup\limits_{\delta\rightarrow 0}\frac{\mathbb{E}[\tau]}{\ln(1/\delta)}
\le c^*_{\mathrm{u}}(\mathrm{s}) + c^*_{\mathrm{u}}(\mathrm{b}),
\end{equation}
where
\begin{align}
c^*_{\mathrm{u}}(\mathrm{s}) &= \Big( \frac{\mu^{\mathrm{s}}_{A^*(\mathrm{s})}}{2\sum_{i=1}^G\mu^{\mathrm{s}}_{i}}\ln\big(\frac{2 \mu^{\mathrm{s}}_{A^*(\mathrm{s})}}{ \mu^{\mathrm{s}}_{A^*(\mathrm{s})} + \mu^{\mathrm{s}}_{A'(\mathrm{s})}} \big) \nonumber\\
&+ \frac{ \mu^{\mathrm{s}}_{A'(\mathrm{s})}}{2\sum_{i=1}^G\mu^{\mathrm{s}}_{i}}\ln\big(\frac{2  \mu^{\mathrm{s}}_{A'(\mathrm{s})}}{ \mu^{\mathrm{s}}_{A^*(\mathrm{s})} +  \mu^{\mathrm{s}}_{A'(\mathrm{s})}} \big) \nonumber \\
&+
\frac{(\mu^{\mathrm{s}}_{A^*(\mathrm{s})} - \mu^{\mathrm{s}}_{A'(\mathrm{s})})^2}{8\sigma^2\sum_{i=1}^G\mu^{\mathrm{s}}_{i} } - \frac{(\mu^{\mathrm{s}}_{A^*(\mathrm{s})}+\mu^{\mathrm{s}}_{A'(\mathrm{s})})}{2\sum_{i=1}^G\mu^{\mathrm{s}}_{i} }
\Big)^{-1},
\end{align} 
and
\begin{align}
c^*_{\mathrm{u}}(\mathrm{b}) &= \Big( \frac{\mu^{\mathrm{b}}_{A^*(\mathrm{b})}}{2\sum_{i \in \mathcal{S}_f}\mu^{\mathrm{b}}_{i}}\ln\big(\frac{2 \mu^{\mathrm{b}}_{A^*(\mathrm{b})}}{ \mu^{\mathrm{b}}_{A^*(\mathrm{b})} + \mu^{\mathrm{b}}_{A'(\mathrm{b})}} \big)\nonumber \\
&+ \frac{ \mu^{\mathrm{b}}_{A'(\mathrm{b})}}{2\sum_{i\in \mathcal{S}_f}\mu^{\mathrm{b}}_{i}}\ln\big(\frac{2  \mu^{\mathrm{b}}_{A'(\mathrm{b})}}{ \mu^{\mathrm{b}}_{A^*(\mathrm{b})} +  \mu^{\mathrm{b}}_{A'(\mathrm{b})}} \big) \nonumber \\
&+
\frac{(\mu^{\mathrm{b}}_{A^*(\mathrm{b})} - \mu^{\mathrm{b}}_{A'(\mathrm{b})})^2}{8\sigma^2\sum_{i\in \mathcal{S}_f}\mu^{\mathrm{b}}_{i} } - \frac{(\mu^{\mathrm{b}}_{A^*(\mathrm{b})}+\mu^{\mathrm{b}}_{A'(\mathrm{b})})}{2\sum_{i\in\mathcal{S}_f}\mu^{\mathrm{b}}_{i} }
\Big)^{-1}.
\end{align}
\end{theorem}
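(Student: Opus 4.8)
The plan is to isolate a single sample-complexity bound for the HT\&S subroutine and then add up the two phases. Since Algorithm~\ref{alg1} calls HT\&S once in Phase~I on the $G$-armed heteroscedastic Gaussian instance $\mathrm{s}$ with risk $\delta_1$ and once in Phase~II on the $2J$-armed instance $\mathrm{b}$ with risk $\delta_2$, and since $\tau=\tau_1+\tau_2$ with $\delta_1+\delta_2=\delta$, it suffices to establish the auxiliary statement: for HT\&S run at risk $\delta$ on any heteroscedastic Gaussian instance $\nu$ with a unique best arm, $\limsup_{\delta\to0}\mathbb{E}[\tau^{\mathrm{HT\&S}}]/\ln(1/\delta)\le c^{*}_{\mathrm{u}}(\nu)$, where $c^{*}_{\mathrm{u}}(\nu)$ is the closed-form quantity appearing in the theorem. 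Applying this to $(\nu,\delta)=(\mathrm{s},\delta_1)$ and to $(\nu,\delta)=(\mathrm{b},\delta_2)$, choosing a split with $\delta_1=\delta_2=\delta/2$ (or any $\delta_i=\Theta(\delta)$) so that $\ln(1/\delta_i)=\ln(1/\delta)+O(1)\sim\ln(1/\delta)$, and summing gives~\eqref{Lim}. The PAC guarantee needed for the recommended arm $k^{*}$ — and hence for calling $\mathrm{s}$ and $\mathrm{b}$ the ``right'' instances — is already supplied by Lemma~\ref{lemma4} together with the union bound $\delta_1+\delta_2=\delta$.

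The heart of the proof is this auxiliary statement, which I would prove by transcribing the Track-and-Stop analysis of~\cite{Garivier2016} to the heteroscedastic Gaussian family, in the following order. (i)~The forced-exploration branch of the sampling rule~\eqref{eqn:Qt}, which pulls the least-sampled arm whenever $\min_i T_i(t-1)\le\sqrt{t}$, forces $\min_i T_i(t)\ge\sqrt{t}-O(1)$, hence $\hat\mu_i(t)\to\mu_i^{\nu}$ almost surely for every arm. (ii)~Under the large-noise condition~\eqref{sigma}, the $I$-arm analogue of the optimization~\eqref{Theo2} admits a unique maximizer $\mathbf{w}^{*}(\nu)$ and the map $\nu\mapsto(c^{*}(\nu),\mathbf{w}^{*}(\nu))$ is continuous, so the tracked weights $\hat{\mathbf{w}}^{*}(t)$ produced by~\eqref{upW} converge to $\mathbf{w}^{*}(\nu)$ a.s. (iii)~The tracking step then yields $T_i(t)/t\to w_i^{*}(\nu)$ a.s., and by continuity of the inner infimum in~\eqref{Z2} the stopping statistic satisfies $Z(t)/t\to c^{*}(\nu)^{-1}$ a.s. (iv)~Because the threshold is only $\beta(t,\delta,\alpha)=\ln(1/\delta)+\ln(\alpha t)$, whose $\ln t$ part is negligible at the time scale $t=\Theta(\ln(1/\delta))$, the stopping rule~\eqref{STR1} triggers at $\tau_\delta\le c^{*}(\nu)\ln(1/\delta)\,(1+o_\delta(1))$ almost surely. (v)~Since an almost-sure bound does not by itself control $\mathbb{E}[\tau_\delta]$, I would prove, as in~\cite{Garivier2016}, a deviational inequality bounding $\mathbb{P}(\tau_\delta>t)$ on the exponentially unlikely event that the empirical means have not yet concentrated, uniformly over small $\delta$; this upgrades (iv) to $\limsup_{\delta\to0}\mathbb{E}[\tau_\delta]/\ln(1/\delta)\le c^{*}(\nu)$.

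It remains to replace $c^{*}(\nu)$ by the explicit $c^{*}_{\mathrm{u}}(\nu)$, i.e.\ to show $c^{*}(\nu)\le c^{*}_{\mathrm{u}}(\nu)$, which is a purely deterministic optimization fact. The first ingredient is the two-arm reduction recorded as equality~$(a)$ in~\eqref{Z2} (derived in Appendix~C): for any weight vector, the infimum over $\operatorname{Alt}(\nu)$ collapses to a transportation cost between the best arm $A^{*}(\nu)$ and the runner-up $A'(\nu)$, with common transported mean $q$ given by~\eqref{qfun}. The second ingredient is to lower-bound the supremum over $\mathbf{w}$ by evaluating it at the explicit feasible allocation $w_i\propto\mu_i^{\nu}$; under this choice $q$ becomes the corresponding mean-weighted average of $\mu^{\nu}_{A^{*}(\nu)}$ and $\mu^{\nu}_{A'(\nu)}$, and substituting $q$ and $\mathbf{w}$ into $w_{A^{*}(\nu)}D_{\mathrm{HG}}(\mu^{\nu}_{A^{*}(\nu)},q)+w_{A'(\nu)}D_{\mathrm{HG}}(\mu^{\nu}_{A'(\nu)},q)$ and inverting reproduces the expression defining $c^{*}_{\mathrm{u}}(\nu)$ after routine algebra (the $\sum_i\mu_i^{\nu}$ being the normalizer, the logarithms coming from the $\ln$ part of $D_{\mathrm{HG}}$ and the $\sigma^{-2}$ term from its quadratic part). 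Specializing to $\nu=\mathrm{s}$ and $\nu=\mathrm{b}$ gives $c^{*}(\mathrm{s})\le c^{*}_{\mathrm{u}}(\mathrm{s})$ and $c^{*}(\mathrm{b})\le c^{*}_{\mathrm{u}}(\mathrm{b})$, which combined with Step~(v) and the first paragraph finishes the proof.

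Two final points. The Phase~II instance is in fact random: it depends on the super arm $g^{*}$ returned by Phase~I and on the empirical comparison $\hat\mu^{\mathrm{s}}_{g^{*}+1}\gtrless\hat\mu^{\mathrm{s}}_{g^{*}-1}$. So the Phase~II bound should be taken on the event $\mathcal{E}$ that Phase~I outputs the true optimal super arm and that this comparison is resolved correctly; on $\mathcal{E}$ the Phase~II instance is exactly $\mathrm{b}$, while $\mathbb{P}(\mathcal{E}^{c})\le\delta_1+o_\delta(1)\to0$ and, with the tail control from Step~(v), the contribution of $\mathcal{E}^{c}$ to $\mathbb{E}[\tau_2]$ is $o(\ln(1/\delta))$. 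I expect the genuine difficulty to lie in Steps~(ii)--(v): establishing the regularity (existence, uniqueness, continuity of the maximizer) of the heteroscedastic characteristic-time program — precisely the point at which condition~\eqref{sigma} is invoked — and proving the deviational inequality that converts the almost-sure rate for $\tau_\delta$ into a bound on $\mathbb{E}[\tau_\delta]$; the remaining steps are bookkeeping or direct adaptations of~\cite{Garivier2016}.
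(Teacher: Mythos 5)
Your proposal is correct and follows essentially the same route as the paper: a per-phase sample-complexity bound for HT\&S obtained by adapting the Track-and-Stop analysis of \cite{Garivier2016} (forced exploration, tracking of $\hat{\mathbf{w}}^{*}(t)$, concentration of $Z(t)/t$, and an exponential tail bound to control $\mathbb{E}[\tau_\delta]$), combined with the deterministic bound $c^{*}(\nu)\le c^{*}_{\mathrm{u}}(\nu)$ proved exactly as you suggest, by evaluating the characteristic-time program at the feasible allocation $w_i\propto\mu_i^{\nu}$ after the two-arm reduction. Your remark that the Phase~II instance is random and should be handled on the event that Phase~I succeeds is a point of care the paper's proof passes over silently, but it does not change the argument.
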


\begin{proof}
We first obtain the upper bound of the time complexity achieved by  HT\&S  in Lemma \ref{SC1}.
\begin{lemma}\label{SC1}
For a heteroscedastic Gaussian bandit instance $\nu$, Algorithm \ref{alg2} ensures that for any $\alpha \ge 1$ $t_0 >0$, $\epsilon > 0$ and $\delta>0$, there exists three constants $\Gamma_1 = \Gamma_1(\epsilon, t_0, I)$, $\Gamma_2 = \Gamma_2(\nu,\epsilon)$ and $\Gamma_3 = \Gamma_3(\nu,\epsilon)$ such that
\begin{align}
\mathbb{E}[\tau] & \le \Gamma_1 + \alpha c^*_{\mathrm{u}}(\nu)\left[\ln \left(\frac{ \alpha e c^*_{\mathrm{u}}(\nu)}{{\delta}}\right)+\ln \ln \left(\frac{\alpha c^*_{\mathrm{u}}(\nu)}{{\delta} }\right)\right] \nonumber \\ &\qquad+\sum_{T=1}^{\infty} \Gamma_2 T \exp \left(-\Gamma_3 T^{1 / 8}\right).
\end{align}
\end{lemma}
\begin{proof}
See the proof in Appendix \ref{PL9}.
\end{proof}
Using  HT\&S  in Phase I and Phase II,
the overall required number of time steps of Algorithm \ref{alg1}, i.e., $\tau$, can be expressed as
\begin{align}
\!\!\!\!\!\!\!\mathbb{E}[\tau]& = \mathbb{E}[\tau_1] + \mathbb{E}[\tau_2] \nonumber \\
 &\le \alpha_1 c^*_{\mathrm{u}}(\mathrm{s})\ln \left(\frac{1}{{\delta_1}}\right)
 \!+\! \alpha_2 c^*_{\mathrm{u}}(\mathrm{b})\ln \left(\frac{1}{{\delta_2}}\right)
 \!+\! M^{\mathrm{s}} \!+\! M^{\mathrm{b}},
 \label{Etau1}
 \end{align}
where
\begin{align}
M^{\mathrm{s}}& =  \Gamma_1^{\mathrm{s}} + \alpha_1 c^*_{\mathrm{u}}(\mathrm{s})\Big[\ln(\alpha_1e)+ \ln\ln\big(\frac{\alpha_1 c^*_{\mathrm{u}}(\mathrm{s}) }{\delta_1} \big)\Big]  \nonumber\\ 
&\qquad+ \sum_{T=1}^{\infty} \Gamma_2^{\mathrm{s}} T \exp \left(-\Gamma_3^{\mathrm{s}} T^{1 / 8}\right),\\
M^{\mathrm{b}} &=  \Gamma_1^{\mathrm{b}} + \alpha_2 c^*_{\mathrm{u}}(\mathrm{b})\Big[\ln(\alpha_2e)+ \ln\ln\big(\frac{\alpha_2 c^*_{\mathrm{u}}(\mathrm{b}) }{\delta_2} \big)\Big] \nonumber\\  
&\qquad+ \sum_{T=1}^{\infty} \Gamma_2^{\mathrm{b}} T \exp \left(-\Gamma_3^{\mathrm{b}} T^{1 / 8}\right). \label{CU}
\end{align}
For simplicity, we fix $\alpha_1 = \alpha_2 = 1$ and $\delta_1 = \delta_2 = \frac{1}{2}\delta$, then~\eqref{Etau1} can be rewritten as
\begin{align}
&\mathbb{E}[\tau] \le  c^*_{\mathrm{u}}(\mathrm{s})\ln \left(\frac{1}{2{\delta}}\right)
 + c^*_{\mathrm{u}}(\mathrm{b})\ln \left(\frac{1}{2{\delta}}\right)
 + M^{\mathrm{s}} + M^{\mathrm{b}}.
 \end{align}
When we normalize by $\ln (1/\delta)$ and let $\delta$ tend to zero, we obtain (24) as desired.

When $\delta$ tends to $0$, we obtain
\begin{equation}\label{Lim2}
\limsup\limits_{\delta\rightarrow 0}\frac{\mathbb{E}[\tau]}{\ln(1/\delta)}
\le c^*_{\mathrm{u}}(\mathrm{s}) + c^*_{\mathrm{u}}(\mathrm{b}),
\end{equation}
which concludes the proof.
\end{proof}

To summarize, we present in Table \ref{Tab3} the time complexities of the proposed 2PHT\&S  algorithm and its baseline, the original T\&S algorithm \cite{Garivier2016}, with fixed $\alpha = \alpha_1 = \alpha_2 = 1$. { In Table I, the terms $c^*_{\mathrm{u}}(\mathrm{s})$ and $c^*_{\mathrm{u}}(\mathrm{b})$ result from the time complexities of Phase I and Phase II, respectively. These constants respectively represent the difficulties of finding  the optimal super arm in Phase I and the optimal base arm in Phase II.} { There are two main differences between 2PHT\&S and T\&S. First, the proposed 2PHT\&S algorithm contains two phases. In the first phase, the BS searches for the optimal super arm (beam set), and  in the second phase, the BS searches for the optimal base arm (beam) among the beams in the optimal beam set and its neighbor with higher mean  reward. However, the vanilla T\&S algorithm directly searches for the optimal base arm (beam) among the whole beam space, which may be prohibitively large.
Second, the heteroscedastic property is explicitly considered in the design of the sampling rule and the stopping rule of the 2PHT\&S algorithm, but it is not exploited by the original T\&S algorithm.} { Furthermore, we would like to note that the computational complexity of T\&S algorithm scales linearly with  the number of beams $K$, while that of the proposed 2PHT\&S algorithm scales linearly with  the number of beam sets $G$ and the number of beams in each beam set $J$. Because $G$ and $J$ are usually much smaller than $K$, in general, the proposed 2PHT\&S algorithm can significantly reduce the computational complexity as compared to the original T\&S algorithm.}
{
\begin{table*}[t]
\caption{Time complexities}
\begin{center}
\begin{tabular}{|c|c|c|}
\hline
\textbf{Algorithm}& T\&S  & 2PHT\&S   \\
\hline
 Upper bound of $\lim\limits_{\delta\rightarrow 0}\frac{\mathbb{E}[\tau]}{\ln(1/\delta)}$     &   $T^*_{\mathrm{u}}(\nu)$ (see \eqref{tl}) &   $c^*_{\mathrm{u}}(\mathrm{s}) + c^*_{\mathrm{u}}(\mathrm{b})$  (see \eqref{CU}) \\
\hline
Lower bound of $\mathbb{E}[\tau]$ &  \multicolumn{2}{c|}{$c^{*}(\nu) \ln \left(\frac{1}{4\delta}\right)$ (see \eqref{Theo2})}\\
\hline
\end{tabular}
\label{Tab3}
\end{center}
\end{table*}
}}

{
\vspace{-0.0cm}
\section{Simulation Results}
{ We consider  a massive mmWave MISO system, where a BS equipped with $N=64$ transmit antennas serves a single-antenna user. The size of codebook is set as $K = 120$ and the correlation length is set to $J = 2\lceil\frac{K}{N} \rceil -1$.} Note that the
mmWave channel is sparse, and hence we set the maximum number of channel paths as $3$, which consists of one dominant LoS path and two NLoS paths.  In addition, according
to practical in-field measurements, NLoS paths suffer around  $1$ dB more path loss than the LoS path.
The noise variance is fixed to $\sigma^2 = -80$ dBm. The code to reproduce all the experiments can be found at this   Github link (\url{https://github.com/YiWei0129/Fast-beam-alignment}).

In the simulation, we compare the proposed 2PHT\&S  with three other bandit algorithms, i.e., the original T\&S  \cite{Garivier2016},  two-phase T\&S (2PT\&S), where the search process is also divided into two phases as the proposed algorithm and the original T\&S is used in each phase, and the proposed HT\&S. { Note that  2PT\&S  also exploits the correlation between the nearby beams (Property 1.1), but it does not take the heteroscedascity into account. The algorithms  we consider are all designed to search the optimal beam/base arm with high  probability  using as few time steps as possible.} { Moreover, we also investigate the 2PHT\&S algorithm with an overlapping scheme, i.e., in  Phase II, we construct the base arm set with $2J$ base arms where first $J$ base arms overlap with the last $J$ base arms of the super arm to the left of the selected super arm  in Phase I, and the last $J$ base arms overlap with the first $J$ base arms of the  super arm to the right of the selected super arm. We call this scheme 2PHT\&S (overlapping).}
{ Finally, we also consider the classical BA algorithms in wireless communications, i.e., exhaustive BA (EBA) algorithm and the hierarchical exhaustive BA (HEBA) algorithm as  performance baselines.\footnote{IIn the HEBA algorithm, the search phase is divided into two phases.  In each phase, the EBA algorithm is employed.}}

{ Additionally, note that in a typical mmWave communication setting, one may experience channel coherence times of 35 $\mu s$ when the system is deployed at the carrier frequency 28 GHz.
 In this setting, $T = 35$ $\mu s$, each time slot is 2.5 $ns$, and there are approximately 14000 time slots in each coherence time period.
 In practice, another link with different frequency can be used for feedback, thus no delay will  result.}

{ First, we fix $\delta = 0.1$, and $\delta_1 = \delta_2 = \frac{\delta}{2}$ for Phase I and Phase II.
We consider a scenario where the following widely-used channel model is considered, i.e.,
\begin{equation}
\mathbf{h}=\sqrt{\frac{N}{L}}\Big(\beta^{(1)} \mathbf{a}\big(\theta^{(1)}\big)+\sum_{l=2}^{L} \beta^{(l)} \mathbf{a}\big(\theta^{(l)}\big)\Big),
\end{equation}
 and the angle of departure (AoD) and path losses of three paths are set to $[0.7546\pi, 0.3489\pi, 0.6971\pi]$ and $[0, -3,-3]$ dB, respectively.  Fig.~\ref{Reward1} illustrates the means of the rewards associated with each base arm and super arm when $p = 0$~dBm. In this setting, we have $K = 120,N = 64,J = 2\lceil \frac{K}{N} \rceil - 1 = 3,G = \frac{K}{J}=40$. As can be seen, in this scenario, the AoDs of the three paths are different, such that the index of beams with non-zero means are separated by a certain  distance. Furthermore, the base arm 18 is the optimal arm, and the super arm $6$, which is related to the base arms $16$, $17$ and $18$, also has the largest mean of the reward.
Table \ref{tab2} presents the estimated average time complexities of the considered algorithms in this scenarios when the transmit signal-to-noise is chosen from the set $\{66, 70, 74,78 \}$ dB. It can be observed that  HT\&S and 2PHT\&S outperform  T\&S and 2PT\&S, since the former algorithms  explicitly exploit the heteroscedastic property in the considered {bandit BA problem}. Furthermore, the number of time steps of 2PHT\&S (2PT\&S) is much smaller than that of  HT\&S (T\&S). This improvement is  a consequence of dividing  the search process into two phases.}


\begin{figure}[t]
\renewcommand{\captionfont}{\small}
\centering
\includegraphics[scale=.35]{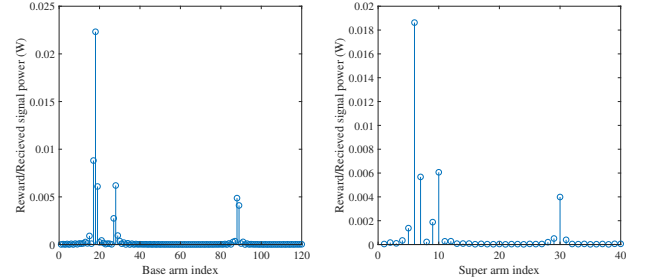}
\caption{ Mean of the reward of each base arm and super arm in Scenario 1 ($p = 0$ dBm). }
\label{Reward1}
\normalsize
\end{figure}

{
\begin{table*}[t]
\caption{ Average time complexities in Scenario 1 for $\delta = 0.1$, averaged over 100 experiments. }

\begin{center}
\begin{tabular}{|c|c|c|c|c|}
\hline
\textbf{SNR (dB)}& 66 & 70 & 74 & 78  \\
 \hline
   \textbf{EBA}& 4044.0 \tiny$  \pm  1089.7  $  & 1620.0 \tiny$  \pm 630.6$ & 744.0 \tiny$  \pm  308.8 $  & 348.0 \tiny $\pm 68.1$  \\
\hline
\textbf{T\&S} &3941.6 \tiny$  \pm  1177.8$ &1409.4 \tiny$  \pm  504.1 $ &  495.3 \tiny$  \pm  197.1 $   &  155.1\tiny$  \pm  49.6 $    \\
\hline
\textbf{HT\&S}&  1631.7 \tiny$  \pm   1028.6  $    & 577.8 \tiny$  \pm  349.0  $  &  213.2  \tiny$  \pm  110.3 $  & 128.6\tiny$\pm 23.9$ \\
 \hline
\textbf{HEBA}& 1562.6 \tiny$  \pm  356.4  $  & 630.4 \tiny$  \pm 136.3$ & 234.4 \tiny$  \pm  38.4 $  & 124.0 \tiny $\pm 16.9$  \\
\hline
\textbf{2PT\&S}&  1387.8 \tiny$  \pm 397.0  $   & 496.2 \tiny$  \pm  136.8 $  & 155.9 \tiny$  \pm  50.4$      &  58.9 \tiny$\pm 12.3$  \\
\hline
\textbf{2PHT\&S (overlapping)}& 452.1 \tiny$  \pm  322.3  $  & 270.8 \tiny$  \pm  124.3$ & \textbf{82.7} \tiny$  \pm  42.9 $  & 51.2 \tiny $\pm 4.3$  \\
\hline
\textbf{2PHT\&S}& \textbf{384.0} \tiny$  \pm  {302.9}  $  & \textbf{161.4} \tiny$  \pm  {82.4}$ &  {93.0} \tiny$  \pm  48.7 $  & \textbf{49.2} \tiny $\pm {2.5}$  \\
 \hline
\end{tabular}
\end{center}
\label{tab2}
\end{table*}
}

{ Second, we investigate a scenario in which the AoD of the LoS path and that of one NLoS path are close, i.e., the AoDs and path losses of three paths are $[0.3352\pi, 0.3521\pi, 0.8125\pi]$ and $[0, -3, -3]$ dB. In this scenario, since the AoDs of the main LoS path and one NLoS path are very close, the optimal base arm (base arm $91$) is the neighbor of the suboptimal base arm (base arm $90$), then the optimal super arm (super arm $31$) is also the neighbor of the suboptimal super arm (super arm $30$) and they have similar means of the rewards. Table~\ref{tab3} displays  the estimated average time complexities of the algorithms  we consider in the there scenarios when SNR  $\in \{70, 74, 78, 82\}$ dB. As compared to the results in Table \ref{tab2}, the time complexities of the  algorithms  under consideration  in Scenario 2 are higher.} Note that 
if the amplitudes of the largest path and the second largest path are   close, the time complexities of all algorithms will increase.

 \begin{table*}[h]
\caption{ Average time complexities in Scenario 2 for $\delta = 0.1$, averaged over 100 experiments. }

\begin{center}
\begin{tabular}{|c|c|c|c|c|}
\hline
\textbf{SNR (dB)}& 70 &74 & 78 & 82  \\
\hline
\textbf{EBA}&6518.4 \tiny$  \pm  1889.7  $  & 2373.6\tiny$  \pm 904.0$ &  912.0 \tiny$  \pm  309.5 $  & 441.6 \tiny $\pm 124.7$  \\
\hline
\textbf{T\&S} & 5981.1 \tiny$  \pm  1919.3 $   & 2119.5 \tiny$  \pm  1063.6 $ &  806.7 \tiny$  \pm  358.6 $   & 295.6 \tiny$  \pm  113.4 $  \\
 \hline
\textbf{HT\&S}&  3752.0 \tiny$  \pm   2029.4  $    & 1569.9 \tiny$  \pm 866.5  $  &  529.5  \tiny$  \pm  355.8 $  & 233.5 \tiny$\pm 109.3$ \\
\hline
\textbf{HEBA}& 1599.0 \tiny$  \pm  616.7  $  & 570.0 \tiny$  \pm  172.2$ &  249.6 \tiny$  \pm  65.8 $  & 134.6 \tiny $\pm 34.2$  \\
\hline
\textbf{2PT\&S}&  1437.6 \tiny$  \pm 427.7  $   & 509.0 \tiny$  \pm  203.2$  & 172.7 \tiny$  \pm  69.7 $      &  82.5\tiny$\pm 31.6$  \\
\hline
 \textbf{2PHT\&S}& \textbf{713.6} \tiny$  \pm  543.3 $  & \textbf{329.4} \tiny$  \pm  203.6$ &  \textbf{135.1} \tiny$  \pm  64.1$  & \textbf{70.8} \tiny $\pm 29.1$  \\
 \hline
\end{tabular}
\end{center}
\label{tab3}
\end{table*}

}}

{ Third, we consider a more realistic channel model from the 3GPP TR 38.901 standard, i.e., Scenario 3. The channel coefficients are generated by the QuaDRiGa simulator \cite{6758357}, which extends the popular Wireless World Initiative for New Radio (WINNER) channel model with new features to enhance its realism. In Fig. \ref{reward3}, the base arm $86$ is the
optimal arm to be chosen, and the super arm $29$, which is related to the base arms $85$, $86$ and $87$, has the largest means of the reward among super arms of size $3$.
Table \ref{tab4} presents the estimated average time complexities of the considered algorithms in Scenario 3 when SNR $\in \{78, 80, 82, 84\} $ dB. It can be observed that similar to the previous channel models, the proposed  2PHT\&S algorithm is clearly superior in terms of time complexity compared to  the other competing algorithms.}
\begin{figure}[t]
\renewcommand{\captionfont}{\small}
\centering
\includegraphics[scale=.35]{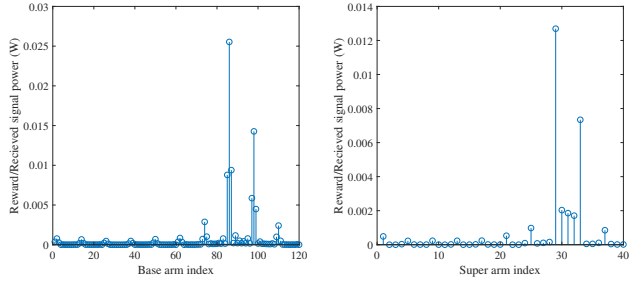}
\caption{ Mean of the reward of each base arm and super arm in Scenario 3 ($p = 0$ dBm). }
\label{reward3}
\normalsize
\end{figure}

 \begin{table*}[h]
\caption{ Average time complexities in Scenario 3 for $\delta = 0.1$, averaged over 100 experiments. }
\begin{center}
\begin{tabular}{|c|c|c|c|c|}
\hline
\textbf{SNR (dB)}& 78 &80 & 82 & 84  \\
 \hline
  \textbf{EBA}& 478.8 \tiny$  \pm  123.6  $  & 351.6 \tiny$  \pm  80.5$ &  289.2 \tiny$  \pm  61.7 $  & 244.8 \tiny $\pm 23.6$\\
\hline
\textbf{T\&S} & 357.2 \tiny$  \pm  147.7$ &  219.2 \tiny$  \pm  78.0 $ &  163.4 \tiny$  \pm  52.9 $   &  128.0\tiny$  \pm  23.9 $    \\
 \hline
   \textbf{HEBA}& 320.1 \tiny$  \pm  100.8  $  & 222.4 \tiny$  \pm  56.2$ &  161.6 \tiny$  \pm  43.8 $  & 126.0 \tiny $\pm 27.5$  \\
\hline
\textbf{2PT\&S}&  246.5 \tiny$  \pm 91.9  $   & 160.2 \tiny$  \pm  61.7 $  & 109.5 \tiny$  \pm  43.8 $      &  70.8\tiny$\pm 21.9$  \\
 \hline
\textbf{HT\&S}&  232.0 \tiny$  \pm   115.6  $    & 175.4 \tiny$  \pm  77.5  $  &  130.5  \tiny$  \pm  32.85 $  & 121.8\tiny$\pm 8.6$ \\
\hline
 \textbf{2PHT\&S}& \textbf{159.7} \tiny$  \pm  84.5  $  & \textbf{122.6} \tiny$  \pm  59.1$ &  \textbf{78.2} \tiny$  \pm  29.1 $  & \textbf{65.0} \tiny $\pm 18.9$  \\
 \hline
\end{tabular}
\end{center}
\label{tab4}
\end{table*}

{Finally, to validate the effectiveness of the proposed 2PHT\&S, we investigate a practical scenario in a city, i.e., Scenario 4, which is shown in Fig. \ref{Scena3}, where the BS and user are located at (573m, 622m, 41m) and (603m, 630m, 43m), respectively.   For this scenario, we generate the semi-practical channel data in the ray-tracing setups using the software \emph{Wireless InSite}.  Note that {\em  Wireless InSite} is a professional electromagnetic simulation tool that
models the physical characteristics of irregular
terrain and urban building features, performs the electromagnetic calculations, and then evaluates the
signal propagation characteristics.
As can be seen in Fig.~\ref{Reward3}, the base arm $119$ is the
optimal arm, and the super arm $40$, which is related to the base arm $118$, $119$ and $120$, has the largest means of the rewards among super arms of size $3$.
Table~\ref{Tab5} presents the estimated average time complexities of the considered algorithms in Scenario 4 when SNR $\in \{70, 74, 78, 82\}$ dB. It can be observed that similar to the previous simulated scenarios, the proposed  2PHT\&S outperforms  the other algorithms in this more practical scenario, which implies that the proposed algorithm is also effective in practice.}
   \begin{figure}[t]
\centering
\includegraphics[scale=.35]{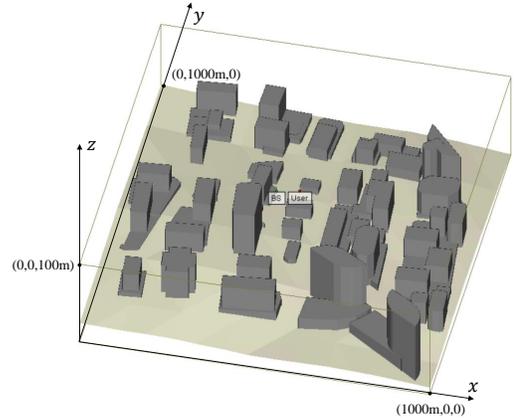}
\caption{Scenario 4:  A practical scenario in a city. }
\label{Scena3}
\normalsize
\end{figure}

\begin{figure}[t]
\centering
\includegraphics[scale=.35]{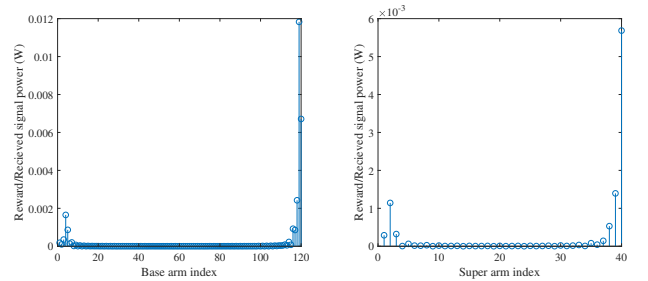}
\caption{Means of the rewards associated with each base arm and super arm in Scenario 4 ($p = 10$ dBm). }
\label{Reward3}
\normalsize
\end{figure}


\begin{table*}[h]
\caption{ Average time complexities in Scenario 4 for $\delta = 0.1$, averaged over 100 experiments. }

\begin{center}
\begin{tabular}{|c|c|c|c|c|}
\hline
\textbf{SNR (dB)}& 70 &74 & 78 & 82  \\
\hline
\textbf{T\&S} & 7944.5 \tiny$  \pm  2704.0$ &  2388.4 \tiny$  \pm  784.9 $ &  996.6\tiny$  \pm  357.8 $   &  372.7\tiny$  \pm  131.9 $    \\
\hline
  \textbf{EBA}& 6912 \tiny$  \pm  3235.4  $  & 2484 \tiny$  \pm  907.0$ &  1044.0 \tiny$  \pm  423.5
   $  & 576 \tiny $\pm 251.7$  \\
 \hline
 \textbf{HT\&S}&  2988.4 \tiny$  \pm  2307.2   $    & 1680.4 \tiny$  \pm  896.1  $  &  720.4 \tiny$  \pm  357.8 $  & 288.6\tiny$\pm 101.1$ \\
\hline
\textbf{HEBA}& 2948.8 \tiny$  \pm  666.4  $  & 1138.6 \tiny$  \pm  267.6$ &  409.0 \tiny$  \pm  68.2 $  & 202.4 \tiny $\pm 32.4$  \\
 \hline
\textbf{2PT\&S}&  2156.7 \tiny$  \pm 431.9  $   & 1164.4 \tiny$  \pm  313.6$  & 381.4 \tiny$  \pm  108.8 $      &  140.0\tiny$\pm 27.0$  \\
 \hline
 \textbf{2PHT\&S}& \textbf{561.9} \tiny$  \pm  390.4  $  & \textbf{312.0} \tiny$  \pm  157.1$ &  \textbf{161.0} \tiny$  \pm  109.2 $  & \textbf{48.0} \tiny $\pm 0$  \\
\hline
\end{tabular}
\end{center}
\label{Tab5}
\end{table*}

\section{Conclusion}
In this work, we developed a bandit-based fast BA algorithm 2PHT\&S to reduce BA latency for mmWave communications. By taking advantage of the correlation structure among beams that the information from nearby beams are similar and the heteroscedastic property that the variance of the reward of an arm (beam) is related to its mean, the proposed algorithm groups all beams into several beam sets such that the optimal beam set is first selected and the optimal beam is identified in this set. The proposed  2PHT\&S is shown to  theoretically and empirically perform much better than its baseline competitors.


\begin{appendices}
{
\section{Proof of Lemma \ref{lemma4}}\label{pr2}
\begin{proof}
Let $\mu_{i}^{\nu}$ be the mean of the reward of the arm $i$ in the heteroscedastic Gaussian bandit instance $\nu$.
{
First, for every arm $i_1$ and $i_2$ where $\mu_{i_1}^\nu< \mu_{i_2}^\nu$ and $\hat{\mu}^\nu_{i_1}(t) > \hat{\mu}^\nu_{i_2}(t)$, we define
\begin{align}
&Z_{i_1,i_2}(t)\nonumber\\
&= \inf\limits_{{\mu}^{u}_{i_1}< {{\mu}}^{u}_{i_2}}\Big\{T_{i_1}(t)D_{\mathrm{HG}}(\hat{\mu}^\nu_{i_1}(t), {\mu}^{u}_{i_1}) +
T_{i_2}(t)D_{\mathrm{HG}}(\hat{\mu}^\nu_{i_2}(t), {\mu}^{u}_{i_2})\Big\} \nonumber\\
&\le T_{i_1}(t)D_{\mathrm{HG}}(\hat{\mu}^\nu_{i_1}(t), {\mu}^\nu_{i_1}) +
T_{i_2}(t)D_{\mathrm{HG}}(\hat{\mu}^\nu_{i_2}(t), {\mu}^\nu_{i_2}).
\end{align}
}
 Then, according to the definition of $Z(t)$ given in \eqref{Z2}, we can rewrite $Z(t)$ as
\begin{align}
Z(t) &= \min_{i\neq A^*(\hat{\nu}_t)}\inf_{u\in \text{Alt}(\hat{\nu}_t)}
\Big\{T_{A^*(\hat{\nu}_t)}(t)D_{\text{HG}}(\hat{\mu}_{A^*(\hat{\nu}_t)}^{{\nu}}(t), \mu_{A^*(\hat{\nu}_t)}^u) \nonumber\\
&\qquad\qquad+  T_i(t)D_{\text{HG}}(\hat{\mu}_{i}^{\nu}(t), \mu_{i}^u)  \Big\} \\
& = \min\limits_{i \in [I]}Z_{A^*(\hat{\nu}_t), i}(t).
\end{align}
\begin{lemma}\label{lemma7}
For all $\psi \ge I+1$ and $t \ge 0$, it holds that
\begin{equation}
\mathbb{P}\bigg(\sum_{i=1}^I T_{i}(t) D_{\mathrm{HG}}\big(\hat{\mu}^\nu_{i}(t), \mu^\nu_{i}\big)\! \geq\! \psi \bigg)\le e^{I+1-\psi}\left( \frac{\lceil\psi \ln t  \rceil \psi}{I} \right)^I.
\end{equation}
\end{lemma}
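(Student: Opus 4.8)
The plan is to prove this as the heteroscedastic-Gaussian analogue of the exponential-family deviation inequality underlying Garivier--Kaufmann~\cite{Garivier2016}. The point is that $\{\mathcal{N}(\mu,2\mu\sigma^2):\mu>0\}$ is a one-parameter exponential family whose relative-entropy rate function is $D_{\mathrm{HG}}$, so the generic machinery applies once $D_{\mathrm{HG}}$ is substituted for the usual KL divergence. Concretely I would (i) decouple the adaptive sampling by passing to independent per-arm sample streams, (ii) record a single-arm, single-count Chernoff bound, (iii) peel the random pull counts $T_i(t)$ into $O(\ln t)$ geometric blocks, and (iv) discretize how the deviation budget $\psi$ is split among the $I$ arms; the claim then follows by multiplying the number of admissible configurations by the per-configuration probability.

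For step (i), enlarge the probability space to carry, for every arm $i$, an i.i.d.\ sequence $(X_{i,s})_{s\ge1}\sim\mathcal{N}(\mu^\nu_i,2\mu^\nu_i\sigma^2)$ with $\hat\mu^\nu_i(t)=\frac{1}{T_i(t)}\sum_{s=1}^{T_i(t)}X_{i,s}$; this reduces the history-dependence of $T_i(t)$ to a union over deterministic values $n_i$, which is what legitimizes the inclusion $\{\sum_i T_i(t)D_{\mathrm{HG}}(\hat\mu^\nu_i(t),\mu^\nu_i)\ge\psi\}\subseteq\bigcup_{n_1,\dots,n_I}\{\cdots\}$. For step (ii), the basic fact is that for each fixed $n\ge1$ and $x\ge0$, $\mathbb{P}(n\,D_{\mathrm{HG}}(\frac1n\sum_{s\le n}X_{i,s},\mu^\nu_i)\ge x)\le 2e^{-x}$, obtained from the Cram\'er/Chernoff bound for this exponential family applied to each of the two tails, using that $a\mapsto D_{\mathrm{HG}}(a,\mu^\nu_i)$ is monotone on either side of $\mu^\nu_i$.

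Steps (iii)--(iv) are the crux. A flat union over $(T_1(t),\dots,T_I(t))$ costs a factor $t^I$ (only $\sum_iT_i(t)\le t$ is available), which is far too lossy; instead, partition $\{1,\dots,t\}$ into $\lceil\ln t\rceil$ geometric blocks so that, within a block, the worst-case single-arm deviation probability is still $\le 2e^{-c_i}$ for a budget share $c_i$, and simultaneously discretize the allocation $(c_1,\dots,c_I)\in\mathbb{N}^I$ of $\lceil\psi\rceil$ among the $I$ arms (a stars-and-bars count of at most $\binom{\lceil\psi\rceil+I-1}{I-1}\le(e\psi/I)^I$ allocations, the hypothesis $\psi\ge I+1$ being exactly what keeps this factor under control). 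Covering the target event by the union over the $\le(\lceil\psi\ln t\rceil)^I$ block-tuples and the $\le(e\psi/I)^I$ allocations of events each of probability $\lesssim e^{-\psi}$, and collecting the block-peeling and discretization overheads into the prefactor $e^{I+1}$, produces $e^{I+1-\psi}\big(\lceil\psi\ln t\rceil\,\psi/I\big)^{I}$.

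I expect the main obstacle to be the combinatorial bookkeeping in steps (iii)--(iv): choosing the geometric grid so that the block maxima line up with the per-block Chernoff bounds, tracking the exact constants (the $e^{I+1}$ prefactor, the $I^{-I}$, the ceilings) rather than just the order of magnitude, and---because the heteroscedastic family's natural statistic is $x^2$ rather than $x$---ensuring the single-arm bound in step (ii) is applied in a form genuinely valid for the empirical-mean estimator $\hat\mu^\nu_i(t)$ under this parametrization. Everything after the exponential-family Chernoff bound is a careful but essentially mechanical transcription of the Garivier--Kaufmann peeling argument with $D_{\mathrm{HG}}$ in place of the generic KL divergence.
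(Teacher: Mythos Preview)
Your outline is broadly correct and follows the same overall architecture as the paper---geometric peeling of the random pull counts $T_i(t)$ into $M=\lceil\ln t/\ln(1+\eta)\rceil$ blocks, then a union over the $M^I$ block configurations, with the choice $\eta=1/(\psi-1)$ producing the final $e^{I+1-\psi}(\lceil\psi\ln t\rceil\psi/I)^I$. The difference lies in how the per-configuration bound $(\psi e/I)^I e^{-\psi/(1+\eta)}$ is obtained.

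The paper does \emph{not} use a single-arm Chernoff bound and then discretize the budget via stars-and-bars. Instead it proves a joint product bound (Lemma~\ref{Lemma9}): for fixed thresholds $b_i\le\mu_i^\nu$ and lower bounds $\bar t_i$ on the pull counts, $\mathbb{P}\bigl(\cap_i\{\hat\mu_i^\nu(t)\le b_i,\ \bar t_i\le T_i(t)\}\bigr)\le\prod_i e^{-\bar t_i D_{\mathrm{HG}}(b_i,\mu_i^\nu)}$, established via the exponential supermartingale $C(t)=\exp\bigl(\sum_i\lambda_i S_i(t)-T_i(t)\phi_i(\lambda_i)\bigr)$ with $\mathbb{E}[C(t)]\le1$. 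The aggregation over continuous allocations $(\zeta_1,\dots,\zeta_I)$ with $\sum_i\zeta_i=\psi$ is then handled by invoking \cite[Lemma~8]{Lipschitz}, which directly yields the factor $(\psi e/I)^I$ without any integer discretization. Your per-arm plus stars-and-bars route reaches the same destination but is slightly lossier in constants (you pick up a factor $2^I$ from the two tails and have to round $\psi$), whereas the martingale route treats all arms in one stroke and keeps the constants tight. Conversely, your approach is more self-contained and does not need the external allocation lemma.

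On the point you flag at the end: you are right that $\{\mathcal{N}(\mu,2\mu\sigma^2)\}$ has natural statistic $x^2$, so $D_{\mathrm{HG}}$ is not the Legendre transform of the linear cumulant generating function. The paper sidesteps this by using only the Gibbs variational \emph{lower} bound $D_{\mathrm{HG}}(b,\mu)\ge\sup_{\lambda}\{\lambda b-\phi(\lambda)\}$, which is the direction needed for the Chernoff--Markov step in Lemma~\ref{Lemma9}; you would need the same observation to make your step~(ii) rigorous.
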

\begin{proof}
See the proof in Appendix \ref{appB}.
\end{proof}
Now consider
\begin{align}
& \mathbb{P}\big(\tau_{{\delta}}<\infty, A_{\tau_{{\delta}}} \neq  A^{*}(\nu)\big) \nonumber\\&
\leq \mathbb{P}\Big(\exists\, i \in \mathcal{G} \backslash A^{*}(\nu), \exists\, t>0: \hat{\mu}^\nu_{i}(t)>\hat{\mu}^\nu_{ A^{*}(\hat{\nu}_t)},  \nonumber\\
&\qquad Z_{ A^{*}(\hat{\nu}_t),i}(t)>\beta(t, {\delta_1,\alpha_1})\Big) \\
& \leq \mathbb{P}\Big(\exists\,  t > 0: \exists\, i \in [I] \backslash A^{*}(\nu): \\ &\hspace{3em} T_{A^*(\hat{\nu}_t)}(t)D_{\mathrm{HG}}\big(\hat{\mu}^\nu_{A^*(\hat{\nu}_t)}(t), {\mu}^\nu_{A^*(\nu)}\big) + \nonumber\\
& \qquad T_{i}(t)D_{\mathrm{HG}}(\hat{\mu}^\nu_{i}(t), {\mu}^\nu_{i})>\beta(t, {\delta},\alpha)\Big)  \\
& \leq \mathbb{P}\Big(\exists\, t >0: \sum_{i\in [I]} T_{i}(t) D_{\mathrm{HG}}\big(\hat{\mu}^\nu_{i}(t), \mu^\nu_{i}\big) \geq \beta(t, {\delta},\alpha)\Big) \\
 &\overset{(a)}{\leq} \sum_{t=1}^{\infty} e^{I+1}\left(\frac{\beta(t, {\delta},\alpha)^{2} \ln (t)}{I}\right)^{I} e^{-\beta(t, {\delta},\alpha)}
\end{align}
where $(a)$ holds using the union bound and Lemma \ref{lemma7}.
Hence, with $\beta(t,{\delta},\alpha) = \ln(\alpha t/{\delta})$, by choosing an $\alpha$ satisfying
\begin{equation}
\sum_{t=1}^{\infty} \frac{e^{I+1}}{tI^I}\big[ \ln^2(\alpha t/{\delta})\ln(t)  \big]^I \le \alpha,
\end{equation}
we obtain $\mathbb{P}\big(\tau_{{\delta}}<\infty, A_{\tau_{{\delta}}} \neq A^{*}(\nu)\big)\le {\delta}$.
\end{proof}

\vspace{-0.4cm}
\section{Proof of Lemma \ref{lemma7}}\label{appB}
In this section, we prove Lemma~\ref{lemma7} assuming that Lemma~\ref{lemma8} holds, and then we complete the argument by proving Lemma~\ref{lemma8}.

\begin{lemma}\label{lemma8}
Fix an $I$-armed heteroscedastic Gaussian bandit $\nu$ and  let $1\le \bar{t}_i \le t$. Let $\eta> 0$. Define the event:
\begin{equation}\label{defiA}
E = \bigcap_{i=1}^I \{\bar{t}_i \le T_i(t) \le (1+\eta)\bar{t}_i  \}.
\end{equation}
For $\psi \ge (1+\eta)I $, it holds that
\begin{equation}
\mathbb{P}\Big(\mathbf{1}_E\sum_{i=1}^I T_{i}(t)D_{\mathrm{HG}}(\hat{\mu}_{i}^\nu(t), \mu_{i}^\nu )\ge \psi  \Big)
\le \big(\frac{\psi e}{I}\big)^I e^{-\psi/(1+\eta)}.
\end{equation}
\end{lemma}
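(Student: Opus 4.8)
The plan is a Chernoff bound in which the event $E$ serves to tame the adaptivity of the pull counts $T_i(t)$; no peeling is needed at this level, since $E$ already confines each count to a multiplicatively-thin window, and the peeling over such windows is what the proof of Lemma~\ref{lemma7} supplies. Write $S:=\sum_{i=1}^{I}T_i(t)\,D_{\mathrm{HG}}\big(\hat\mu_i^\nu(t),\mu_i^\nu\big)$. For a tilt parameter $\lambda\in(0,\tfrac1{1+\eta})$, Markov's inequality applied to the nonnegative variable $\mathbf{1}_E e^{\lambda S}$ gives
\[
\mathbb{P}\big(\mathbf{1}_E S\ge\psi\big)\;\le\;e^{-\lambda\psi}\,\mathbb{E}\big[\mathbf{1}_E e^{\lambda S}\big],
\]
so the lemma reduces to showing $\mathbb{E}[\mathbf{1}_E e^{\lambda S}]\le\big(1-\lambda(1+\eta)\big)^{-I}$: then optimising $e^{-\lambda\psi}\big(1-\lambda(1+\eta)\big)^{-I}$ over $\lambda$ by taking $\lambda=\tfrac1{1+\eta}\big(1-\tfrac{(1+\eta)I}{\psi}\big)$ — which lies in $(0,\tfrac1{1+\eta})$ precisely because of the hypothesis $\psi\ge(1+\eta)I$ — yields $e^{-\psi/(1+\eta)+I}\big(\tfrac{\psi}{(1+\eta)I}\big)^{I}\le(\psi e/I)^I e^{-\psi/(1+\eta)}$, which is the claimed bound.

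To bound the tilted moment I would factorise over arms. Let $Y_{i,1},Y_{i,2},\dots$ be the rewards collected on the successive pulls of arm $i$; these are i.i.d.\ $\mathcal{N}(\mu_i^\nu,2\mu_i^\nu\sigma^2)$ and the streams of distinct arms are mutually independent. Put $\hat\mu_i^{(m)}:=\tfrac1m\sum_{j\le m}Y_{i,j}$, so that $\hat\mu_i^\nu(t)=\hat\mu_i^{(T_i(t))}$. On $E$ one has $\bar t_i\le T_i(t)\le(1+\eta)\bar t_i$, and since $D_{\mathrm{HG}}\ge0$ this lets us dominate the random per-arm contribution by a deterministic-window maximal quantity: using $T_i(t)\ge\bar t_i$ to lower-bound the normalisation and $T_i(t)\le(1+\eta)\bar t_i$ to confine the range of the maximum,
\[
T_i(t)\,D_{\mathrm{HG}}\big(\hat\mu_i^\nu(t),\mu_i^\nu\big)\;\le\;(1+\eta)\,W_i,
\qquad
W_i:=\frac{1}{(1+\eta)\bar t_i}\max_{m\le(1+\eta)\bar t_i}\Big[m\,D_{\mathrm{HG}}\big(\hat\mu_i^{(m)},\mu_i^\nu\big)\Big].
\]
Hence $\mathbf{1}_E e^{\lambda S}\le\prod_{i=1}^I e^{\lambda(1+\eta)W_i}$ and, by independence across arms, $\mathbb{E}[\mathbf{1}_E e^{\lambda S}]\le\prod_i\mathbb{E}\big[e^{\lambda(1+\eta)W_i}\big]$. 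It then remains to control a single $\mathbb{E}[e^{\theta W_i}]$ for $\theta\in(0,1)$; I would obtain a bound of order $(1-\theta)^{-1}$ by combining (i) a Doob-type maximal inequality for the martingale $m\mapsto\sum_{j\le m}(Y_{i,j}-\mu_i^\nu)$ that governs the Gaussian component of $D_{\mathrm{HG}}$, (ii) the comparison of $D_{\mathrm{HG}}(\cdot,\mu_i^\nu)$ with that Gaussian component, to produce the requisite one-sided exponential deviation bound for $\max_{m\le M}m\,D_{\mathrm{HG}}(\hat\mu_i^{(m)},\mu_i^\nu)$, and (iii) integrating that tail against $e^{\theta x}$. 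Substituting $\theta=\lambda(1+\eta)$ and multiplying the $I$ factors then gives $\mathbb{E}[\mathbf{1}_E e^{\lambda S}]\le\big(1-\lambda(1+\eta)\big)^{-I}$ as needed (any constant slack in step (iii) is harmless, being absorbed into the $(\psi e/I)^I$ prefactor of the final bound).

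The main obstacle is precisely the per-arm deviation estimate in the last step, because the law $\mathcal{N}(\mu,2\mu\sigma^2)$ has its variance tied to its mean by Property~\ref{Prop1}.3 and the plug-in estimator $\hat\mu_i(t)$ used in \eqref{UP1} is the sample mean — not the maximum-likelihood estimator of this family — so the textbook exponential-family concentration facts do not apply off the shelf. I would handle this by writing $D_{\mathrm{HG}}(v,\mu)$ as its Gaussian piece $\tfrac{(v-\mu)^2}{4\mu\sigma^2}$ plus the nonnegative correction $\tfrac12\big(v/\mu-1-\ln(v/\mu)\big)$, controlling the upper and lower deviation directions of $\hat\mu_i^{(m)}$ separately via the (sub-)Gaussian Chernoff bound and the explicit monotonicity of $v\mapsto D_{\mathrm{HG}}(v,\mu)$ on either side of $\mu$, and only then upgrading to the window maximum through Doob's inequality; checking that this yields the clean exponential rate with the constants matching the Track-and-Stop template of \cite{Garivier2016} (possibly exploiting the large-noise regime \eqref{sigma}) is the real content. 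The adaptivity issue — that $T_i(t)$ is not a stopping time for arm $i$'s own filtration — is comparatively benign here: it is exactly what the event $E$ was introduced for, since restricting each count to a window of multiplicative width $1+\eta$ lets a crude maximal bound over that window suffice and costs only the factor $(1+\eta)$ that appears in the exponent of the stated estimate.
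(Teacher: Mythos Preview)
Your Markov/MGF framework and the final optimisation over $\lambda$ are fine, and would give the stated bound once you have $\mathbb{E}[\mathbf{1}_E e^{\lambda S}]\le(1-\lambda(1+\eta))^{-I}$. The route you propose to that moment estimate, however, differs from the paper's and contains a gap. Your displayed inequality $T_i(t)D_{\mathrm{HG}}\le(1+\eta)W_i$ is not correct as written: with your $W_i=((1+\eta)\bar t_i)^{-1}\max_{m\le(1+\eta)\bar t_i}[mD_{\mathrm{HG}}(\hat\mu_i^{(m)},\mu_i^\nu)]$, the claim reads $T_iD^{(T_i)}\le\bar t_i^{-1}\max_m[mD^{(m)}]$, and since $\bar t_i\ge1$ this points the wrong way. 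If you repair it --- e.g.\ take $W_i=\bar t_i\max_{\bar t_i\le m\le(1+\eta)\bar t_i}D_{\mathrm{HG}}(\hat\mu_i^{(m)},\mu_i^\nu)$, so that $T_iD^{(T_i)}\le(1+\eta)\bar t_iD^{(T_i)}\le(1+\eta)W_i$ on $E$ --- you must then prove $\mathbb{E}[e^{\theta W_i}]\le(1-\theta)^{-1}$ \emph{uniformly} in $\bar t_i$; a union bound over the $\approx\eta\bar t_i$ values of $m$ in the window injects an unbounded factor, and Doob does not rescue this because $m\mapsto mD_{\mathrm{HG}}(\hat\mu_i^{(m)},\mu_i^\nu)$ is not a submartingale (your Gaussian-plus-correction splitting does not make it one either).

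The paper avoids any per-arm maximal quantity. It first reduces to the lower tail, observing that on $E$ one has $\{T_iD_{\mathrm{HG}}(\hat\mu_i,\mu_i)\ge\zeta_i\}\subset\{\hat\mu_i\le x_i((1+\eta)\bar t_i)\}$, where $x_i(T)\in[0,\mu_i^\nu]$ is defined by $TD_{\mathrm{HG}}(x_i(T),\mu_i^\nu)=\zeta_i$ and is increasing in $T$ by the monotonicity of $D_{\mathrm{HG}}(\cdot,\mu_i^\nu)$ on $[0,\mu_i^\nu]$. It then proves a joint Cram\'er--Chernoff bound (Lemma~\ref{Lemma9}): for any $b_i\le\mu_i^\nu$, $\mathbb{P}(\cap_i\{\hat\mu_i(t)\le b_i,\,T_i(t)\ge\bar t_i\})\le\prod_ie^{-\bar t_iD_{\mathrm{HG}}(b_i,\mu_i^\nu)}$, via the Gibbs variational representation of $D_{\mathrm{HG}}$ and the fact that $C(t)=\exp(\sum_i\lambda_iS_i(t)-T_i(t)\phi_i(\lambda_i))$ satisfies $\mathbb{E}[C(t)]\le1$ in the \emph{global} filtration --- so adaptivity of $T_i(t)$ is handled by this likelihood-ratio process directly, with no maximum over pull counts ever appearing. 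Substituting $b_i=x_i((1+\eta)\bar t_i)$ yields $\prod_ie^{-\zeta_i/(1+\eta)}$, the $(1+\eta)$ arising cleanly as the ratio of the window endpoints, and the proof concludes by integrating this joint tail over $\{\sum_i\zeta_i\ge\psi\}$ via \cite[Lemma~8]{Lipschitz}. In this argument the event $E$ serves only to freeze the threshold $x_i(\cdot)$ at a deterministic value, not to control a maximal inequality.
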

\begin{proof}
See the proof in Appendix \ref{pr8}.
\end{proof}

\subsection{Proof of Lemma \ref{lemma7}}
\begin{proof}
Let $\psi \ge I+1$ and $\eta > 0$. Define $M = \lceil\ln(t)/\ln(1+\eta) \rceil$, and set
$\mathcal{M} = \{1,\ldots,M \}^I$ and events $A$ and $B_m$ as
\begin{equation}
\begin{aligned}
&A = \Big\{ \sum_{i=1}^I T_{i}(t) D_{\mathrm{HG}}\big(\hat{\mu}^\nu_{i}(t), \mu^\nu_{i}\big) \geq \psi    \Big\},\\
&B_m = \bigcap_{i=1}^I\Big\{  (1+\eta)^{m-1}\le T_i(t) \le (1+\eta)^{m}  \Big\}, \quad \forall\, m\in \mathcal{M}.
\end{aligned}
\end{equation}
Since $A = \cup_{m\in \mathcal{M}}(A\cap B_m)$, we have
$\mathbb{P}(A)\le \sum_{m\in \mathcal{M}}\mathbb{P}(A\cap B_m)$. Using Lemma \ref{lemma8},
we obtain for all $m\in \mathcal{M}$:
\begin{equation}
\mathbb{P}(A\cap B_m)\le \left(\frac{\psi e}{I}\right)^{I} e^{-\psi /(1+\eta)}
\end{equation}
Since $|\mathcal{M}| = M^I$, we have 
\begin{align}
\mathbb{P}(A)\le M^I \mathbb{P}(A\cap B_m)\le \left(\frac{M\psi e}{I}\right)^{I} e^{-\psi /(1+\eta)}.    
\end{align}
With the choice $\eta = 1/(\psi - 1)$ and the inequality $\ln(1+\eta)\ge 1-1/(1+\eta) = 1/\psi $, it holds that
\begin{equation}
\mathbb{P}(A) \le e^{-\psi}\Big(\frac{\psi\lceil\psi \ln (t)\rceil}{I}\Big)^{I} e^{I+1},
\end{equation}
which concludes the proof.
\end{proof}

\subsection{Proof of Lemma \ref{lemma8}}\label{pr8}
\begin{proof}
Define the event 
\begin{align}
  \mathcal{F} = \Big\{\mathbf{1}_E\sum_{i=1}^I T_{i}(t)D_{\mathrm{HG}}(\hat{\mu}^\nu_{i}(t),\mu^\nu_{i})\ge \psi \Big\}.  
\end{align}
 {Let $\zeta_i \in\left(\mathbb{R}^{+}\right)^{I}$, $T \in \mathbb{N}$ and $x_i(t)$ such that (i) if
there exists $0\le x\le \mu_i^\nu$ such that $TD_{\mathrm{HG}}(x,\mu_{i}^\nu) =\zeta_i$, then $x_i(T) = x$, (ii) else $x_i(T) = 0$. Since  $\mathrm{d} D_{\mathrm{HG}}(x,\mu_{i}^\nu)/\mathrm{d} x \le 0$ for $0\le x\le \mu_i^\nu$,  $D_{\mathrm{HG}}(x,\mu_{i}^\nu)$
is  a monotonically decreasing function of $x$ and furthermore $x_i(T)$ is a monotonically increasing function of $T$.} As a result, if $T_{i}(t)D_{\mathrm{HG}}(\hat{\mu}_{i}^\nu(t),\mu_{i}^\nu) \ge \zeta_i$, we have
\begin{equation}
\hat{\mu}_{i}^\nu(t)\overset{(a)}{\le} x_i(T_i(t))\overset{(b)}{\le} x_i((1+\eta)\bar{t}_i)
\end{equation}
where $(a)$ holds due to the  monotonicity of $x_i(T)$, and $(b)$ is due to the definition of event $E$ in \eqref{defiA}.
\begin{lemma}\label{Lemma9}
Fix an $I$-armed  heteroscedastic Gaussian bandit $\nu$ and let  $ 1\le \bar{t}_i \le t$. For any collection of  $0\le b_i \le \mu^\nu_{i}, i \in [I]$, we have
\begin{equation}
\mathbb{P}\Big(\cap_{i\in [I]}\big\{\hat{\mu}^\nu_{i}(t) \le b_i, \bar{t}_i\le T_i(t) \big\}  \Big)\le \prod_{i=1}^{I}e^{- \bar{t}_i D_{\mathrm{HG}}(b_i,\mu^\nu_{i})}.
\end{equation}
\end{lemma}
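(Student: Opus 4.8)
The plan is to prove the final statement (Lemma~\ref{Lemma9}) by a change-of-measure (likelihood-ratio) argument, which handles the adaptivity of the sampling rule and, if done carefully, produces the exact rate $D_{\mathrm{HG}}$ in the exponent. Fix the true instance $\nu$ (arm $i$ has law $\mathcal N(\mu_i^\nu,2\mu_i^\nu\sigma^2)$) and introduce the auxiliary instance $\nu'$ in which arm $i$ is replaced by the \emph{heteroscedastic} Gaussian $\mathcal N(b_i,2b_i\sigma^2)$ for each $i\in[I]$; since $0< b_i\le\mu_i^\nu$ this is a legitimate member of the same family, and that is exactly why the per-sample log-likelihood ratio will carry the full $D_{\mathrm{HG}}$ rather than only its quadratic (mean-shift) part. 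Writing $\mathbb P_\nu,\mathbb P_{\nu'}$ for the laws of the whole interaction under the common non-anticipating sampling/stopping rule, letting $X_{i,s}$ be the reward observed at the $s$-th pull of arm $i$, $p_\mu$ the $\mathcal N(\mu,2\mu\sigma^2)$ density, and $\mathcal F_t$ the history up to time $t$, the mutual independence of the $I$ reward streams gives
\begin{equation}
\frac{\mathrm d\mathbb P_\nu}{\mathrm d\mathbb P_{\nu'}}\Big|_{\mathcal F_t}
=\prod_{i=1}^I\exp\!\Big(\sum_{s=1}^{T_i(t)}\log\tfrac{p_{\mu_i^\nu}(X_{i,s})}{p_{b_i}(X_{i,s})}\Big),
\end{equation}
so that, with $\mathcal E=\bigcap_i\{\hat\mu_i^\nu(t)\le b_i,\ T_i(t)\ge\bar t_i\}$,
\begin{equation}
\mathbb P_\nu(\mathcal E)=\mathbb E_{\nu'}\!\Big[\mathbf 1_{\mathcal E}\prod_{i=1}^I\exp\!\Big(\sum_{s=1}^{T_i(t)}\log\tfrac{p_{\mu_i^\nu}(X_{i,s})}{p_{b_i}(X_{i,s})}\Big)\Big].
\end{equation}

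The core reduction is then to show that \emph{on the event} $\mathcal E$ the $i$-th exponent is at most $-\bar t_i D_{\mathrm{HG}}(b_i,\mu_i^\nu)$; granting this, the product in the expectation is bounded by the deterministic constant $\prod_i e^{-\bar t_i D_{\mathrm{HG}}(b_i,\mu_i^\nu)}$, hence $\mathbb P_\nu(\mathcal E)\le\prod_i e^{-\bar t_i D_{\mathrm{HG}}(b_i,\mu_i^\nu)}\,\mathbb P_{\nu'}(\mathcal E)\le\prod_i e^{-\bar t_i D_{\mathrm{HG}}(b_i,\mu_i^\nu)}$, which is the claim. To get the per-arm bound I would first observe that, for $b_i<\mu_i^\nu$, the exponent is non-increasing in the number of pulls $n$ once the empirical mean is at most $b_i$, so it suffices to control it at $n=\bar t_i$ using $T_i(t)\ge\bar t_i$; then, for a single arm, bound $\mathbb P_\nu(\hat\mu_{i,n}\le b_i)$ by a Chernoff-type tilt within the two-parameter Gaussian family towards $\mathcal N(b_i,2b_i\sigma^2)$ and invoke the monotonicity $\mathrm dD_{\mathrm{HG}}(x,\mu_i^\nu)/\mathrm dx\le0$ on $(0,\mu_i^\nu]$ (already used in the proof of Lemma~\ref{lemma8}) to pass from the constraint $\hat\mu_{i,n}\le b_i$ to a rate of at least $D_{\mathrm{HG}}(b_i,\mu_i^\nu)$ per sample. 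The large-noise assumption~\eqref{sigma}, which keeps all means bounded away from $0$, is what makes the underlying one-dimensional optimisation well behaved.

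The step I expect to be the main obstacle is precisely this per-arm bound: the event $\mathcal E$ constrains only the empirical \emph{mean} $\hat\mu_i^\nu(t)$, whereas for the heteroscedastic family the natural sufficient statistic also involves the empirical \emph{second moment}, so the likelihood ratio is not a deterministic function of $\hat\mu_i^\nu(t)$ alone; showing that the worst case over the unconstrained empirical variance still leaves a rate of at least $D_{\mathrm{HG}}(b_i,\mu_i^\nu)$ is the technical heart of the lemma. If the direct optimisation proves awkward, an alternative is to prove the one-arm statement as a maximal inequality, applying Ville's inequality to the likelihood-ratio martingale $W_n=\prod_{s\le n}p_{b_i}(X_{i,s})/p_{\mu_i^\nu}(X_{i,s})$ and then assembling the $I$ arms; note, however, that $T_i(t)$ is not independent of arm $i$'s reward stream, so the arms cannot be conditioned out one at a time, and the global likelihood-ratio identity displayed above is the cleaner device for combining them.
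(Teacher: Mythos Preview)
Your plan differs from the paper's, and the obstacle you flag is a genuine gap that your proposal does not close: changing measure to the heteroscedastic alternative $\mathcal N(b_i,2b_i\sigma^2)$ makes the per-sample log-ratio $\log\frac{p_{\mu_i^\nu}}{p_{b_i}}(x)$ a true quadratic in $x$, so on $\mathcal E$ the quantity $\sum_{s\le T_i(t)}\log\frac{p_{\mu_i^\nu}}{p_{b_i}}(X_{i,s})$ is \emph{not} a deterministic function of $\hat\mu_i^\nu(t)$ and $T_i(t)$, and there is no reason the unconstrained empirical second moment should conspire to give you rate $D_{\mathrm{HG}}(b_i,\mu_i^\nu)$. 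Your Ville-inequality fallback uses the very same quadratic ratio $W_n$ and runs into the same wall; likewise your monotonicity-in-$n$ claim (``the exponent is non-increasing in the number of pulls once the empirical mean is at most $b_i$'') is not available, since that exponent also moves with the empirical variance.

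The paper sidesteps this by \emph{not} tilting to the heteroscedastic alternative. It runs a Cram\'er--Chernoff argument with a \emph{linear} tilt: for each $i$ fix $\lambda_i\le 0$, set $S_i(t)=T_i(t)\hat\mu_i^\nu(t)$, $\phi_i(\lambda)=\ln\mathbb E_\nu[e^{\lambda X_i}]$, and form
\[
C(t)=\exp\!\Big(\sum_{i\in[I]}\lambda_i S_i(t)-T_i(t)\,\phi_i(\lambda_i)\Big),
\]
which satisfies $\mathbb E_\nu[C(t)]\le 1$ and, crucially, depends on the rewards only through the first-moment sums $S_i(t)$. On $Q=\bigcap_i\{\hat\mu_i^\nu(t)\le b_i,\ T_i(t)\ge\bar t_i\}$ one has $S_i(t)\le T_i(t)b_i$; since $\lambda_i\le 0$ and $\lambda_i b_i-\phi_i(\lambda_i)\ge 0$ this gives
\[
\lambda_i S_i(t)-T_i(t)\phi_i(\lambda_i)\ \ge\ T_i(t)\bigl(\lambda_i b_i-\phi_i(\lambda_i)\bigr)\ \ge\ \bar t_i\bigl(\lambda_i b_i-\phi_i(\lambda_i)\bigr),
\]
so $Q\subset\{C(t)\ge e^{\sum_i\bar t_i(\lambda_i b_i-\phi_i(\lambda_i))}\}$ and Markov's inequality yields $\mathbb P_\nu(Q)\le e^{-\sum_i\bar t_i(\lambda_i b_i-\phi_i(\lambda_i))}$. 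The paper then optimises each $\lambda_i$ and identifies $\sup_{\lambda\le 0}\{\lambda b_i-\phi_i(\lambda)\}$ with $D_{\mathrm{HG}}(b_i,\mu_i^\nu)$ via the Gibbs variational formula. In effect the tilt is to a Gaussian with mean $b_i$ but the \emph{same} variance $2\mu_i^\nu\sigma^2$, not to $\mathcal N(b_i,2b_i\sigma^2)$; that is precisely what removes the second-moment dependence you were worried about and makes the deterministic bound on the exponent go through.
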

\begin{proof}
See Appendix \ref{pr9}.
\end{proof}

With this lemma, we can deduce that
\begin{align}
& \mathbb{P}\Big( \cap_{i\in [I]}\big\{ \mathbf{1}_E T_{i}(t)D_{\mathrm{HG}}(\hat{\mu}^\nu_{i}(t),\mu^\nu_{i})\ge \zeta_i \big\}  \Big) \nonumber\\
&\le \mathbb{P}\Big( \cap_{i\in [I]}\big\{ \hat{\mu}^\nu_{i}(t)\le x_i(T_{i}(t)),E  \big\} \Big)\nonumber\\
& \le \mathbb{P}\Big( \cap_{i\in [I]}\big\{ \hat{\mu}^\nu_{i}(t)\le x_i((1+\eta)\bar{t}_{i}),E  \big\} \Big)\nonumber\\
&\overset{(a)}{\le} \prod_{i=1}^{I}e^{- \bar{t}_i D_{\mathrm{HG}}(x_i((1+\eta)\bar{t}_{i}),\mu^\nu_{i})} 
\overset{(b)}{= }\prod_{i=1}^{I}e^{- \zeta_i /(1+\eta) }
\end{align}
where $(a)$ is obtained by Lemma \ref{Lemma9} with $b_i = x_i((1+\eta)\bar{t}_{i})$ and $(b)$ is because $\bar{t}_i D_{\mathrm{HG}}(x_i((1+\eta)\bar{t}_{i}),\mu_{i}) = \zeta_i /(1+\eta)$.
Lastly, by applying  \cite[Lemma~8]{Lipschitz}, we have
\begin{equation}
\mathbb{P}(\mathcal{F}) \le \Big(\frac{\psi e}{I(1+\eta)}\Big)^{I} e^{-\psi /(1+\eta)} \leq\Big(\frac{\psi e}{I}\Big)^{I} e^{-\psi /(1+\eta)},
\end{equation}
which concludes the proof.
\end{proof}

\subsection{Proof of Lemma \ref{Lemma9}}\label{pr9}
\begin{proof}
According to the Gibbs variational principle\cite{gibbs1902elementary}, one has the following dual characterization of the KL divergence: $D(P_1,P_2) = \sup_{f}(\mathbb{E}_{P_1}[f(X_1)]-\ln(\mathbb{E}_{P_2}[e^{f(X_2)}]))$, where the variables $X_1 \sim P_1$ and $X_2\sim P_2$. If we let $f(x) = \lambda x$, $\lambda \le 0$, for $0\le x \le \mu_{i}$, we have
\begin{equation}\label{KL2}
D_{\mathrm{HG}}(x,\mu^\nu_{i}) = \sup_{\lambda\le 0}\{\lambda x - \phi_{i}(\lambda_i)\},
\end{equation}
where $ \phi_{i}(\lambda_i) = \ln \mathbb{E}[e^{\lambda_i X_i}]  = \ln(\mu_i^\nu e^\lambda+1-\mu_i^\nu)$ and $X_i \sim \mathcal{N}(\mu_i^\nu,2\mu_i^\nu\sigma^2)$. Define the event $Q = Q_1 \cap Q_2$, where $Q_1 = \cap_{i\in [I]}\{ \bar{t}_i \le {T}_{i}(t)  \}$ and $Q_2 = \cap_{i\in [I]} \{\hat{\mu}^\nu_{i}(t)\le b_i\}$.

For all $i$, let $\lambda_i \le 0$ and define
$C(t) = \exp\big(\sum_{i\in [I]}\lambda_iS_{i}(t) - T_{i}(t)\phi_{i}(\lambda_i)  \big)  $, where $S_{i}(t)  =  T_{i}(t)\hat{\mu}^\nu_{i}(t)$.
For all $i$, we set $\lambda_i = \mathop{\argmax}_{\lambda\le 0}\{\lambda b_i - \phi_{i}(\lambda)  \}$ and $\lambda_ib_i -\phi_i(\lambda_i) = D_{\mathrm{HG}}(b_i, \mu^\nu_{i})$. Then, we have
\begin{align}
\mathbb{P}(Q) &=\mathbb{P}\left(\cap_{i\in[I]}\left\{S_{i}(t) \leq T_{i}(t) b_{i}, Q_{1}\right\}\right) \nonumber\\
 &\leq \mathbb{P}\Big(\sum_{i\in [I]} \lambda_{i} S_{i}(t) \geq \sum_{i\in [I]} \lambda_{i} T_{i}(t) b_{i}, Q_{1}\Big) \nonumber\\
& \leq \mathbb{P}\left(\mathbf{1}_{Q_{1}} e^{\sum_{i\in [I]} \lambda_{i} S_{i}(t)} \geq e^{\sum_{i\in [I]} \lambda_{i} T_{i}(t) b_{i}}\right) \nonumber\\
&=\mathbb{P}\left(\mathbf{1}_{Q_{1}} C(t) \geq e^{\sum_{i\in [I]} T_{i}(t)\left(\lambda_{i} b_{i}-\phi_{i}\left(\lambda_{i}\right)\right)}\right) \nonumber\\
&=\mathbb{P}\left(\mathbf{1}_{Q_{1}} C(t) \geq e^{\sum_{i\in [I]} T_{i}(t) D_{\mathrm{HG}}(b_i, \mu^\nu_{i})}\right)\nonumber\\
&\le \mathbb{P}\left(\mathbf{1}_{Q_{1}} C(t)>e^{\sum_{i\in [I]} \bar{t}_{i} D_{\mathrm{HG}}(b_i, \mu^\nu_{i})}\right).
\end{align}
Then, it holds that
\begin{align}
\mathbb{P}(Q) &\le \mathbb{P}\left(\mathbf{1}_{Q_{1}} C(t)>e^{\sum_{i\in [I]} \bar{t}_{i} D_{\mathrm{HG}}(b_i, \mu^\nu_{i})}\right) \nonumber\\ 
&\overset{(a)}{\le} \mathbb{E}[\mathbf{1}_{Q_1}C(t)]\exp\bigg(-\sum_{i\in [I]}\bar{t}_{i}
D_{\mathrm{HG}}(b_i,\mu^\nu_{i})\bigg) \nonumber\\
& \overset{(b)}{\le} \exp\bigg(-\sum_{i\in [I]}\bar{t}_{i}
D_{\mathrm{HG}}(b_i,\mu^\nu_{i})\bigg),
\end{align}
where $(a)$ is due to  Markov's inequality, {$(b)$ is due to the fact that $\mathbb{E}[\mathbf{1}_{Q_1}C(t)] \le \mathbb{E}[C(t)]$ and $ \mathbb{E}[C(t)] = 1$ is shown in the proof of  \cite[Lemma~7]{Lipschitz}}. This concludes the proof.
\end{proof}

\vspace{-0.4cm}
\section{}\label{PL6}
{
In this section, we first study the optimization problem \eqref{Theo2}, so as to better understand the choice of $\mathbf{w}^*$, then we provide an efficient method for computing $\mathbf{w}^*$.

By minimizing the KL divergence between the reward distributions of two bandits associated with arm $A^*(\nu)$ and arm $i\neq A^*(\nu)$,  \eqref{Theo2} can be transformed  into the following optimization problem
\begin{align}
& c^*(\nu)^{-1}  =   \sup _{\mathbf{w} \in \mathcal{W}_{I}}\inf _{\mathrm{u} \in \operatorname{Alt}(\nu)}\Big(\sum_{i=1}^{I} w_{k}
    D_{\mathrm{HG}}(\mu^{\nu}_{i},\mu^{\mathrm{u}}_{i})
     \Big) \nonumber\\
&= \sup_{\mathbf{w} \in \mathcal{W}_{I}}\inf _{ \substack{\mathrm{u} \in \mathcal{V}:  \\ i \neq A^*(\nu), \mu^{\mathrm{u}}_{i}\ge \mu^{\mathrm{u}}_{A^*(\nu)}}} \Big( \sum_{a\in \{A^*(\nu),i\}}
w_{a}
D_{\mathrm{HG}}(\mu^\nu_{a},\mu^{\mathrm{u}}_{a})\Big).\label{TP32}
\end{align}
It follows that
\begin{equation}
{\mathbf{w}}^* \!=\! \argmax_{\mathbf{w}\in \mathcal{W}_I}\inf _{\substack{\mathrm{u} \in \mathcal{V}:\\ i \neq A^*(\nu), \mu^{\mathrm{u}}_{i}\ge \mu^{\mathrm{u}}_{A^*(\nu)}}}\! \Big(\sum_{a\in \{A^*(\nu),i\}}
w_{a}
D_{\mathrm{HG}}(\mu^\nu_{a},\mu^{\mathrm{u}}_{a})\Big).
\end{equation}
 By using  Lagrange multipliers, we can solve the optimization problem
\begin{align}
 \min_{\mu^{\mathrm{u}}_{A^*(\nu)},\mu^{\mathrm{u}}_{i}} & w_{A^*(\nu)}
   \Big( \frac{1}{2} \ln(\frac{\mu^\nu_{A^*(\nu)}}{\mu^{\mathrm{u}}_{A^*(\nu)}}) + \frac{\mu^{\mathrm{u}}_{A^*(\nu)}}{2\mu^\nu_{A^*(\nu)}} \nonumber\\
   &+ \frac{(\mu^{\mathrm{u}}_{A^*(\nu)}-\mu^\nu_{A^*(\nu)})^2}{4\mu^\nu_{A^*(\nu)}\sigma^2} -\frac{1}{2} \big)
\nonumber\\&+   w_{i} \big(\frac{1}{2} \ln(\frac{\mu^\nu_{i}}{\mu^{\mathrm{u}}_{i}}) + \frac{\mu^{\mathrm{u}}_{i}}{2\mu^\nu_{i}} + \frac{(\mu^{\mathrm{u}}_{i}-\mu^\nu_{i})^2}{4\mu^\nu_{i}\sigma^2} -\frac{1}{2} \Big), \nonumber\\
 \text{s.t.}\;\;\; & 0<\mu^{\mathrm{u}}_{A^*(\nu)}  \le \mu^{\mathrm{u}}_{i},\label{eq1}
\end{align}
and obtain the optimal values of $\mu^{\mathrm{u}}_{A^*(\nu)},\mu^{\mathrm{u}}_{i}$, i.e.,
\begin{equation}\label{TP33}
\mu^{\mathrm{u}}_{A^*(\nu)} = \mu^{\mathrm{u}}_{i} = \frac{w_{A^*(\nu)} +  w_{i}}{w_{A^*(\nu)}\mu^\nu_{i} + w_{i} \mu^\nu_{A^*(\nu)}}\mu^\nu_{A^*(\nu)}\mu^\nu_{i}.
\end{equation}}
By substituting \eqref{TP33} into \eqref{TP32}, we have
\begin{align}
&c^*(\nu)^{-1}  = \sup_{\mathbf{w} \in \mathcal{W}_{I}}\inf_{i \neq A^*(\nu)}\Big\{
\frac{w_{A^*(\nu)}}{2}\ln\Big(\frac{w_{A^*(\nu)}\mu^\nu_{i} + w_{i} \mu^\nu_{A^*(\nu)}}{(w_{A^*(\nu)} +  w_{i})\mu^\nu_{i}}\Big)
\nonumber\\
&
+ \frac{w_{i}}{2}\ln\Big(\frac{w_{A^*(\nu)}\mu^\nu_{i} \!+\! w_{i}\mu^\nu_{A^*(\nu)}}{(w_{A^*(\nu)} \!+\! w_{i})\mu^\nu_{A^*(\nu)}}\Big)
  \!+\! \frac{w_{A^*(\nu)} w_{i}(\mu^\nu_{A^*(\nu)} \!-\! \mu^\nu_{i})^2) }{4\sigma^2 ( w_{A^*(\nu)}\mu^\nu_{i} \!+\! w_{i}\mu^\nu_{A^*(\nu)})} \nonumber\\
  &\qquad\qquad- \frac{1}{2}(w_{A^*(\nu)} + w_i)\Big\}.\label{T22}
\end{align}
With $x_i^* = \frac{\mu^\nu_{i}}{\mu^\nu_{A^*(\nu)}}$,   we have
\begin{equation}
w^*_{A^*(\nu)} = \frac{1}{1+\sum_{a\neq A^*(\nu)}x_a^* },\;\;\;
w^*_{i} = \frac{x_i^*}{1+\sum_{a\neq A^*(\nu)}x_a^*}.
\end{equation}
As a result, according to \eqref{T22}, $x_1^*,\ldots, x_I^*$ belongs to the set
\begin{equation}\label{T42}
\argmax_{x_1,\ldots, x_I} \min_{i \neq A^*(\nu)}\frac{1}{1+\sum_{a\neq A^*(\nu)}x_a} f_{Y,i}(x_i).
\end{equation}
where
\begin{align}
f_{Y,i}(x_i) &= \frac{1}{2}\ln\Big(\frac{\mu^\nu_{i} + x_i \mu^\nu_{A^*(\nu)}}{(1 +  x_{i})\mu^\nu_{i}}\Big)
+ \frac{x_i}{2}\ln\Big(\frac{\mu^\nu_{i} + x_i \mu^\nu_{A^*(\nu)}}{(1 +  x_{i})\mu^\nu_{A^*(\nu)}}\Big) \nonumber\\
&+  \frac{ x_i(\mu^\nu_{A^*(\nu)} - \mu^\nu_{i})^2 }{4\sigma^2 ( \mu^\nu_{i} + x_i \mu^\nu_{A^*(\nu)})} - \frac{1+x_i}{2}.
\end{align}
By differentiating $f_{Y,i}(x_i)$ with respect to $x_i$, we have
\begin{align}
\frac{\mathrm{d} f_{Y,i}}{\mathrm{d} x_i}
&=   \frac{\mu^\nu_{A^*(\nu)}-\mu^\nu_{i}}{2(\mu^\nu_{i}+x_i\mu^\nu_{A^*(\nu)})}
+ \frac{1}{2}\ln\Big(\frac{\mu^\nu_{i}+x_i\mu^\nu_{A^*(\nu)}}{(1+x_i)\mu^\nu_{A^*(\nu)} }\Big) \nonumber\\*
&+ \frac{\mu^\nu_i(\mu^\nu_{A^*(\nu)} - \mu^\nu_i)^2}{4\sigma^2( \mu^\nu_{i} + x_i \mu^\nu_{A^*(\nu)})^2}
-\frac{1}{2}.
\end{align}
Since $\frac{\mathrm{d}^2f_{Y,i}}{\mathrm{d} x_i^2}
\le 0$, $\frac{\mathrm{d}f_{Y,i}}{\mathrm{d} x_i}$ is monotonically decreasing for all $x_i\in [0,\infty)$. Furthermore, it holds that
$\lim\limits_{x_i\rightarrow \infty} \frac{\mathrm{d} f_{Y,i}}{\mathrm{d} x_i} = -1/2$. {Because of \eqref{sigma}, we have $\frac{\mathrm{d}f_{Y,i}}{\mathrm{d}x_i} \large|_{x_i = 0}\le 0$.} As such,
 $\frac{\mathrm{d}f_{Y,i}}{\mathrm{d}x_i}\le 0$ holds for all  $x_i \in [0,\infty)$, and $f_{Y,i}(x_i)$ is a monotonically decreasing function.
Then, we let
\begin{equation}
y^* = f_{Y,1}(x_1^*) = f_{Y,2}(x_2^*) = \ldots = f_{Y,I}(x_I^*),\;\;\;\; y^*\in [0, y_u],
\end{equation}
based on which we then introduce the inverse function of $f_{Y,i}(x)$, i.e. $f_{X,i} = f_{Y,i}^{-1}$.
According to \eqref{T42}, $y^*$ belongs to the set
\begin{equation}
\argmin_{y\in [0, y_u]} \mathcal{F}_Y(y)   \;\;\; \text{with} \;\;\;
\mathcal{F}_Y(y)  = \frac{y}
{1+\sum_{g\neq A^*(\nu)}f_{X,i}(y^*)}.
\end{equation}
 
By taking the derivative of $\mathcal{F}_Y(y)$, we have
\begin{equation}\label{T43}
\frac{\mathrm{d}\mathcal{F}_Y}{\mathrm{d}y} \!=\! \frac{1}{1\!+\!\sum_{i\neq A^*(\nu)} f_{X,i}(y) }
\!-\! \sum_{i\neq A^*(\nu)} \frac{yf'_{X,i}(y) }{(1\!+\!\sum_{i\neq A^*(\nu)} f_{X,i}(y)  )^2}.
\end{equation}
By setting $\frac{\mathrm{d}\mathcal{F}_Y}{\mathrm{d}y} = 0$, we obtain $y$ such that $\mathcal{F}_Y(y)$ achieves its minimum, i.e.,
\begin{equation}\label{T45}
\sum_{i\neq A^*(\nu)} {y f'_{X,i}(y) } -  f_{X,i}(y) = 1.
\end{equation}
Then we let $\mathcal{F}_{\nu}(y) = \sum_{i\neq A^*(\nu)} {y f'_{X,i}(y) } -  f_{X,i}(y)$. Since $f'_{X,i}(y) = \frac{1}{f'_{Y,i}(y)}$ and $\frac{\mathrm{d}^2f_{Y,i}}{\mathrm{d} y^2}\le 0$, we have
\begin{align}  \frac{\mathrm{d}\mathcal{F}_{\nu}(y)}{\mathrm{d}y}& = \sum_{i\neq A^*(\nu)}\frac{y \mathrm{d}^2 f_{X,i}(y)}{\mathrm{d}y^2} \nonumber\\
&= \sum_{i\neq A^*(\nu)} -\frac{y}{(f'_{Y,i}(y))^2}\frac{\mathrm{d}^2f_{Y,i}}{\mathrm{d}y^2}\ge 0.
\end{align}
 Therefore, since $\mathcal{F}_{\nu}(y)$ is   strictly increasing and $\mathcal{F}_{\nu}(y_u) = \infty$, $\mathcal{F}_{\nu}(0)  = 0$, $\mathcal{F}_{\nu}(y) = 1$ must have a unique solution $y^*$, which can be solved using  the bisection method. Then, we can obtain $w^*_{i}$ as
\begin{equation}
w^*_{i}=  \frac{f_{X,i}(y^*)}{\sum_{a=1}^I f_{X,a}(y^*) }.
\end{equation}
{Then, by replacing the  mean $\mu_{i}^\nu$ with the empirical mean $\hat{\mu}^\nu_{i}(t)$, we can obtain $\hat{w}^*_i(t)$ as in \eqref{upW}.}

\vspace{-0.4cm}
\section{}\label{PL9}
\subsection{Proof of Lemma \ref{SC1}}
\begin{proof}
Assume a heteroscedastic Gaussian bandit instance $\nu$ with $\mu^\nu_{1}\ge \mu^\nu_{2}\ge \ldots \ge \mu^\nu_{I}$.
There exists $\xi=\xi(\epsilon) \leq\left(\mu^\nu_{1}-\mu^\nu_{2}\right) / 4$ such that
\begin{equation}
\mathcal{I}_{\epsilon}:=\left[\mu^\nu_{1}-\xi, \mu^\nu_{1}+\xi\right] \times \ldots \times\left[\mu^\nu_{I}-\xi, \mu^\nu_{I}+\xi\right].
\end{equation}
Then, for a bandit model $\hat{\nu}_t \in \mathcal{I}_{\epsilon} $ and $t_0 >0$
\begin{equation}
\sup_{t\ge t_0}\max\limits_{i}|\hat{w}^*_{i}(t) - w^*_{i} |\le \epsilon.
\end{equation}
Furthermore, for all $\hat{\nu}_t \in \mathcal{I}_{\epsilon}$, the empirical optimal arm is $A^*(\hat{\nu}_t) =  1$.

Let define $h(T) =  T^{1/4}$ and the event
\begin{equation}
\mathcal{E}_{T}(\epsilon)= \bigcap_{t=h(T)}^{T} \{ \hat{\nu}_t \in \mathcal{I}_\epsilon \}
\end{equation}
in which it holds for $t\ge h(T)$, $A^*(\hat{\nu}_t)=1$.
Then, let us rewrite $Z(t)$ in \eqref{Z2} as
\begin{align}
Z(t) &=  \min_{i\neq 1} \Big({T_{{1}}(t)}
      D_{\mathrm{HG}}(\hat{\mu}^\nu_{1}(t),q_i(t))
      \!+\! {T_{i}(t)}D_{\mathrm{HG}}(\hat{\mu}^\nu_{i}(t),q_i(t))\Big)\nonumber\\
      &= t\ {f}_{\mathrm{Z}}\Big(\hat{\nu}_t,\Big(\frac{T_{i}(t)}{t}\Big)_{i=1}^I\Big),
\end{align}
where $q_i(t)$ is defined in~\eqref{qfun},  ${f}_{\mathrm{Z}}\big(\nu', \mathbf{w}'\big) =  \min_{i\neq 1} \big(w'_{1}
      D_{\mathrm{HG}}({\mu}'_{1},{q}'_{i})
      + w'_{i}D_{\mathrm{HG}}({\mu}'_{i},q'_{i}) \big)$
      and
\begin{equation}
q'_{i} = \frac{w'_{1}  +  w'_{i}}{w'_{1} {\mu}'_{i} + w'_{i} {\mu}'_{1}}{\mu}'_{1}\mu'_{i}.
\end{equation}
\begin{lemma}\label{lemmaS}
 The sampling rule ensures that $T_{i}(t)\ge \sqrt{t} -1$ and that for all $\epsilon > 0$ and $t_0 > 0$, there exists a constant $t_{\epsilon } = \max\big\{ \big\lceil\frac{t_0}{3\epsilon}\big\rceil, \big\lceil\frac{1}{3\epsilon^2}\big\rceil, \big\lceil \frac{1}{12\epsilon^3} \big\rceil  \big\}$ such that
 \begin{align}
 &\sup _{t \geq t_{0}} \max _{i}\left|\hat{w}^{*}_i(t)-w_{i}^{*}\right| \leq \epsilon \nonumber\\ 
 &\Longrightarrow \sup _{t \geq t_{\epsilon}} \max _{i}\left|\frac{T_{i}(t)}{t}-w_{i}^{*}\right| \leq 3(I-1) \epsilon.
 \end{align}
 \end{lemma}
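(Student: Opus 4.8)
The plan is to adapt the forced-exploration (``D-tracking'') analysis of Garivier and Kaufmann \cite{Garivier2016} to the present setting. The key observation is that, although $\hat{\mathbf{w}}^*(t)$ is here produced by the heteroscedastic optimization of Appendix~\ref{PL6}, the tracking argument only uses the facts that $\hat{\mathbf{w}}^*(t)\in\mathcal{W}_I$ (so $\hat{w}_i^*(t)\ge 0$ and $\sum_{i=1}^I\hat{w}_i^*(t)=1$) and that the sampling rule force-pulls the least-sampled arm whenever $\min_i T_i(t-1)\le\sqrt t$. Accordingly the proof splits into two parts: the forced-exploration lower bound $T_i(t)\ge\sqrt t-1$, and the tracking implication, which then follows from it.

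For the lower bound I would proceed by induction on $t$, exactly as in the forced-exploration lemma of \cite{Garivier2016}. Fix an arm $i$ and let $\tau\le t$ be the last round in which $i$ was pulled, which exists because every arm is pulled once in the initialization loop of Algorithm~\ref{alg2}. In every round $u\in(\tau,t]$ arm $i$ is not pulled, so either $u$ is a forced round pulling some other arm whose count at that round is at most $T_i(t)$, or $u$ is a tracking round, which by the forcing rule can occur only while $\min_j T_j(u-1)>\sqrt u$, hence only for $u<T_i(t)^2$. The first kind occurs at most $(I-1)\big(T_i(t)+1\big)$ times and the second at most $T_i(t)^2$ times, so $t\le T_i(t)^2+(I-1)T_i(t)+I$, which rearranges to $T_i(t)\ge\sqrt t-1$ (the $O(I)$ slack being absorbed by the margin in the forcing threshold).

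For the implication, assume $\sup_{t\ge t_0}\max_i|\hat{w}_i^*(t)-w_i^*|\le\epsilon$ and fix $t\ge t_\epsilon$. I would first bound $T_i(t)$ from above by inspecting the last round $\tau\le t$ at which $i$ was pulled. If $\tau$ was a forced round then $T_i(t)=T_i(\tau-1)+1\le\sqrt\tau+1\le\sqrt t+1$. If $\tau\ge t_0$ was a tracking round, then since $\sum_{j}\big(\tau\hat{w}_j^*(\tau-1)-T_j(\tau-1)\big)=\tau-(\tau-1)=1>0$, the maximizing arm $i$ satisfies $\tau\hat{w}_i^*(\tau-1)-T_i(\tau-1)\ge 1/I>0$, whence $T_i(t)=T_i(\tau-1)+1<\tau\hat{w}_i^*(\tau-1)+1\le t(w_i^*+\epsilon)+1$. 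If $\tau<t_0$ then $T_i(t)\le t_0$. Thus $T_i(t)\le t w_i^*+t\epsilon+\sqrt t+1+t_0$; the three terms in $t_\epsilon=\max\{\lceil t_0/(3\epsilon)\rceil,\lceil 1/(3\epsilon^2)\rceil,\lceil 1/(12\epsilon^3)\rceil\}$ are chosen precisely so that $t_0\le 3\epsilon t$, $\sqrt t\le 2\epsilon t$, and the remaining rounding term is $\le\epsilon t$, giving $T_i(t)\le t(w_i^*+3\epsilon)$. The matching lower bound is then automatic by complementation, $T_i(t)=t-\sum_{j\ne i}T_j(t)\ge t-\sum_{j\ne i}t(w_j^*+3\epsilon)=t w_i^*-3(I-1)\epsilon\, t$, and combining the two inequalities yields $|T_i(t)/t-w_i^*|\le 3(I-1)\epsilon$ for all $t\ge t_\epsilon$.

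The conceptual content is elementary once the right quantities are tracked; the main obstacle is the bookkeeping of the three sources of slack — the $O(\sqrt t)$ forced-exploration pulls, the $+1$ rounding incurred by passing from the last-pull round $\tau$ to $t$, and the early rounds $\tau<t_0$ at which the hypothesis on $\hat{\mathbf{w}}^*(t)$ has not yet taken effect — and calibrating the threshold so that each contributes at most $O(\epsilon t)$, which is exactly what the explicit form of $t_\epsilon$ encodes. One must also be careful that D-tracking uses the instantaneous weights $\hat{\mathbf{w}}^*(t-1)$ rather than cumulative averages, so the telescoping argument available for C-tracking does not apply and the ``last-pull'' argument above is the needed substitute.
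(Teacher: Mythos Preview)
Your proposal is correct and follows the same approach as the paper: both reduce to the D-tracking analysis of Garivier and Kaufmann \cite{Garivier2016}, and indeed the paper's own proof simply invokes Lemma~17 of that reference and computes $t_\epsilon=t'_0/(3\epsilon)$ with $t'_0=\max\{t_0,\,1/\epsilon,\,1/(4\epsilon^2)\}$, whereas you spell out the underlying ``last-pull'' argument in detail. One minor point on the calibration: your explanation that $\lceil 1/(12\epsilon^3)\rceil$ absorbs a ``$+1$ rounding term'' is not quite how the constant arises---in the paper's derivation, $1/(12\epsilon^3)$ is $\frac{1}{3\epsilon}\cdot\frac{1}{4\epsilon^2}$ (the $\sqrt t\le 2\epsilon t$ condition divided by $3\epsilon$) and $1/(3\epsilon^2)$ is $\frac{1}{3\epsilon}\cdot\frac{1}{\epsilon}$ (the $1/t\le\epsilon$ condition), so the roles of the second and third components are swapped relative to your description, but the bound is the same.
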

\begin{proof}
See the proof in Appendix \ref{plemma2}.
\end{proof}
According to Lemma \ref{lemmaS} and the definition of $\mathcal{E}_T$, when $T\ge t_{\epsilon} =  \max\{ \lceil\frac{t_0}{3\epsilon}\rceil, \lceil\frac{1}{3\epsilon^2}\rceil, \lceil \frac{1}{12\epsilon^3} \rceil  \}$, we define
\begin{equation}\label{cv}
C^*_{\epsilon}(\nu) = \inf\limits_{\hat{\nu} \in \mathcal{I}_{\epsilon}, \hat{\mathbf{w}}: | \hat{{w}}_i -{{w}}_i^* |\le 3(I-1)\epsilon  }{f}_{\mathrm{Z}}\big(\hat{\nu}, \hat{\mathbf{w}}\big),
\end{equation}
then on the event $\mathcal{E}_T$ it is holds that
\begin{equation}
Z(t)\ge tC^*_{\epsilon}(\nu), \;\;\;  \forall \, t \ge \sqrt{T}.
\end{equation}
When $T\ge t_{\epsilon}$, and on the event $\mathcal{E}_T$, it holds that
\begin{align}
&\min \big\{ \tau_{{\delta}},T\big\}\nonumber\\
& \le \sqrt{T} +
\sum_{t = \sqrt{T}}^T\mathbb{I}_{(\tau_{{\delta}}>t)} \le
\sqrt{T} + \sum_{t = \sqrt{T}}^T\mathbb{I}_{(Z(t)\le \beta(t,{\delta},\alpha))}
\nonumber\\&\le   \sqrt{T} + \sum_{t = \sqrt{T}}^T\mathbb{I}_{(tC^*_{\epsilon}(\nu)\le \beta(T,{\delta},\alpha))} \nonumber\\
&= \max\Big\{\sqrt{T},\frac{\beta(T,{\delta},\alpha)}{C^*_{\epsilon}(\nu)}\Big\}
\end{align}
Define
\begin{align}
 T_0({\delta}) & = \inf\Big\{T \in \mathbb{N}:  \max\big\{\sqrt{T},\frac{\beta(T,{\delta},\alpha)}{C^*_{\epsilon}(\nu)}\big\}\le T  \Big\} \nonumber\\
 &=   \inf\Big\{T \in \mathbb{N}: \frac{\beta(T,{\delta},\alpha)}{C^*_{\epsilon}(\nu)} \le T \Big\}  \nonumber\\
&= \inf \Big\{T \in \mathbb{N}: {C_{\epsilon}^{*}(\nu)} T \geq \ln ({\alpha T}/{{\delta}})\Big\}.
\end{align}
Using   \cite[Lemma 18]{Garivier2016}, we have
\begin{equation}
T_{0}({\delta}) \leq \frac{\alpha}{C^*_{\epsilon}(\nu)}\Big[\ln \big(\frac{ \alpha e}{{\delta} C_{\epsilon}^{*}(\nu)}\big)+\ln \ln \big(\frac{\alpha}{{\delta} C_{\epsilon}^{*}(\nu)}\big)\Big].
\end{equation}

\begin{lemma}\label{lemma5}
There exist two constants $\Gamma_2$ and $\Gamma_3$ (which depend on $\nu$ and $\epsilon$) such that
\begin{equation}
\mathbb{P}(\mathcal{E}_T^c) \le \Gamma_2 T\exp(-\Gamma_3 T^{1/8}).
\end{equation}
\end{lemma}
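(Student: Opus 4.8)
The plan is a routine-but-careful concentration argument: a union bound over the time window $\{h(T),\dots,T\}$, combined with the forced-exploration guarantee $T_i(t)\ge\sqrt t-1$ from Lemma~\ref{lemmaS}. Recalling $h(T)=T^{1/4}$, $\mathcal{I}_\epsilon=\prod_{i}[\mu^\nu_i-\xi,\mu^\nu_i+\xi]$, and $\mathcal{E}_T(\epsilon)=\bigcap_{t=h(T)}^{T}\{\hat\nu_t\in\mathcal{I}_\epsilon\}$, I would first pass to the complement and use $\{\hat\nu_t\notin\mathcal{I}_\epsilon\}=\bigcup_{i=1}^I\{|\hat\mu^\nu_i(t)-\mu^\nu_i|>\xi\}$ to get
\[
\mathbb{P}(\mathcal{E}_T^c)\le\sum_{t=h(T)}^{T}\mathbb{P}(\hat\nu_t\notin\mathcal{I}_\epsilon)\le\sum_{t=h(T)}^{T}\sum_{i=1}^{I}\mathbb{P}\big(|\hat\mu^\nu_i(t)-\mu^\nu_i|>\xi\big).
\]

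Next I would bound each term $\mathbb{P}(|\hat\mu^\nu_i(t)-\mu^\nu_i|>\xi)$. Since $\hat\mu^\nu_i(t)$ is the empirical average of $T_i(t)$ i.i.d. draws from $\mathcal{N}(\mu^\nu_i,2\mu^\nu_i\sigma^2)$ and the reward variances are uniformly bounded by $2\mu^\nu_1\sigma^2$ (recall $\mu^\nu_1=\max_i\mu^\nu_i$ in the ordering used in Lemma~\ref{SC1}), I would peel over the value of the pull count: by Lemma~\ref{lemmaS} one always has $T_i(t)\in\{\lceil\sqrt t-1\rceil,\dots,t\}$, and on $\{T_i(t)=n\}$ the event in question is contained in $\{|\tfrac1n\sum_{s=1}^n X_{i,s}-\mu^\nu_i|>\xi\}$, whose probability is at most $2\exp(-n\xi^2/(4\mu^\nu_1\sigma^2))$ by the Gaussian tail bound. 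Summing this geometric series over $n\ge\sqrt t-1$ yields
\[
\mathbb{P}\big(|\hat\mu^\nu_i(t)-\mu^\nu_i|>\xi\big)\le C_1\exp\!\big(-c(\sqrt t-1)\big),\qquad c:=\frac{\xi^2}{4\mu^\nu_1\sigma^2},
\]
where $C_1$ depends only on $c$, hence on $\nu$ and $\epsilon$ (since $\xi=\xi(\epsilon)$).

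Finally I would assemble the pieces. The map $t\mapsto\exp(-c(\sqrt t-1))$ is decreasing, so for every $t\ge h(T)=T^{1/4}$ each summand is at most $e^{c}\exp(-cT^{1/8})$; with at most $T$ values of $t$ and $I$ arms this gives
\[
\mathbb{P}(\mathcal{E}_T^c)\le I\,C_1\,e^{c}\,T\exp\!\big(-c\,T^{1/8}\big)=\Gamma_2\,T\exp\!\big(-\Gamma_3\,T^{1/8}\big)
\]
with $\Gamma_2=I C_1 e^c$ and $\Gamma_3=c$, and the stated form holds for all $T$ after inflating $\Gamma_2$ to absorb the finitely many small-$T$ cases where $h(T)<1$ or the above estimates are vacuous (there $\mathbb{P}(\mathcal{E}_T^c)\le 1$ suffices).

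The only genuinely delicate point is the second step: because the pull counts $T_i(t)$ are adaptive, a fixed-sample tail bound cannot be invoked directly. The remedy is precisely the peeling-over-$T_i(t)$ argument above, which is legitimate only because the forced-exploration rule of Lemma~\ref{lemmaS} confines $T_i(t)$ to the window $[\sqrt t-1,t]$; this keeps the number of union-bound terms polynomial in $t$ while the per-term probability decays like $\exp(-c\sqrt t)$, and it is also what forces the exponent $T^{1/8}$ (namely $\sqrt{h(T)}=T^{1/8}$) in the final bound. Everything else is bookkeeping with geometric sums and the choice of constants.
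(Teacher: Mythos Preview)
Your proposal is correct and follows essentially the same route as the paper: a union bound over $t\in\{h(T),\dots,T\}$ and arms $i$, the forced-exploration lower bound on $T_i(t)$, a peeling over the (random) pull count to reduce to fixed-sample tail bounds, a geometric sum, and then monotonicity to replace $\sqrt t$ by $\sqrt{h(T)}=T^{1/8}$. The only substantive difference is the tail inequality invoked at the per-sample step: the paper applies a Chernoff bound, obtaining the arm-specific exponents $D_{\mathrm{HG}}(\mu^\nu_i\pm\xi,\mu^\nu_i)$ and setting $\Gamma_3=\min_i\min\{D_{\mathrm{HG}}(\mu^\nu_i-\xi,\mu^\nu_i),D_{\mathrm{HG}}(\mu^\nu_i+\xi,\mu^\nu_i)\}$, whereas you use a uniform sub-Gaussian bound with variance proxy $2\mu^\nu_1\sigma^2$ and take $\Gamma_3=\xi^2/(4\mu^\nu_1\sigma^2)$. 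Your constant is looser but the argument is slightly more elementary; either choice suffices for the lemma's statement, and the overall structure---including your explicit acknowledgment that the peeling is what handles the adaptivity of $T_i(t)$---matches the paper.
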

\begin{proof}
See the proof in Appendix \ref{plemma1}.
\end{proof}
{Using Lemma \ref{lemma5}, for every $T \ge \max(T_0({\delta}),t_{\epsilon})$, one has $\mathcal{E}_T \subset \{\tau_{{\delta}}\le T\}$, therefore
\begin{equation}
\mathbb{P}\left(\tau_{{\delta}}>T\right) \leq \mathbb{P}\left(\mathcal{E}_{T}^{c}\right) \leq \Gamma_2 T \exp \left(-\Gamma_3 T^{1 / 8}\right).
\end{equation}
According to the definition of $C_{\epsilon}^{*}(\nu)$ in \eqref{cv} and $c^*(\nu)^{-1}$  in~\eqref{Theo2}, it holds that $C_{\epsilon}^{*}(\nu) \le c^*(\nu)^{-1}$.
As a result, we have
\begin{align}
\mathbb{E}[\tau_{{\delta}}] &= \sum_{T=1}^{\infty} \mathbb{P}(\tau_{{\delta}}\ge T) \nonumber\\
 &= \sum_{T=1}^{\max(T_0({\delta}),t_{{\epsilon}})} \mathbb{P}(\tau_{{\delta}}\ge T) \!+  \! \sum_{T=\max(T_0({\delta}),t_{{\epsilon}})+1}^{\infty}\! \mathbb{P}(\tau_{{\delta}}\ge T) \nonumber\\
  &\leq t_{\epsilon}+T_{0}({\delta})+\sum_{T=1}^{\infty} \Gamma_2 T \exp \left(-\Gamma_3 T^{1 / 8}\right)\nonumber\\
  &\le  t_{\epsilon} + \alpha c^*(\nu)\Big[\ln \big(\frac{ \alpha e c^*(\nu)}{{\delta}}\big)+\ln \ln \big(\frac{\alpha c^*(\nu)}{{\delta} }\big)\Big] \nonumber\\
  &\qquad+\sum_{T=1}^{\infty} \Gamma_2 T \exp \left(-\Gamma_3 T^{1 / 8}\right).
 \end{align}

\begin{lemma}\label{Theom3}
 Let $\nu$ represent a heteroscedastic Gaussian bandit instance.
When $\nu$ has a unique optimal arm, then we have
\begin{equation}
c^*(\nu) \le c^*_{\mathrm{u}}(\nu),
\end{equation}
where
\begin{align}
c^*_{\mathrm{u}}(\nu) &= \Big( \frac{\mu^\nu_{A^*(\nu)}}{2\sum_{i=1}^I\mu^\nu_{i}}\ln\big(\frac{2 \mu^\nu_{A^*(\nu)}}{ \mu^\nu_{A^*(\nu)} + \mu^\nu_{A'(\nu)}} \big)
\nonumber\\ 
&+ \frac{ \mu^\nu_{A'(\nu)}}{2\sum_{i=1}^I\mu^\nu_{i}}\ln\big(\frac{2  \mu^\nu_{A'(\nu)}}{ \mu^\nu_{A^*(\nu)} +  \mu^\nu_{A'(\nu)}} \big) \nonumber\\
&+\frac{(\mu^\nu_{A^*(\nu)} - \mu^\nu_{A'(\nu)})^2}{8\sigma^2\sum_{a=1}^I\mu^\nu_{a} } - \frac{\mu^\nu_{A^*(\nu)}+\mu^\nu_{A'(\nu)}}{2\sum_{a=1}^I\mu^\nu_{a} }
\Big)^{-1}.
\end{align}
\end{lemma}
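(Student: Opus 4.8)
The plan is to prove the equivalent inequality $c^*(\nu)^{-1}\ge c^*_{\mathrm{u}}(\nu)^{-1}$ by lower-bounding the outer supremum in the characterization \eqref{T22} of $c^*(\nu)^{-1}$ via its value at one convenient, manifestly feasible weight vector. Abbreviate $a^*:=A^*(\nu)$, $a':=A'(\nu)$ and $S:=\sum_{j=1}^I\mu^\nu_j>0$, and take $w_k=\mu^\nu_k/S\in\mathcal{W}_I$. Substituting this choice into \eqref{T22}, the denominator $w_{a^*}\mu^\nu_i+w_i\mu^\nu_{a^*}$ that recurs throughout collapses to $2\mu^\nu_{a^*}\mu^\nu_i/S$; correspondingly the worst alternative prescribed by \eqref{TP33} degenerates to the arithmetic mean $\mu^{\mathrm u}=(\mu^\nu_{a^*}+\mu^\nu_i)/2$, and after the factors of $S$ cancel the term in braces becomes a function of $m=\mu^\nu_i$ alone,
\begin{align*}
g(m)&=\frac{\mu^\nu_{a^*}}{2S}\ln\frac{2\mu^\nu_{a^*}}{\mu^\nu_{a^*}+m}+\frac{m}{2S}\ln\frac{2m}{\mu^\nu_{a^*}+m}\\
&\quad+\frac{(\mu^\nu_{a^*}-m)^2}{8\sigma^2S}-\frac{\mu^\nu_{a^*}+m}{2S}.
\end{align*}
Comparing term by term with the displayed formula for $c^*_{\mathrm{u}}(\nu)^{-1}$ shows that $g(\mu^\nu_{a'})$ equals that expression exactly; hence it remains only to check that $\inf_{i\neq a^*}g(\mu^\nu_i)=g(\mu^\nu_{a'})$.

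Since $\mu^\nu_{a'}=\max_{i\neq a^*}\mu^\nu_i$, this reduces to showing $g$ is non-increasing on $(0,\mu^\nu_{a^*}]$. I would do this by differentiating $G(m):=2S\,g(m)$, which after cancellation gives the one-line identity
\[
G'(m)=\ln\frac{2m}{\mu^\nu_{a^*}+m}-\frac{\mu^\nu_{a^*}-m}{2\sigma^2}-1 ,
\]
and each of the three summands is $\le 0$ on $(0,\mu^\nu_{a^*}]$: the first because $2m\le\mu^\nu_{a^*}+m$, the second because $m\le\mu^\nu_{a^*}$, the third trivially. Thus $G'<0$, $g$ is strictly decreasing, its infimum over the non-optimal arms is attained at the one with the largest mean, namely $a'$, and therefore $c^*(\nu)^{-1}\ge g(\mu^\nu_{a'})=c^*_{\mathrm{u}}(\nu)^{-1}$, i.e. $c^*(\nu)\le c^*_{\mathrm{u}}(\nu)$, as claimed. (Note that only the inequality direction is needed, so it is irrelevant whether $w_k=\mu^\nu_k/S$ is the optimal allocation.)

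The whole argument is elementary, and I expect the only delicate points to be bookkeeping: (i) verifying that $w_k=\mu^\nu_k/S$ really turns \eqref{TP33} into the arithmetic mean and that every factor of $S$ cancels, so that $g(\mu^\nu_{a'})$ equals the stated $c^*_{\mathrm{u}}(\nu)^{-1}$ identically and not merely up to a constant; and (ii) the sign bookkeeping in $G'$. Unlike the monotonicity of $f_{Y,i}$ established in Appendix~\ref{PL6}, this last step does not invoke the large-noise condition \eqref{sigma}. Two trivial edge cases deserve a sentence each: ties among the non-optimal means are harmless since $\mu^\nu_{a'}$ is still the well-defined second-largest mean, and if some non-optimal mean vanished the chosen $\mathbf{w}$ would still lie in the closed simplex (or one passes to the limit), so nothing changes.
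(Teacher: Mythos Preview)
Your proposal is correct and follows essentially the same route as the paper: plug the weight vector $w_k=\mu^\nu_k/S$ into \eqref{T22} and show by a monotonicity argument that the inner infimum is attained at the second-largest mean $\mu^\nu_{A'(\nu)}$. Your derivative computation is in fact cleaner than the paper's, since you (correctly) treat $S$ as a fixed constant when comparing the values at different arms, whereas the paper differentiates as though $S$ also varied with $\mu^\nu_i$ and then simply asserts the sign; your observation that the large-noise condition \eqref{sigma} is not needed for this step is also a nice sharpening.
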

\begin{proof}
Please refer to Appendix \ref{plemma3}.
\end{proof}
Let $\mathcal{V}$ represent a set of $I$-armed heteroscedastic Gaussian bandit instances.
Note that
\begin{align}
c^*(\nu)^{-1}\! =\!  \sup_{\mathbf{w}\in \mathcal{W}_{I}}\inf_{\substack{ \mathrm{u} \in \mathcal{V}:\\ i \neq A^*(\nu), \mu^{\mathrm{u}}_{i}> \mu^{\mathrm{u}}_{A^*(\nu)}}}\!\! \Big(\!\sum_{a\in \{A^*(\nu),i\}}
w_{i}
D_{\mathrm{HG}}(\mu^\nu_{a},\mu^{\mathrm{u}}_{a})\Big),
\end{align}
which is related to the choice of $\mathrm{u}$ and $\mathbf{w}$.
According to  Lemma~\ref{Theom3}, it holds that
\begin{align}
\mathbb{E}[\tau_{{\delta}}] &\le  t_{\epsilon} + \alpha c^*(\nu)\Big[\ln \big(\frac{ \alpha e c^*(\nu)}{{\delta}}\big)+\ln \ln \big(\frac{\alpha c^*(\nu)}{{\delta} }\big)\Big] \nonumber\\ &\qquad+\sum_{T=1}^{\infty} \Gamma_2 T \exp (-\Gamma_3 T^{1 / 8})\\
&\le t_{\epsilon} + \alpha c^*_{\mathrm{u}}(\nu)\Big[\ln \big(\frac{ \alpha e c^*_{\mathrm{u}}(\nu)}{{\delta}}\big)+\ln \ln \big(\frac{\alpha c^*_{\mathrm{u}}(\nu)}{{\delta} }\big)\Big]\nonumber\\* 
&\qquad+\sum_{T=1}^{\infty} \Gamma_2 T \exp (-\Gamma_3 T^{1 / 8}).
\end{align}
Since $\tau = I+\tau_{{\delta}}$ according to Line \ref{tau1} of Algorithm \ref{alg2}, it holds that
\begin{align}
\mathbb{E}[\tau]& = I+ \mathbb{E}[\tau_{\delta}] \nonumber\\
& \le
 I+ t_{\epsilon} + \alpha c^*_{\mathrm{u}}(\nu)\Big[\ln \big(\frac{ \alpha e c^*_{\mathrm{u}}(\nu)}{{\delta}}\big)+\ln \ln \big(\frac{\alpha c^*_{\mathrm{u}}(\nu)}{{\delta} }\big)\Big]\nonumber\\ 
 &\qquad+\sum_{T=1}^{\infty} \Gamma_2 T \exp (-\Gamma_3 T^{1 / 8}).
  \end{align}
The proof is concluded by letting $A=I+T_\epsilon$.
}
\end{proof}

\vspace{-0.4cm}
\subsection{Proof of Lemma \ref{lemmaS}}\label{plemma2}

\begin{proof}
 Fix $t_0'$ such that when $t'_0\ge t_0$, it holds that
 \begin{equation}
 \forall\, t \geq t'_{0}, \quad \sqrt{t} \leq 2 t \epsilon \text { and } 1 / t \leq \epsilon.
 \end{equation}
 To satisfy the requirement $\sqrt{t} \leq 2 t \epsilon$, we fix  $t\ge \max\{t_0, \frac{1}{4\epsilon^2}\}$.
Thus, we let $t'_0:= \max\{t_0, \frac{1}{\epsilon}, \frac{1}{4\epsilon^2}  \}$.
 According to \cite[Lemma~17]{Garivier2016} and its proof in  \cite[Appendix~B.2]{Garivier2016},
 by choosing $\hat{\lambda}(i) = \hat{w}^*_{i}(t)$ and
 $\bm{\lambda}^* = \mathbf{w}^*$, we have
 \begin{align}
 \sup_{i}\Big|\frac{T_{i}(t)}{t}-w_{i}^{*}(\nu)\Big| &\leq(I-1) \max \Big\{2 \epsilon+\frac{1}{t}, \frac{t'_{0}}{t}\Big\}\nonumber\\ &\leq(I-1) \max \Big\{3 \epsilon, \frac{t'_{0}}{t}\Big\}.
 \end{align}
As a result, when $t \ge \frac{t_0}{3\epsilon}$, it holds that $\sup_{i}\big|\frac{T_{i}(t)}{t}-w_{i}^{*}(\nu)\big| \le 3(I-1)\epsilon$. Let 
$
t_{\epsilon } = \max\big\{ \big\lceil\frac{t_0}{3\epsilon}\big\rceil, \big\lceil\frac{1}{3\epsilon^2}\big\rceil, \big\lceil \frac{1}{12\epsilon^3} \big\rceil  \big\},
$
which concludes the proof.
 \end{proof}

\subsection{Proof of Lemma \ref{lemma5}}\label{plemma1}
\begin{proof}
First, we have
\begin{align}
\mathbb{P}(\mathcal{E}_T^c) &\le \sum_{t = h(T)}^T\mathbb{P}(\hat{\nu}_t\notin \mathcal{I}_{\epsilon} ) \nonumber\\
&=
\sum_{t = h(T)}^T\sum_{i = 1}^I[\mathbb{P}(\hat{\mu}^\nu_{i}(t)\le \mu^\nu_{i}-\xi) +
\mathbb{P}(\hat{\mu}^\nu_{i}(t)\ge \mu^\nu_{i}+\xi)].
\end{align}
According to Lemma \ref{lemmaS}, for each arm, we have
{
\begin{equation}\label{lp51}
T_{i}(t)>  \sqrt{t} - I, \qquad \forall\, t\ge h(T).
\end{equation}
}
Then,  we have
\begin{align}
\mathbb{P}\left(\hat{\mu}^\nu_{i}(t) \leq \mu^\nu_{i}-\xi\right) &\overset{(a)}{=}\mathbb{P}(\hat{\mu}^\nu_{i}(t) \leq \mu^\nu_{i}-\xi, T_{i}(t) \geq \sqrt{t}-I)\nonumber\\ &\overset{(b)}{\leq} \sum_{m=\sqrt{t}-I}^{t} \mathbb{P}\left(\hat{\mu}_{i}^\nu(m) \leq \mu^\nu_{i}-\xi\right) \nonumber\\
& \overset{(c)}{\leq} \sum_{m=\sqrt{t}-I}^{t} \exp \left(-m D_{\mathrm{HG}}\left(\mu^\nu_{i}-\xi, \mu^\nu_{i}\right)\right) \nonumber\\ 
&\leq \frac{e^{-(\sqrt{t}-I) D_{\mathrm{HG}}\left(\mu^\nu_{i}-\xi, \mu^\nu_{i}\right)}}{1-e^{-D_{\mathrm{HG}}\left(\mu^\nu_{i}-\xi, \mu^\nu_{i}\right)}} .
\end{align}
where $(a)$ is obtained due to  \eqref{lp51}, $(b)$ and $(c)$  holds due to the union bound and the Chernoff inequality. Due to the same reason, it holds that
\begin{equation}
\mathbb{P}\left(\hat{\mu}^\nu_{i}(t) \ge  \mu^\nu_{i}+\xi\right)  \le \frac{e^{-(\sqrt{t}-I) D_{\mathrm{HG}}\left(\mu^\nu_{i}+\xi, \mu^\nu_{i}\right)}}{1-e^{-D_{\mathrm{HG}}\left(\mu^\nu_{i}+\xi, \mu^\nu_{i}\right)}}.
\end{equation}
Then, define
\begin{equation}
\begin{aligned}
 \Gamma_2 &:= \sum_{i=1}^I\Big(\frac{e^{ID_{\mathrm{HG}}(\mu^\nu_{i}-\xi,\mu^\nu_{i})}}{1-e^{-D_{\mathrm{HG}}(\mu^\nu_{i}-\xi,\mu^\nu_{i})}} +  \frac{e^{ID_{\mathrm{HG}}(\mu^\nu_{i}+\xi,\mu^\nu_{i})}}{1-e^{-D_{\mathrm{HG}}(\mu^\nu_{i}+\xi,\mu^\nu_{i})}}  \Big)\\
\Gamma_3 &:= \min\limits_{i\in [I]}\big[\min\big\{ D_{\mathrm{HG}}(\mu^\nu_{i}-\xi, \mu^\nu_{i}), D_{\mathrm{HG}}(\mu^\nu_{i}+\xi, \mu^\nu_{i})\big\}\big],
\end{aligned}
\end{equation}
 we obtain
\begin{align}
\mathbb{P}(\mathcal{E}_T^c) &\le \sum_{t = h(T)}^T\sum_{i = 1}^I\big[\mathbb{P}(\hat{\mu}^\nu_{i}(t)\le \mu^\nu_{i}-\xi) +
\mathbb{P}(\hat{\mu}^\nu_{i}(t)\le \mu^\nu_{i}+\xi)\big]\nonumber\\
&\le \sum_{t=h(T)}^{T} \Gamma_2\exp (-\sqrt{t} \Gamma_3) \leq \Gamma_2 T \exp (-\sqrt{h(T)} \Gamma_3) \nonumber\\ 
&=\Gamma_2 T \exp \big(-\Gamma_3 T^{1 / 8}\big),
\end{align}
which concludes the proof.
\end{proof}

\subsection{Proof of Lemma \ref{Theom3}}\label{plemma3}
\begin{proof}
By setting $\hat{\mathbf{w}}\in \mathcal{W}_I$ as $\hat{w}_i = \frac{\mu^\nu_{i}}{\sum_{i=1}^I\mu^\nu_{i}}$, according to~\eqref{T22}, $c^*(\nu)^{-1}$ satisfies
\begin{align}
\!\!\!\! c^*(\nu)^{-1} \ge  \inf_{i \neq A^*(\nu)}& \Big\{
\frac{\mu^\nu_{A^*(\nu)}}{2\sum_{a=1}^I\mu^\nu_{a}}\ln\big(\frac{2 \mu^\nu_{A^*(\nu)}}{ \mu^\nu_{A^*(\nu)} + \mu^\nu_{i}} \big)
\nonumber\\
&+ \frac{\mu^\nu_{i}}{2\sum_{a=1}^I\mu^\nu_{a}}\ln\big(\frac{2 \mu^\nu_{i}}{ \mu^\nu_{A^*(\nu)} + \mu^\nu_{i}} \big) \nonumber\\
&+
\frac{(\mu^\nu_{A^*(\nu)} - \mu^\nu_{i})^2}{8\sigma^2\sum_{a=1}^I\mu^\nu_{a} } - \frac{\mu^\nu_{A^*(\nu)}+\mu^\nu_i}{2\sum_{a=1}^I\mu^\nu_{a} } \Big\}.\label{eqc-1}
\end{align}
Define 
\begin{align}
\mathcal{F}_\mu(\mu^\nu_{i})& \triangleq  \frac{\mu^\nu_{A^*(\nu)}}{2\sum_{a=1}^I\mu^\nu_{a}}\ln\big(\frac{2 \mu^\nu_{A^*(\nu)}}{ \mu^\nu_{A^*(\nu)} + \mu^\nu_{i}} \big)
 \nonumber\\ 
 &\quad + \frac{\mu^\nu_{i}}{2\sum_{a=1}^I\mu^\nu_{a}}\ln\big(\frac{2 \mu^\nu_{i}}{ \mu^\nu_{A^*(\nu)} + \mu^\nu_{i}} \big) \nonumber\\
&\quad + \frac{(\mu^\nu_{A^*(\nu)} - \mu^\nu_{i})^2}{8\sigma^2\sum_{a=1}^I\mu_{a} } - \frac{(\mu^\nu_{A^*(\nu)}+\mu^\nu_i)}{2\sum_{a=1}^I\mu^\nu_{a} } .
\end{align}
By taking the derivative of $\mu_{A^*(\nu)}$, we have
\begin{align}
 \frac{\mathrm{d}\mathcal{F}_\mu(\mu^\nu_{i})}{\mathrm{d} \mu^\nu_{i}}& =  -\frac{\mu^\nu_{A^*(\nu)}}{2(\sum_{a=1}^I\mu^\nu_{a})^2 }\ln\Big(\frac{2 \mu^\nu_{A^*(\nu)}}{\mu^\nu_{A^*(\nu)} + \mu^\nu_{i}}  \Big)  \nonumber\\
&\quad + \frac{\sum_{i\neq g} \mu^\nu_{i }}{2(\sum_{a=1}^I \mu^\nu_{a})^2}\ln\Big(\frac{2\mu^\nu_{i}}{\mu^\nu_{i}+ \mu^\nu_{A^*(\nu)}}  \Big)  \nonumber\\
 &\quad + \frac{(\mu^\nu_{i}-2)\sum_{a=1}^I \mu^\nu_{a}-(\mu^\nu_{A^*(\nu)} - \mu^\nu_{i})^2}
 {8\sigma^2(\sum_{a=1}^I \mu^\nu_{a})^2} \nonumber\\
 &\quad -
 \frac{\sum_{a=1}^I \mu^\nu_{a} - \mu^\nu_{A^*(\nu)} - \mu^\nu_i}{2(\sum_{a=1}^I \mu^\nu_{a})^2}.
 \end{align}
Since $\frac{\mathrm{d}\mathcal{F}_\mu(\mu^\nu_{i})}{\mathrm{d} \mu^\nu_{i}}< 0$   holds, $\mathcal{F}_\mu(\mu^\nu_{i})$ is monotonically decreasing.
Therefore, we   conclude that when a suboptimal arm $A'(\nu) = \argmin_{i\neq A^*(\nu)} \mu^\nu_{i}$ is selected,
{\eqref{eqc-1} can be rewritten as
\begin{align} 
c^*(\nu)^{-1} &\ge \frac{\mu^\nu_{A^*(\nu)}}{2\sum_{a=1}^I\mu^\nu_{a}}\ln\big(\frac{2 \mu^\nu_{A^*(\nu)}}{ \mu^\nu_{A^*(\nu)} + \mu^\nu_{A'(\nu)}} \big) \nonumber\\
&\quad + \frac{ \mu^\nu_{A'(\nu)}}{2\sum_{a=1}^I\mu^\nu_{a}}\ln\big(\frac{2  \mu^\nu_{A'(\nu)}}{ \mu^\nu_{A^*(\nu)} +  \mu^\nu_{A'(\nu)}} \big)\nonumber\\
&\quad 
+
\frac{(\mu^\nu_{A^*(\nu)} - \mu^\nu_{A'(\nu)})^2}{8\sigma^2\sum_{a=1}^I\mu^\nu_{a} } - \frac{\mu^\nu_{A^*(\nu)}+\mu^\nu_{A'(\nu)}}{2\sum_{a=1}^I\mu^\nu_{a} },
\end{align}  as such
we have
\begin{align}
c^*(\nu) & \le \Big( \frac{\mu^\nu_{A^*(\nu)}}{2\sum_{a=1}^I\mu^\nu_{a}}\ln\big(\frac{2 \mu^\nu_{A^*(\nu)}}{ \mu^\nu_{A^*(\nu)} + \mu^\nu_{A'(\nu)}} \big)\nonumber\\
&
\quad + \frac{ \mu^\nu_{A'(\nu)}}{2\sum_{a=1}^I\mu^\nu_{a}}\ln\big(\frac{2  \mu^\nu_{A'(\nu)}}{ \mu^\nu_{A^*(\nu)} +  \mu^\nu_{A'(\nu)}} \big) \nonumber\\
&\quad +
\frac{(\mu^\nu_{A^*(\nu)} - \mu^\nu_{A'(\nu)})^2}{8\sigma^2\sum_{a=1}^I\mu^\nu_{a} } - \frac{\mu^\nu_{A^*(\nu)}+\mu^\nu_{A'(\nu)}}{2\sum_{a=1}^I\mu^\nu_{a} }
\Big)^{-1},
\end{align}
}
which concludes the proof.
\end{proof}

}

\end{appendices}

\vspace{-0.4cm}
\bibliography{BeamAlig}
\bibliographystyle{IEEEtran}

\begin{IEEEbiography}[{\includegraphics[width=1in,height=1.25in,clip,keepaspectratio]{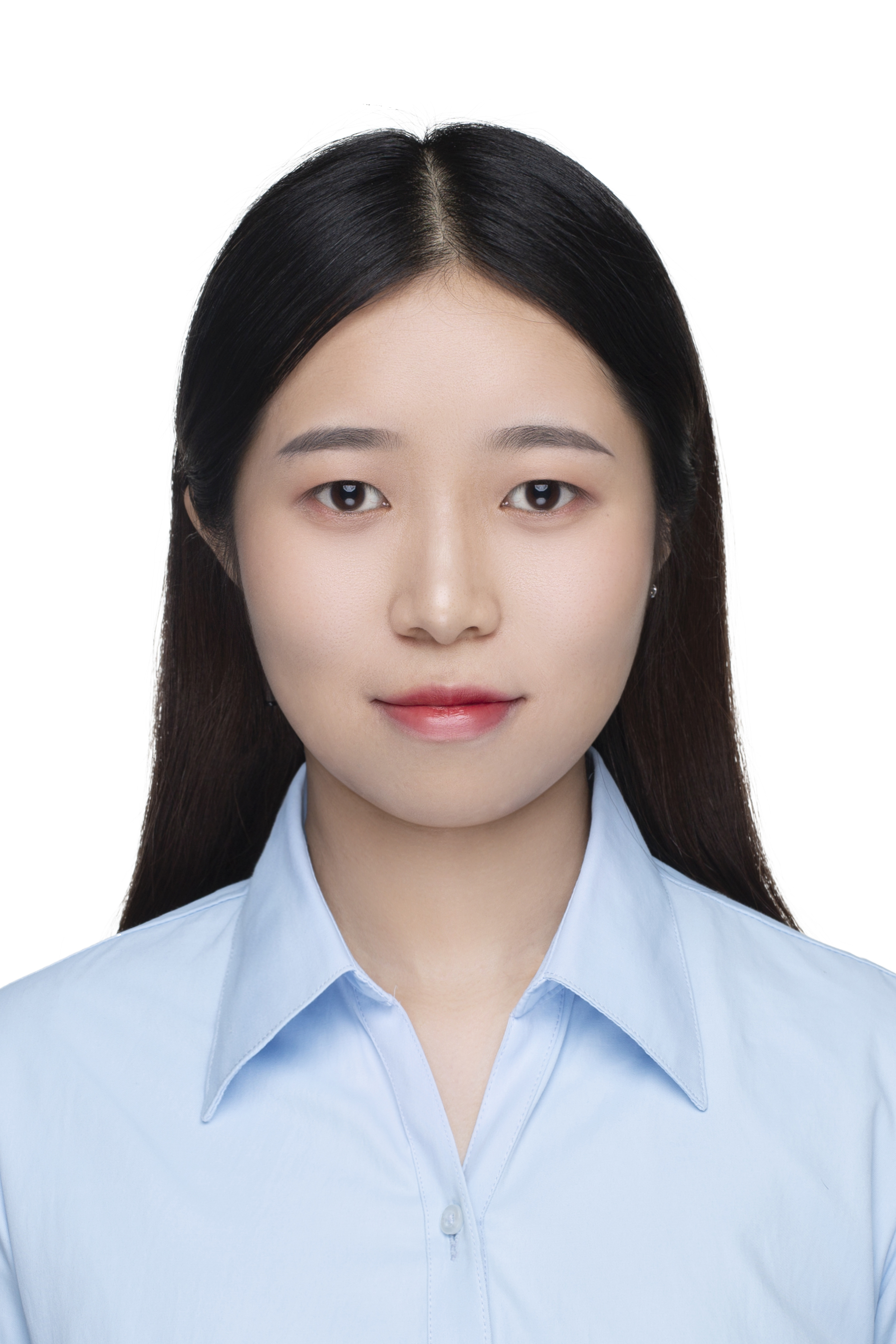}}]
 {Yi Wei}  received the B.Eng.\ and the Ph.D.\ degrees in Information and Communication Engineering from Zhejiang University in 2017 and 2022, respectively.
 From May 2021 to Apr 2022, she was a Visiting Scholar at the
Department of Electrical and Computer Engineering, National University of Singapore, Singapore.
 She is currently working as a Research Engineer at China Aerospace Science and Technology Corporation.
Her research interests include signal processing for wireless communications,  algorithm design for advanced MIMO and  deep learning for wireless communications.
\end{IEEEbiography}
 
\begin{IEEEbiography}[{\includegraphics[width=1in,height=1.25in,clip,keepaspectratio]{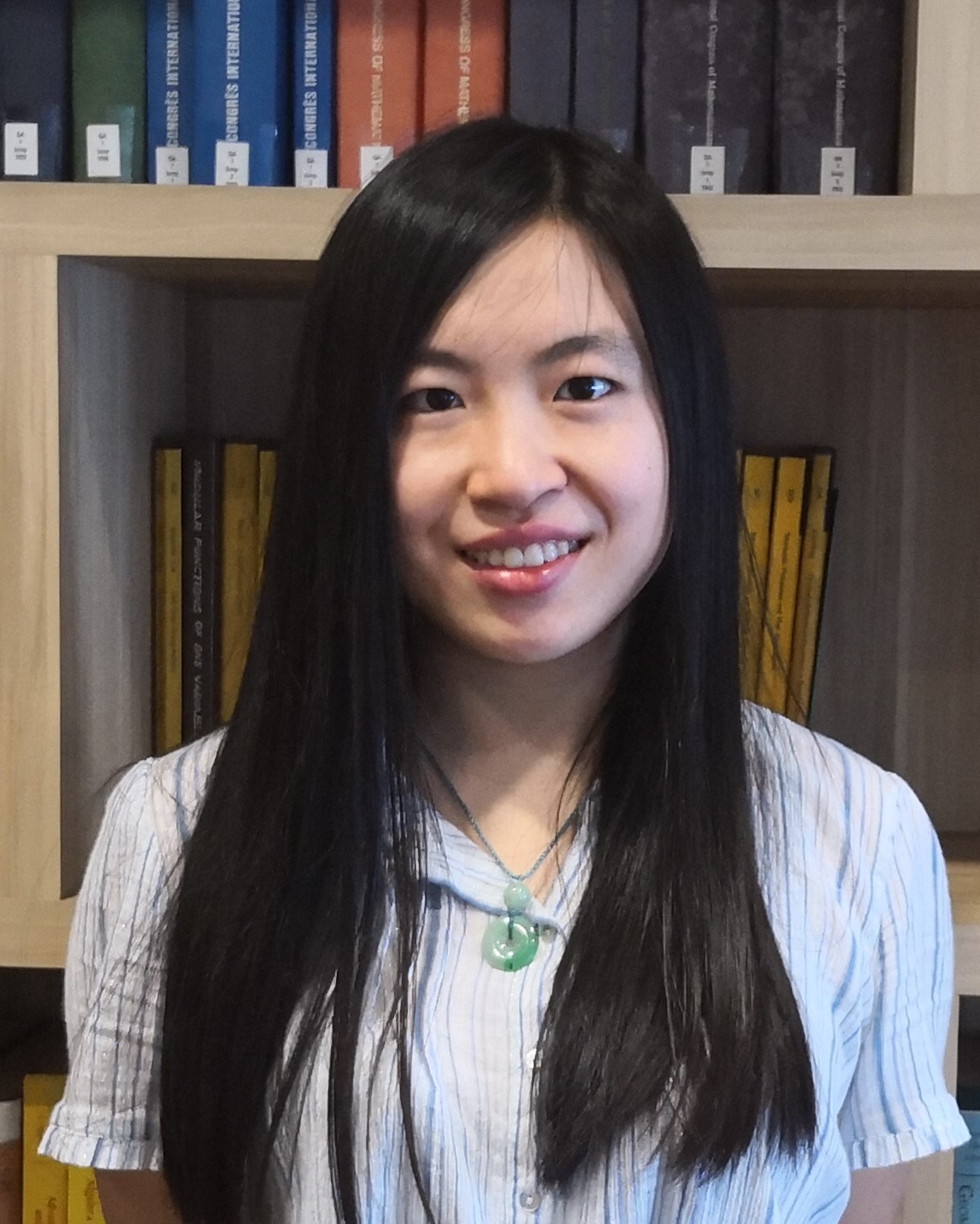}}]{Zixin Zhong} was born in China in 1995. She is currently a postdoctoral fellow at the Department of Computing Science of University of Alberta (UofA). She is supervised by Prof.\ Csaba Szepesv{\'a}ri. Dr.\ Zhong received her PhD degree from the Department of Mathematics of National University of Singapore (NUS) in October 2021. Dr.\ Zhong was privileged to be supervised by Prof.\ Vincent Y.\ F.\ Tan and Prof.\ Wang Chi Cheung during her PhD study, and she worked with them as a research fellow between June 2021 and July 2022.

Dr.\ Zhong's research interests are in reinforcement learning, online machine learning and, in particular,  multi-armed bandits. Her work has been presented at top machine learning (ML) conferences including ICML and AISTATS, and also in top ML journals such as the {\em Journal of Machine Learning Research} (JMLR). She also serves as a reviewer for several conferences and journals including   AISTATS, ICLR, ICML, NeurIPS, TIT, TSP, and TMLR. She was recognized as a top reviewer of NeurIPS 2022.
\end{IEEEbiography}

\begin{IEEEbiography}[{\includegraphics[width=1in,height=1.25in,clip,keepaspectratio]{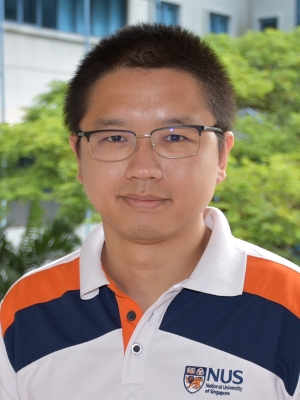}}]{Vincent Y.\ F.\ Tan} (S'07-M'11-SM'15)  was born in Singapore in 1981. He is currently an Associate Professor in the Department of Mathematics and the  Department of Electrical and Computer Engineering at the National University of Singapore (NUS). He received the B.A.\ and M.Eng.\ degrees in Electrical and Information Sciences from Cambridge University in 2005 and the Ph.D.\ degree in Electrical Engineering and Computer Science (EECS) from the Massachusetts Institute of Technology (MIT)  in 2011.  His research interests include information theory, machine learning, and statistical signal processing.

Dr.\ Tan received the MIT EECS Jin-Au Kong outstanding doctoral thesis prize in 2011, the NUS Young Investigator Award in 2014, the Singapore National Research Foundation (NRF) Fellowship (Class of 2018) and the NUS Young Researcher Award in 2019. He was also an IEEE Information Theory Society Distinguished Lecturer for 2018/9. He is currently serving as a Senior Area Editor of the {\em IEEE Transactions on Signal Processing} and an Associate Editor of Machine Learning for the {\em IEEE Transactions on Information Theory}. He is a member of the IEEE Information Theory Society Board of Governors.
\end{IEEEbiography}

\vfill

\end{document}